\def\plist@algorithm{Alg.\space}
\newcommand{\pos}{\mathrm{Pos}\xspace}
\newcommand{\me}{\mathbb{E}}
\newcommand{\one}{\mathds{1}\xspace}
\newcommand{\pei}{$\mathrm{I}^3$\xspace}
\newcommand{\peiZero}{Crossfit-$\mathrm{I}^3$\xspace}
\newcommand{\peiNeg}{MaY-$\mathrm{I}^3$\xspace}
\newcommand{\peiZeroKnowP}{Crossfit-$\mathrm{I}^3_{\pi}$\xspace}
\newcommand{\peiNegKnowP}{MaY-$\mathrm{I}^3_{\pi}$\xspace}
\newcommand{\peiZeroUnknowP}{Crossfit-$\mathrm{I}^3_{\widehat \pi}$\xspace}
\newcommand{\peiNegUnknowP}{MaY-$\mathrm{I}^3_{\widehat \pi}$\xspace}
\newcommand\independent{\protect\mathpalette{\protect\independenT}{\perp}}
\def\independenT#1#2{\mathrel{\rlap{$#1#2$}\mkern2mu{#1#2}}}
\DeclareMathOperator*{\argmin}{argmin}
\newtheorem{theorem}{Theorem}
\newtheorem{lemma}{Lemma}
\newtheorem{corollary}{Corollary}
\newtheorem{remark}{Remark}
\begin{document}
 
\begin{center}

{\bf{\LARGE{Interactive identification of individuals\\ 
  with positive treatment effect\\ 
  \vspace{4pt}
  while controlling false discoveries}}}

\vspace*{.2in}

{\large{
\begin{tabular}{cccc}
Boyan Duan$^1$ & Larry Wasserman$^2$ & Aaditya Ramdas$^3$\\
\end{tabular}\\
$^1$\texttt{boyand@alumni.cmu.edu} 
$^2$\texttt{larry@stat.cmu.edu} 
$^3$\texttt{aramdas@stat.cmu.edu}
}}

\vspace*{.2in}

\begin{tabular}{c}
$^{1, 2, 3}$ Department of Statistics and Data Science \footnote{BD now works at Google LLC.},\\
Carnegie Mellon University,  Pittsburgh, PA  15213 
\end{tabular}

\vspace*{.2in}

\today

\vspace*{.2in}

\begin{abstract}
{Out of the participants in a randomized experiment with anticipated heterogeneous treatment effects, is it possible to identify which subjects have a positive treatment effect?
While subgroup analysis has received attention, claims about individual participants are much more challenging.
We frame the problem in terms of multiple hypothesis testing: each individual has a null hypothesis (stating that the potential outcomes are equal, for example) and we aim to identify those for whom the null is false (the treatment potential outcome stochastically dominates the control one, for example). 
We develop a novel algorithm that identifies such a subset, with nonasymptotic control of the false discovery rate (FDR). Our algorithm allows for interaction --- a human data scientist (or a computer program) may adaptively guide the algorithm in a data-dependent manner to gain power. We show how to extend the methods to observational settings and achieve a type of doubly-robust FDR control. We also propose several extensions: (a) relaxing the null to nonpositive effects, (b) moving from unpaired to paired samples, and (c) subgroup identification. We demonstrate via numerical experiments and theoretical analysis that the proposed method has valid FDR control in finite samples and reasonably high identification power.}
\end{abstract}
\end{center}

\section{Introduction}
Subgroup identification --- or identifying subgroups of the population that have some positive response to a treatment --- has been a major topic in the clinical trial community and the causal literature (see \citet{lipkovich2017tutorial,powers2018some,loh2019subgroup} and references therein). Typically, the treatment effect in the investigated population varies with the subject's gender, age, and other covariates. Identifying subjects with positive effects can help guide follow-up research and provide medical guidance. However, most existing methods do not have an error control guarantee at the level of the individual --- it is possible that most subjects in the identified subgroup do not have positive effects. For example, an identified subgroup could be ``female subjects younger than 40'', which may typically mean that  the average treatment effect is positive in this subgroup, but it is possible that only 10\% of them with age between 18 and 20 may truly have a positive treatment effect. 

This paper considers the identification of positive effects at an individual level: among the participants in a trial, which ones have the treatment potential outcome larger than control potential outcome (we call this the subject's treatment effect)? This is a hard question to answer in general, because we only observe one or the other potential outcome. But we will give a nontrivial answer that may be powerless in the worst case, but powerful in cases where the covariates are informative about the treatment effects, and never making too many false claims. 

For each subject, the available data we have includes its treatment assignment, covariates, and observed outcome. We will identify a set of subjects as having positive effects, with an error control on false identification, and without any modeling assumption on how the (treated/control) outcomes vary with covariates. Below is an example to visualize the problem, and prepare for our solution.


\subsection{An illustrative example} \label{sec:intro_example}
Consider an experiment involving 500 subjects, each is assigned to treatment or control independently with probability $1/2$. Suppose each subject is associated with two simple covariates, which are independently and uniformly distributed in $[0,1]$. All the data is shown in Figure~\ref{fig:illustrate_obs}, where the observed outcomes are in Figure~\ref{fig:illustrate_obs_outcome} with darker color indicating a higher outcome. Readers may have some intuitive guess on our interested question: ``which subjects could have positive treatment effect if treated?'', such as subjects on the top right corner. Yet we note that the underlying ground truth can be counter-intuitive while mostly correctly captured by our proposed algorithm (results in Section~\ref{sec:intro_I_cube}). Before showing the results, we formalize our question of interest in the next section.

\begin{figure}[h!]
\centering
\hspace{1cm}
    \begin{subfigure}[t]{0.3\textwidth}
        \centering
        \includegraphics[width=1\linewidth]{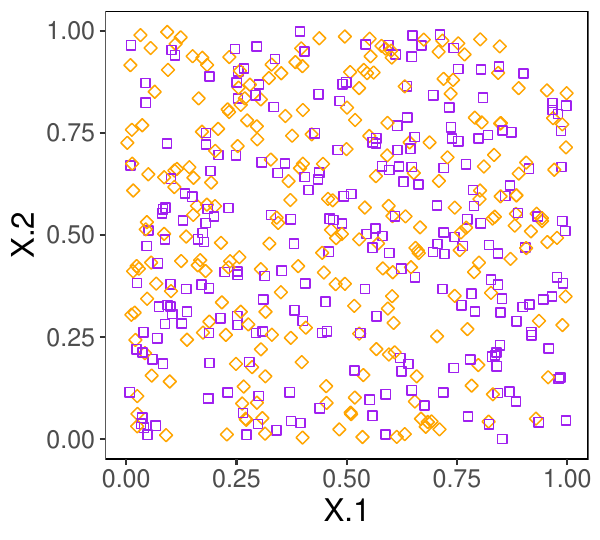}
        \caption{Subjects in treated group and control group separated by two shapes and colors.}
    \end{subfigure}
    \hfill
    \begin{subfigure}[t]{0.3\textwidth}
        \centering
        \includegraphics[width=1\linewidth]{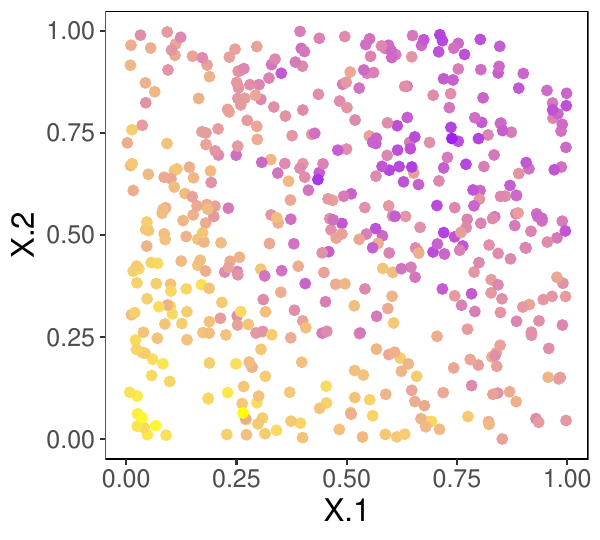}
        \caption{Higher outcomes are indicated by darker color.}
        \label{fig:illustrate_obs_outcome}
    \end{subfigure}
    \hspace{0.1cm}
    \begin{subfigure}[t]{0.08\textwidth}
        \centering
        \includegraphics[width=1\linewidth]{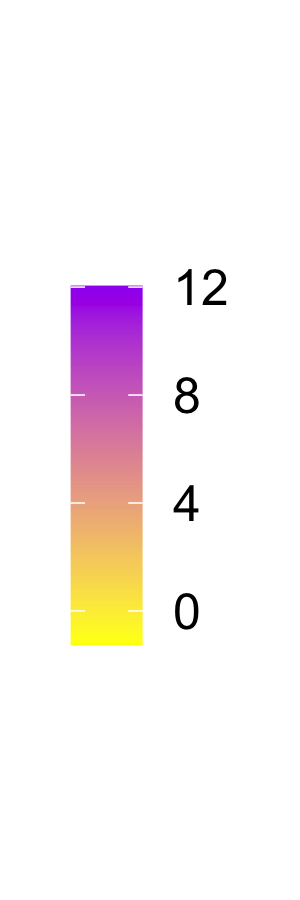}
    \end{subfigure}
\hspace{1cm}
    \caption{An illustrative example with 500 subjects, each has two recorded covariates.  
    Every point is a subject.
    The treatment assignments are in the left plot (squares are treated, diamonds are not), and the observed outcomes are in the right plot. We hope to answer the question: which individuals have a positive treatment effect?}
    \label{fig:illustrate_obs}
\end{figure}

\subsection{Problem setup}
Suppose we have $n$ subjects in the dataset. Each subject $i$ has potential control outcome $Y_i^C$, potential treated outcome $Y_i^T$, and the treatment indicator~$A_{i}$ for $i \in [n] \equiv \{1,2,\dots,n\}$. Our results allow the potential outcomes to either be viewed as random variables or fixed. The treatment effect of subject $i$ is defined as $Y_i^T - Y_i^C$ and the observed outcome is $Y_i = Y_i^C(1 - A_i) + Y_i^T A_i$ under the standard causal assumption of consistency ($Y_i = Y_i^T$ when $A_i = 1$ and $Y_i = Y_i^C$ when $A_i = 0$). Person $i$'s covariate is denoted as $X_i$. 
We first focus on Bernoulli randomized experiments without interference:
\begin{enumerate}
    \item[(i)] conditional on covariates, treatment assignments are independent coin flips:
    \begin{align} \label{eq:randomize1}
        \mathbb{P}[(A_1, \ldots, A_n) = (a_1, \ldots, a_n) \mid X_1, \ldots, X_n] = \prod_{i=1}^n \mathbb{P}(A_i = a_i) = (1/2)^n,
    \end{align}
    for any $(a_1, \ldots, a_n) \in \{0,1\}^n$. The above setting is later extended to observational studies where the probabilities of receiving treatments can be heterogeneous and possibly unknown.
    
    \item[(ii)] conditional on covariates, the outcome of one subject $Y_{i}$ is independent of the assignment~$A_{j}$ of another subject, for any $i \neq j$:
    \begin{align} \label{eq:randomize2}
         Y_{i} \perp A_{j} \mid \{X_1,\dots,X_n\} \text{ for } i \neq j,
    \end{align}
    which is implied by~\eqref{eq:randomize1} when the potential outcomes are viewed as fixed values.
\end{enumerate}
We do not assume the observed data $(Y_i, A_i, X_i)$ are identically distributed. We consider heterogeneous effects in the sense that the distribution of $Y_i^T - Y_i^C$ varies,
and aim at identifying those individuals with a positive treatment effect. (If the covariates~$X_i$ are not informative about the heterogeneity in $Y_i^T - Y_i^C$, our identification power could be low, and this is to be expected.) We choose to formalize and frame the problem in terms of multiple hypothesis testing, by first defining the null hypothesis for subject $i$ as having zero treatment effect: 
\begin{align} \label{eq:hp_twoside}
    H_{0i}^\text{zero}: (Y_i^T \mid X_i) \overset{d}{=} (Y_i^C \mid X_i), 
\end{align}
or equivalently, $H_{0i}^\text{zero}: (Y_i \mid A_i = 1, X_i) \overset{d}{=} (Y_i \mid A_i = 0, X_i)$. \footnote{Alternatively, we can treat the potential outcomes and covariates as fixed, and frame the null hypothesis as
    \(H_{0i}^\text{zero}: Y_i^T = Y_i^C.\)
A last, hybrid, version (e.g., \citet{howard2019uniform}) is to treat the two potential outcomes as random with joint distribution $(Y_i^T,Y_i^C) \mid X_i \sim P_i$, and the null posits 
    \(H_{0i}^\text{zero}: Y_i^T = Y_i^C \text{ almost surely-}P_i,\)
meaning that $P_i$ is supported on $\{(x,y): x=y\}$. All our theoretical results work with any interpretation, but we stick to~\eqref{eq:hp_twoside} by default.}

An extension is introduced in Appendix~\ref{sec:nonpositive-effect}, where we relax the null as those with a nonpositive effect, defined by stochastic dominance $(Y_i^T \mid X_i) \preceq (Y_i^C \mid X_i)$, meaning that $\mathbb{P}(Y_i^T \leq y \mid X_i) \leq \mathbb{P}(Y_i^C \leq y \mid X_i)$, or simply $Y_i^T \leq Y_i^C$ if the potential outcomes are fixed. 

Our algorithms control the error of falsely identifying subjects whose null hypothesis is true (i.e., having zero effect), and aim at correctly identifying subjects with positive effects. Let $\succ$ denote stochastic dominance, as above. We say a subject has a \textit{positive effect} if
\begin{align} 
    (Y_i^T \mid X_i) \succ (Y_i^C \mid X_i).
\end{align}
When treating the potential outcomes and covariates as fixed, we simply write $Y_i^T > Y_i^C$.

The output of our proposed algorithms is a set of identified subjects, denoted as $\mathcal{R}$, with a guarantee that the expected proportion of falsely identified subjects is upper bounded. Specifically, denote the set of subjects that are true nulls as $\mathcal{H}_0 := \{i\in [n]: H_{0i}^\text{zero} \text{ is true}\}$. 
Then the number of false identifications is~$|\mathcal{R} \cap \mathcal{H}_0|$. The expected proportion of false identifications is  a standard error metric, the false discovery rate (FDR):
\begin{align} \label{eq:fdr_def}
    \mathrm{FDR} := \mathbb{E}\left[\frac{|\mathcal{R} \cap \mathcal{H}_0|}{\max\{|\mathcal{R}|,1\}}\right].
\end{align}
Given $\alpha \in (0,1)$, we propose algorithms that guarantee FDR $\leq \alpha$, and have reasonably high \emph{power}, which is defined as the expected proportion of correctly identified subjects:
\[
\text{power} := \mathbb{E}\left[\frac{|\mathcal{R} \cap \pos|}{\max\{|\pos|,1\}}\right],
\]
where $\pos := \left\{i: (Y_i^T \mid X_i) \succ (Y_i^C \mid X_i)\right\}$ or $\pos := \{ i: Y_i^T > Y_i^C\}$ are subjects with positive effects.

\subsection{Related problem: error control in subgroup identification} \label{sec:review}

We note that our problem setup is not exactly the same as most work in subgroup identification, such as \citet{foster2011subgroup,zhao2012estimating,imai2013estimating}. The identified subgroups are usually defined by functions of covariates, rather than a subset of the investigated subjects as in our paper. While defining the subgroup by a function of covariates makes it easy to generalize the finding in the investigated sample to a larger population, it does not seem straightforward to nonasymptotically control the error of false identifications using the former definition, which is a major distinction between previous studies and our work. Most existing work does not have an error control guarantee (see an overview in \citet{lipkovich2017tutorial}, Table XV), except a few discussing error control on the level of subgroups as opposed to the level of individuals in our paper. The difference between FDR control at a subgroup level and at an individual level is detailed below.

\subsubsection{Subgroup FDR control}
\citet{karmakar2018false,gu2018oracle,xie2018false} discuss FDR control at a subgroup level, where the latter two have little discussion on incorporating continuous covariates and require parametric assumptions on the outcomes. Thus, we follow the setup in \citet{karmakar2018false} to compare the FDR control at a subgroup level (in their paper) and individual level (in our paper). Let the subgroups be non-overlapping sets $\{\mathcal{G}_1, \ldots, \mathcal{G}_G\}$. The null hypothesis for a subgroup $\mathcal{G}_g$ is defined as:
\begin{align*} 
    \mathcal{H}_{0g}: H_{0i}^\text{zero} \text{ is true for all } i \in \mathcal{G}_g,
\end{align*}
or equivalently, $\mathcal{H}_{0g}: \mathcal{G}_g \subseteq \mathcal{H}_0$ (recall $\mathcal{H}_0$ is the set of subjects with zero effect). Let $D_g$ be the $0/1$-valued indicator function for whether $\mathcal{H}_{0g}$ is identified or not. The FDR at a subgroup level is defined as the expected proportion of falsely identified subgroups:
\begin{align} \label{eq:fdr_group}
    \mathrm{FDR}^\text{subgroup} := \mathbb{E}\left[ \frac{|\{g \in [G]: \mathcal{G}_g \subseteq \mathcal{H}_0, D_g = 1|}{\max\{|\{g \in [G]: D_g = 1|,1\}}\right],
\end{align}
which collapses to the FDR at an individual level as defined in~\eqref{eq:fdr_def} when each subgroup has exactly one subject. Although our interactive procedure is designed for FDR control at an individual level, we propose extensions to FDR control at a subgroup level in Section~\ref{apd:subgroup}.
As a brief summary, \citet{karmakar2018false} propose to control $\mathrm{FDR}^\text{subgroup}$ by constructing a $p$-value for each subgroup and apply the classical BH method \citep{benjamini1995controlling}. However, it is not trivially applicable to control FDR at an individual level, because their $p$-values would only take value $1/2$ or $1$ when each subgroup has exactly one subject, leading to zero identification power following either the classic BH procedure or the more recent AdaPT framework in \citet{lei2018adapt}. In other cases where subgroups have more than one subject, the above error control does not imply whether subjects within a rejected subgroup are mostly non-nulls, or if many are nulls with zero effect. Such error control can be too weak to effectively tell apart most subgroups. Our paper appears to be the first to propose methods for identifying subjects having positive effects with (finite sample) FDR control. Individual level inferences are more fine-grained, from which (a) researchers can proceed with a follow-up analysis on identified individuals whom they strongly believe benefit from the treatment; (b) the identified individuals have  more trust in the treatment, and as an example, in industry one can recommend a new product to identified customers with much higher confidence). Importantly,  the identified individuals need not have been originally treated, as seen in our representative example (Figures~\ref{fig:illustrate_obs}, \ref{fig:illustrate_output}).

 We end by noting that we do propose an extension to our method for subgroup identification as well, and compare it to~\cite{karmakar2018false} in Section~\ref{apd:subgroup}.

\subsubsection{Other related error control at a subgroup level}
\citet{cai2011analysis} and \citet{athey2016recursive} develop confidence intervals for the averaged treatment effect within subgroups, where the former assumes the size of each subgroup to be large, and the latter requires a separate sample for inference. These intervals can potentially be used to generate a $p$-value for each subgroup and control FDR at a subgroup level via standard multiple testing procedures, but no explicit discussion is provided. \citet{lipkovich2011subgroup}, \citet{lipkovich2014strategies}, \citet{sivaganesan2011bayesian} and \citet{berger2014bayesian} propose methods with control on a different error metric: the global type-I error, which is the probability of identifying any subgroup when no subject has nonzero treatment effect (i.e., $H_{0i}^\text{zero}$ is true for all subjects). Our FDR control guarantee implies valid global type-I error, and FDR control is more informative on the correctness of the identified subgroups/subjects when there exist subjects having nonzero effects.

\subsection{An overview of our procedure} \label{sec:intro_I_cube}
As discussed, it appears to be new and practically interesting to provide FDR control guarantees at an individual level. Another merit of our proposed method is that it allows a human analyst and an algorithm to interact, in order to better accomplish the goal. 

Interactive testing is a recent idea that emerged in response to the growing practical needs of allowing human interaction in the process of data analysis. In practice, analysts tend to try several methods or models on the same dataset until the results are satisfying, but this violates the validity of standard testing methods (e.g., invalid FDR control). In our context of identifying positive effects, the appealing advantages of an interactive test include that (a)~an analyst is allowed to use (partial) data, together with prior knowledge, to design a strategy of selecting subjects potentially having positive effects, and (b)~it is a multi-step iterative procedure during which the analyst can monitor performance of the current strategy and make adjustments on the selection strategy at any step (at the cost of not altering earlier steps). Despite the flexibility of an analyst to design and alter the algorithm using (partial) data, our proposed procedure always maintains valid FDR control. We name our proposed algorithm \pei (I-cube), for interactive identification of individual treatment effects.

\vspace{20pt}
\begin{figure}[ht]
    \centering
\tikzstyle{op} = [rectangle, dashed, minimum width=1.5cm, minimum height=0.6cm, text centered, draw=black, fill=none]
\tikzstyle{rec} = [rectangle, minimum width=1.5cm, minimum height=0.6cm, text centered, draw=black, fill=none]
\tikzstyle{nrec} = [rectangle, minimum width=1.5cm, minimum height=0.6cm, text centered, draw=none, fill=none]
\tikzstyle{rrec} = [rectangle, rounded corners, minimum width=1.5cm, minimum height=0.6cm, text centered, draw=purple]
\tikzstyle{arrow} = [thick,->,>=stealth]

\begin{tikzpicture}[node distance=2cm]
\label{flow:exclude}

\node (hp) [rec]{$\{A_i\}$};
\node (gp) [rec, above of=hp, xshift = 0cm, yshift = 0.5cm]{$\{Y_i, X_i\}$};
\node (unmask) [nrec, left of=hp, text width=1.2cm]{Unmask $A_{i_{t-1}^*}$};
\node (side) [rec, above of= gp, xshift = 1.5cm, yshift = -1cm]{Prior information};
\node (rejset) [rec, right of=gp, xshift = 3.5cm]{Rejection set $\mathcal{R}_t$};
\node (error) [rec, right of=hp, xshift = 3.5cm]{Estimate $\widehat{\mathrm{FDR}}(\mathcal{R}_t)$};
\node (reject) [right of=error, xshift = 4.3cm]{Report $\mathcal{R}_t$};
\node (start) [right of=rejset, xshift = 4.1cm, text width=2.5cm]{Start $t = 0$, $\mathcal{R}_0 = [n]$, $i_0^* = \emptyset$};

\node (select) [below of=rejset, xshift = -3cm, yshift = 1.5cm,  color = cyan]{Selection};
\node (control) [below of=error, xshift = -3cm, yshift = 2.8cm, color = red]{Error control};

\draw [arrow] (start) -- (rejset);
\draw [arrow, dashed] (hp) -- (error);
\draw [arrow, dashed] (gp) -- (hp);
\draw [arrow] (gp) -- node[anchor= south, pos = 0.7] {Exclude $i_t^*$} (rejset);
\draw [arrow] (rejset) -- (error);
\draw [arrow] (error) -- node[anchor= south, pos = 0.5] {If $\widehat{\mathrm{FDR}}(\mathcal{R}_t) \leq \alpha$} (reject);
\draw [arrow] (side) -- (1.5,2.6);
\draw [arrow] (error) |- node[anchor= south, pos = 0.7] {If $\widehat{\mathrm{FDR}}(\mathcal{R}_t) > \alpha$, then $t \leftarrow t + 1$} (-2, -1.8) -- (unmask);
\draw [arrow] (unmask) |- (0, 3) -| (1.3, 2.6);
\draw [arrow] (hp) -- (-1.5, 0);

\draw[rounded corners=15pt, color=cyan, thick]
  (-1,1.7) rectangle ++(8.5,2.2);
\draw[rounded corners=15pt, color=red, thick]
  (-1,1) rectangle ++(8.5,-1.8);
\end{tikzpicture}
    \caption{A schematic of the \pei algorithm. All treatment assignments are initially kept hidden: only $(Y_i, X_i)_{i \in [n]}$ are revealed to the analyst, while all $\{A_i\}$ remain `masked'. The initial candidate rejection set is $\mathcal{R}_0=[n]$ (thus no subject is excluded initially and $i_0^* = \emptyset$). The false discovery proportion $\widehat{\mathrm{FDR}}$ of the current candidate set $\mathcal{R}_t$ is estimated by the algorithm (dashed lines), and reported to the analyst.
    If $\widehat{\mathrm{FDR}}(\mathcal{R}_t) > \alpha$, the analyst chooses a subject $i_t^*$ to remove it from the proposed rejection set $\mathcal R_t = \mathcal R_{t-1}\backslash \{i_t^*\}$, whose assignment~$A_{t^*_t}$ is then `unmasked' (revealed). Importantly, using any available prior information, covariates and working model, the analyst can choose subject $i_t^*$ and shrink $\mathcal R_t$ in any manner. This process continues until $\widehat{\mathrm{FDR}}(\mathcal{R}_t) \leq \alpha$ (or $\mathcal R_t = \emptyset$).}
    \label{fig:flow_interact}
\end{figure} 

\vspace{20pt}
The core idea that enables human interaction is to separate the information used for selecting subjects with positive effects and that for error control, via ``masking and unmasking'' (Figure~\ref{fig:flow_interact}). In short, masking means we hide the treatment assignment $\{A_i\}_{i=1}^n$ from the analyst. The algorithm alternates between two steps --- selection and error control --- until a simple stopping criterion introduced later is reached. Below is a intuitive description of the framework and we give the full method in Section~\ref{sec:zero-effect}. 
\begin{enumerate}
   \item \textbf{Selection}. Consider a set of candidate subjects to be identified as having a positive effect (whose null to be rejected), denoted as rejection set $\mathcal{R}_t$ for iteration~$t$. We start with all the subjects included, $\mathcal{R}_0 = [n]$. At each iteration, the analyst excludes possible nulls (i.e., subjects that are unlikely to have positive effects) from the previous $\mathcal{R}_{t-1}$, using all the available information (outcomes $Y_i$ and covariates $X_i$ for all subjects $i \in [n]$, and progressively unmasked $A_i$ from the step of error control, and possible prior information). Note that our method does not automatically use prior information and the revealed data. The analyst is free to use any black-box prediction algorithm that uses the available information, and evaluates the subjects possibly using an estimated probability of having a positive treatment effect. This step is where a human is allowed to incorporate her subjective choices.
   
   \item \textbf{Error control (and unmasking)}. The algorithm uses the complete data $\{Y_i, A_i, X_i\}$ to estimate FDR of the current candidate rejection set $\widehat{\mathrm{FDR}}(\mathcal{R}_t)$, as feedback to the analyst. 
   If the estimated FDR is above the target level $\widehat{\mathrm{FDR}}(\mathcal{R}_t) > \alpha$, the analyst goes back to the step of selection, along with additional information: the excluded subjects ($i \notin \mathcal{R}_t$) have their $A_i$ unmasked (revealed), which could improve her understanding of the data and guide her choices in the next selection step.
\end{enumerate}
The algorithms we propose in the main paper build on and modify the above procedure to achieve reasonably high power and develop various extensions.

Recall our illustrative example in Section~\ref{sec:intro_example}, the underlying groundtruth and the identifications made by the \peiZero (our central algorithm) are in Figure~\ref{fig:illustrate_result}. Although the observed outcomes tend to be higher for subjects in the top right corner, the subjects with true positive effects are in the center. Such discrepancy is because the \textit{control} outcome varies with covariates (could often happen in practice) and is designed to be higher in the top right corner, such that their observed outcomes could be high regardless of whether the treatment has any effect. Nonetheless, our proposed algorithm can tell the difference of high observed outcomes caused by high control outcomes versus those caused by positive treatment effects, and correctly identify most true positive effects.

\begin{figure}[h!]
\centering
\hspace{2cm}
    \begin{subfigure}[t]{0.3\textwidth}
        \centering
        \includegraphics[width=1\linewidth]{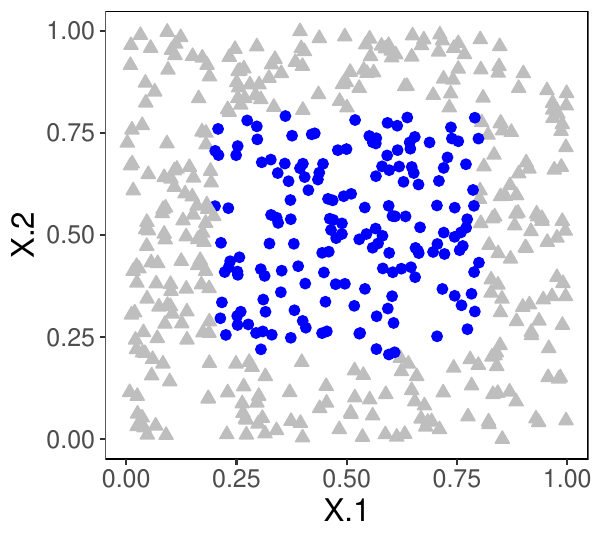}
        \caption{Blue round dots represent subjects with true positive effect (unknown ground truth).}
        \label{fig:illustrate_true}
    \end{subfigure}
    \hfill
    \begin{subfigure}[t]{0.3\textwidth}
        \centering
        \includegraphics[width=1\linewidth]{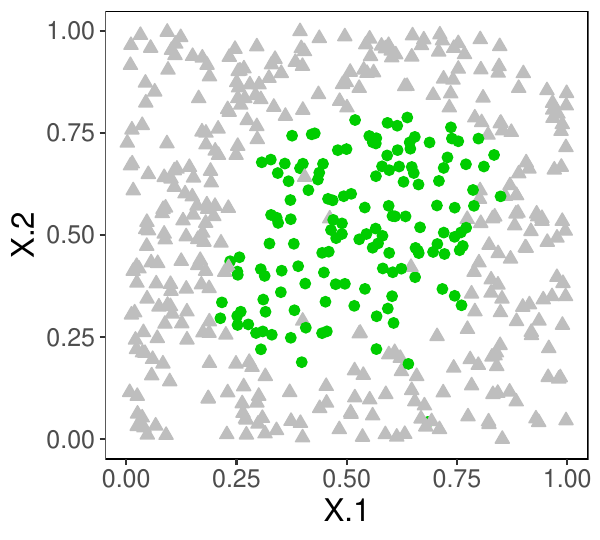}
        \caption{Green round dots represent subjects identified by the \peiZero.}
        \label{fig:illustrate_output}
    \end{subfigure}
\hspace{2cm}  
    \caption{An illustrative example with 500 subjects, each has two recorded covariates. The \peiZero identifies most subjects with positive effects, although about half of them did not receive treatment and their potential treated outcomes were not observed.}
    \label{fig:illustrate_result}
\end{figure}

\subsubsection{Related work in testing}
Testing procedures that allow human interaction are first proposed by \citet{lei2018adapt} and \citet{lei2017star} for the problem of FDR control in multiple testing, followed by several works for other error metrics, such as \citet{duan2019interactive} and \citet{duan2020familywise}. These papers focus on generic multiple testing problems, which operate on the $p$-values and ignore the process of generating $p$-values from data. In contrast, this paper applies the idea of interactive testing to observed data and propose tests in the context of causal inference for treatment effect. To our knowledge, we are not aware of $p$-values for individual identification in causal inference that are not binary and impose no model assumption. An example of binary $p$-value is $\mathbb{I}(\widehat \Delta_i \leq 0)$, where $\widehat \Delta_i$ can be viewed as a treatment effect estimator and defined in~\eqref{eq:est_effect}, and the $p$-value satisfies the mirror-conservative property in \citet{lei2018adapt}. 
 However, the masking framework then results in trivial guesses because each masked $p$-value is $\{0, 1\}$ for all individuals.
 We view our contribution as (1)~framing formally the question of selecting individuals with positive effect as a multiple testing problem with FDR control; and (2)~overcoming the challenge of designing $p$-value of individual zero effect hypothesis without any model assumption, by connecting the key property that enables masking framework in \citet{lei2018adapt}, independence between two functions of a single numeric $p$-value, and the key observation from our paper --- independence between treatment assignment $A_i$ and the vector of outcome and covariates~$\{Y_i, X_i\}$. The interactive tests first stem from the knockoff method for regression by \citet{barber2015controlling}, which relies on the symmetry of the carefully-designed test statistics. Here, we construct symmetric statistics in the framework of causal inference.


\subsubsection{Paper Outline}
The rest of the paper is organized as follows. In Section~\ref{sec:zero-effect}, we describe an interactive algorithm wrapped by a cross-fitting framework, which identifies subjects with positive effects with FDR control. We evaluate our proposed algorithm numerically in Section~\ref{sec:sim}, and provide theoretical power analysis in simple settings in Section~\ref{sec:power_analysis}. Section~\ref{sec:propensity_unknown} presents an extension to the setting of observational studies, and we implement the proposed algorithm to real datasets in Section~\ref{sec:imp}. More extensions can be found in Appendix~\ref{apd:extension}.  Section~\ref{sec:discuss} concludes the paper with a discussion on the potential of our proposed interactive procedures. 

\section{An interactive algorithm with FDR control} \label{sec:zero-effect}

To enable us to effectively infer the treatment effect, we use the following \textit{working model}:
\begin{align} \label{eq:work_model}
    Y_i^C = f(X_i) + U_i \text{ and }
    Y_i^T = \Delta(X_i) + f(X_i) + U_i,
\end{align}
where $U_i$ is zero-mean noise (unexplained variance) that is independent of $A_i$. When working with such a model, we effectively want to identify subjects with a positive treatment effect $\Delta(X_i)$. Importantly, model~\eqref{eq:work_model} needs not be correctly specified or accurately reflect reality in order for the algorithms in this paper to have a valid FDR control. However, the more ill-specified or inaccurate the model is, the more power may be hurt; related numerical experiments are included in Appendix~\ref{apd:working_model}.

To identify subjects with positive effects, we first introduce an estimator of the treatment effect~$\Delta(X_i)$ following the working model~\eqref{eq:work_model}. Denote the expected outcome given the covariates as $m(X_i) := \me(Y_i \mid X_i)$, and let $\widehat m(X_i)$ be an arbitrary estimator of $m(X_i)$ using the outcomes and covariates $\{Y_i, X_i\}_{i=1}^n$. Define the \textit{residual} as $E_i := Y_i  - \widehat m(X_i)$, and an estimator of $\Delta(X_i)$ is
\begin{align} \label{eq:est_effect}
    \widehat \Delta_i := 4 (A_i - 1/2) \cdot E_i,
\end{align}
which, under randomized experiments, is equivalent to the nonparametric estimator of the conditional treatment average effect $\mathbb{E}(Y_i^T \mid X_i) - \mathbb{E}(Y_i^C \mid X_i)$ in several recent papers \citep{nie2020quasi,kennedy2020optimal}, and can be traced back to the semiparametrics literature with~\citet{robinson1988root}.
A critical property of $\widehat \Delta_i$ that 
later leads to
FDR control is that\footnote{Note that property~\eqref{eq:sign_property_zero} uses the fact that the outcome estimator $\widehat m(X_i)$ is independent of $A_i$, so it is important that the estimation of $\widehat m$ does not use the assignments $\{A_i\}_{i=1}^n$; however, it should not affect the estimation much because $m(X_i) \equiv \me(Y_i \mid X_i)$ is not a function of $A_i$.}
\begin{align} \label{eq:sign_property_zero}
    \mathbb{P}(\widehat \Delta_i > 0 \mid \{Y_j, X_j, E_j\}_{j=1}^n) \leq 1/2,
\end{align}
under $H_{0i}^\text{zero}$ in~\eqref{eq:hp_twoside}, because $H_{0i}^\text{zero}$ implies $A_i \independent \{Y_i, X_i\}$ and $\mathbb{P}(A_i - 1/2 > 0) = 1/2$. With the rigorous argument for validity in Appendix~\ref{apd:proof-naive}, we briefly describe the reasoning here: the condition in~\eqref{eq:sign_property_zero} corresponds to the information used for selecting subjects (recall in Figure~\ref{fig:flow_interact}), because the treatment assignments $A_i$ are hidden for candidates in $\mathcal{R}_t$ and assignments are mutually independent.
The above property indicates that the estimated effect $\widehat \Delta_i$ is no more likely to be positive than negative if the selected subject has zero effect, regardless of how the analyst decides which subject to select. 
Therefore, the sign of $\widehat \Delta_i$ can be used to estimate the number of false identifications and achieve FDR control, which we elaborate next.

\subsection{The \pei algorithm} 
This section presents the \pei algorithm and proves that it controls FDR. We introduce a modification based on cross-fitting that improves identification power in the next section.

The  \pei proceeds as progressively shrinking a candidate rejection set $\mathcal{R}_t$ at iteration $t$, 
\[
[n] = \mathcal{R}_0 \supseteq \mathcal{R}_1 \supseteq \ldots \supseteq \mathcal{R}_n = \emptyset,
\]
where recall $[n]$ denotes the set of all subjects. We assume without loss of generality that one subject is excluded in each step. Denote the subject excluded at iteration $t$ as $i_t^*$. The choice of $i_t^*$ can use the information available to the analyst before iteration $t$, where we follow \citet{lei2018adapt} to describe the information available to the analysis formally as a filtration (sequence of nested $\sigma$-fields):
\begin{align} \label{eq:sigma}
    \mathcal{F}_{t-1}= \sigma\left(\{Y_j, X_j\}_{j \in \mathcal{R}_{t-1}}, \{Y_j, A_j,  X_j\}_{j \notin \mathcal{R}_{t-1}}, \sum_{j \in \mathcal{R}_{t-1}} \one\{\widehat \Delta_j > 0\}\right),
\end{align}
or equivalently,
\begin{align} 
    \mathcal{F}_{t-1} = \sigma\left(\{Y_j, X_j\}_{j \in [n]}, \{ A_j\}_{j \notin \mathcal{R}_{t-1}}, \sum_{j \in \mathcal{R}_{t-1}} \one\{\widehat \Delta_j > 0\}\right),
\end{align}
where we unmask (reveal) the treatment assignments $A_j$ for subjects excluded from $\mathcal{R}_{t-1}$, and the sum $\sum_{i \in \mathcal{R}_{t-1}} \one\{\widehat \Delta_i > 0\}$ is mainly used for FDR estimation as we describe later.  The above available information include arbitrary functions of the revealed data,
such as the residuals~$\{E_j\}_{j=1}^n$ defined above equation~\eqref{eq:est_effect}. Similar to property~\eqref{eq:sign_property_zero}, for each candidate subject $i \in \mathcal{R}_{t-1}$, we have
\begin{align} \label{eq:sign_property_full}
     \mathbb{P}(\widehat \Delta_i > 0 \mid \{Y_j, X_j\}_{j \in \mathcal{R}_{t-1}}, \{Y_j, A_j,  X_j\}_{j \notin \mathcal{R}_{t-1}}) \leq 1/2,
\end{align}
which ensures the FDR control as we explain next.

To control FDR, the number of false identifications is estimated by \eqref{eq:sign_property_full}. The idea is to partition the candidate rejection set $\mathcal{R}_t$ into $\mathcal{R}_t^+$ and $\mathcal{R}_t^-$ by the sign of $\widehat \Delta_i$: 
\begin{align*}
    \mathcal{R}_t^- := \{i \in \mathcal{R}_t: \widehat \Delta_i \leq 0\}, \quad
    \mathcal{R}_t^+ := \{i \in \mathcal{R}_t: \widehat \Delta_i > 0\}.
\end{align*}
Notice that our proposed procedure only identifies the subjects whose estimated effect is positive, i.e., those in $\mathcal{R}_t^+$. Thus, the FDR is $\mathbb{E}\left[\frac{|\mathcal{R}_t^+ \cap \mathcal{H}_0|}{\max\{|\mathcal{R}_t^+|,1\}}\right]$ by definition, where recall $\mathcal{H}_0$ is the set of true nulls. Intuitively, the number of false identifications ${|\mathcal{R}_t^+ \cap \mathcal{H}_0|}$ can be approximately upper bounded by ${|\mathcal{R}_t^- \cap \mathcal{H}_0|}$, since the number of positive signs should be no larger than the number of negative signs for the falsely identified nulls, according to property~\eqref{eq:sign_property_zero}. Note that the set of true nulls $\mathcal{H}_0$ is unknown, so we use $|\mathcal{R}_t^-|$ to upper bound $|\mathcal{R}_t^- \cap \mathcal{H}_0|$, and propose an estimator of FDR for the candidate rejection set $\mathcal{R}_t$:
\begin{align} \label{eq:fdr_hat}
    \widehat{\mathrm{FDR}}(\mathcal{R}_t) = \frac{|\mathcal{R}_t^-| + 1}{\max\{|\mathcal{R}_t^+|,1\}}.
\end{align}
Such FDR estimators using the count of positive or negative individuals stem from \citet{barber2015controlling}, and have been extensively used in the literature of interactive testing such as \citet{lei2018adapt,lei2017star,duan2019interactive,duan2020familywise}.
Overall, the \pei shrinks $\mathcal{R}_t$ until time $\tau:= \inf\{t:\widehat{\mathrm{FDR}}(\mathcal{R}_t) \leq \alpha\}$ and identifies only the subjects in $\mathcal{R}_\tau^+$, as summarized in Algorithm~\ref{alg:pei}. We state the FDR control of \pei in Theorem~\ref{thm:naive-ITE} and the proof can be found in Appendix~\ref{apd:proof-naive}.
\vspace{20pt}
\begin{theorem} \label{thm:naive-ITE}
In a randomized experiment with assumptions~\eqref{eq:randomize1} and~~\eqref{eq:randomize2}, and for any analyst that updates their working model(s) at any iteration $t$ using the information in $\mathcal{F}_{t-1}$, the set $R_\tau^+$ rejected by the \pei algorithm has FDR controlled at level~$\alpha$, meaning that
\[
\mathbb{E}\left[\frac{|\mathcal{R}_\tau^+ \cap \mathcal{H}_0|}{\max\{|\mathcal{R}_\tau^+|,1\}}\right] \leq \alpha,
\]
for the null hypothesis~\eqref{eq:hp_twoside}.
\end{theorem}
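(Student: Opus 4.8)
The plan is to condition on everything except the true nulls' treatment assignments, rewrite the realized false discovery proportion as a telescoping product whose first factor is a bounded martingale evaluated at the stopping time and whose remaining two factors are controlled deterministically, and then invoke optional stopping. First I would fix the conditioning event $\mathcal C:=\big(\{Y_j,X_j,E_j\}_{j\in[n]},\{A_j\}_{j\notin\mathcal H_0}\big)$; for the hybrid null~\eqref{eq:hp_twoside_hybrid} I would additionally condition on the covariates and potential outcomes, which is legitimate and yields the stronger conditional statement. Using~\eqref{eq:randomize1}--\eqref{eq:randomize2} and $H_{0i}^\text{zero}$ --- in any of the forms~\eqref{eq:hp_twoside},~\eqref{eq:hp_twoside_fixed},~\eqref{eq:hp_twoside_hybrid} --- the assignments $\{A_i\}_{i\in\mathcal H_0}$ remain mutually independent $\mathrm{Bernoulli}(1/2)$ given $\mathcal C$, so by the same mechanism as in~\eqref{eq:sign_property_zero} the signs $b_i:=\one\{\widehat\Delta_i>0\}$, $i\in\mathcal H_0$, are conditionally independent with $b_i\sim\mathrm{Bernoulli}(q_i)$ and $q_i=\tfrac12\one\{E_i\neq0\}\le\tfrac12$. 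Because $\mathcal C$ freezes the non-null assignments, $A^+_t:=|\mathcal R_t^+\cap\mathcal H_0|$ becomes $\mathcal F_t$-measurable ($\mathcal F_t$ as in~\eqref{eq:sigma}), and $\widehat{\mathrm{FDR}}(\mathcal R_t)$ is $\mathcal F_t$-measurable for free (it depends only on $t$ and on $\sum_{j\in\mathcal R_t}\one\{\widehat\Delta_j>0\}$), so $\tau$ is a stopping time bounded by $n$.

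Next I would reduce the quantity to be bounded. Set $A^{--}_t:=|\{j\in\mathcal R_t\cap\mathcal H_0:E_j\neq0,\ \widehat\Delta_j\le0\}|$, and note $A^{--}_t\le|\mathcal R_t^-|$ (ties $\widehat\Delta_j=0$ are placed in $\mathcal R_t^-$, so these $j$ genuinely lie in $\mathcal R_t^-$). On the event that the algorithm stops with $\widehat{\mathrm{FDR}}(\mathcal R_\tau)\le\alpha$,
\[
\frac{|\mathcal R_\tau^+\cap\mathcal H_0|}{\max\{|\mathcal R_\tau^+|,1\}}
=\frac{A^+_\tau}{A^{--}_\tau+1}\cdot\frac{A^{--}_\tau+1}{|\mathcal R_\tau^-|+1}\cdot\frac{|\mathcal R_\tau^-|+1}{\max\{|\mathcal R_\tau^+|,1\}}
\le\alpha\cdot\frac{A^+_\tau}{A^{--}_\tau+1},
\]
because the middle factor is at most $1$ and the last equals $\widehat{\mathrm{FDR}}(\mathcal R_\tau)$; on the complementary event $\mathcal R_\tau^+=\emptyset$ and the FDP is $0$. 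Writing $N_t:=A^+_t/(A^{--}_t+1)$, it therefore suffices to prove $\me[N_\tau\mid\mathcal C]\le1$.

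The heart of the argument is that $(N_t)$ is an $(\mathcal F_t)$-martingale under $\mathbb P(\cdot\mid\mathcal C)$. The structural fact I would rely on is that $\mathcal F_{t-1}$ exposes the true-null signs only through their running total $\sum_{j\in\mathcal R_{t-1}}\one\{\widehat\Delta_j>0\}$ --- which given $\mathcal C$ equals $A^+_{t-1}$ --- so, arguing by induction on $t$, the conditional law given $\mathcal F_{t-1}$ of the signs of the ``flippable'' true nulls still in $\mathcal R_{t-1}$ is uniform over the $0/1$-vectors of length $A^+_{t-1}+A^{--}_{t-1}$ summing to $A^+_{t-1}$, hence exchangeable. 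Since $i_t^*$ is $\mathcal F_{t-1}$-measurable it is chosen without seeing any individual flippable sign, so $\mathbb P(\widehat\Delta_{i_t^*}>0\mid\mathcal F_{t-1},\mathcal C)=A^+_{t-1}/(A^+_{t-1}+A^{--}_{t-1})$ when $i_t^*$ is a flippable true null, while removing a non-null --- or a true null with $E_{i_t^*}=0$ --- leaves both $A^+$ and $A^{--}$ unchanged; substituting these into a one-line evaluation of $\me[N_t\mid\mathcal F_{t-1},\mathcal C]$ returns $N_{t-1}$ in each case. Optional stopping at $\tau\le n$ then gives $\me[N_\tau\mid\mathcal C]=\me[N_0\mid\mathcal C]$, and since conditionally $A^+_0\sim\mathrm{Bin}(B_0,\tfrac12)$ with $B_0:=|\{j\in\mathcal H_0:E_j\neq0\}|$ and $A^{--}_0=B_0-A^+_0$, the elementary identity $\me[B/(B_0-B+1)]=1-2^{-B_0}\le1$ for $B\sim\mathrm{Bin}(B_0,\tfrac12)$ closes the estimate; averaging over $\mathcal C$ yields $\mathrm{FDR}\le\alpha$, and the same computation run conditionally on covariates and potential outcomes gives the final assertion.

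The step I expect to be the main obstacle is the exchangeability/conditional-law claim in the martingale argument: one must verify that an analyst who adaptively refits arbitrary (black-box or Bayesian) working models using \emph{all} of $\mathcal F_{t-1}$ still cannot make the removed true nulls disproportionately positive --- this is precisely where masking (hiding the individual $A_j$ for $j\in\mathcal R_{t-1}$ and disclosing only the aggregate $\sum_{j\in\mathcal R_{t-1}}\one\{\widehat\Delta_j>0\}$) does its work, and the induction must be arranged so each newly unmasked sign stays consistent with the exchangeable conditional law. A smaller but real technicality is the chance of exact ties $\widehat\Delta_i=0$ among true nulls; this is what forces the slightly awkward bookkeeping with $A^{--}_t$ in place of the naive $|\mathcal R_t^-\cap\mathcal H_0|$, and could be avoided by assuming the residuals are a.s.\ nonzero.
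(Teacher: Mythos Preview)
Your proposal is correct and follows essentially the same route as the paper. The paper's proof conditions on $\mathcal G_{-1}=\sigma(\{Y_j,X_j\}_{j=1}^n)$, sets $b_i=\one\{\widehat\Delta_i\le 0\}$ and $\mathcal C_t=\mathcal R_t\cap\mathcal H_0$, and then invokes a black-box lemma (Lemma~2 of \citet{lei2018adapt}) to obtain $\mathbb E\big[(1+|\mathcal R_\tau\cap\mathcal H_0|)/(1+|\mathcal R_\tau^-\cap\mathcal H_0|)\big]\le 2$, after which the same telescoping decomposition you wrote finishes the bound. Your proof simply unpacks that lemma: you prove directly that $N_t=A^+_t/(A^{--}_t+1)$ is a martingale via the exchangeability-under-masking argument and compute $\mathbb E[N_0\mid\mathcal C]=1-2^{-B_0}$ by the binomial identity, rather than quoting the supermartingale bound. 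Your careful separation of ``flippable'' nulls ($E_i\neq 0$, $q_i=\tfrac12$) from ties ($E_i=0$) is exactly what the lemma's hypothesis $\rho_i\ge\rho$ absorbs on the paper's side; both treatments are correct, and your version is slightly sharper and fully self-contained.
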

Consider a simple case where model~\eqref{eq:work_model} is accurate for every subject with a constant treatment effect $\Delta(X_i) = \delta > 0$. If $\delta$ is larger than the maximum noise, we have $\mathcal{R}_0^+ = [n]$, and the algorithm can stop at the very first step identifying all subjects. At the other extreme, if the effect~$\delta$ is too small, the algorithm may also return an empty set, and this makes sense because while small \textit{average} treatment effects can be learned using a large population, larger treatment effects are needed for \textit{individual-level} identification.

\begin{algorithm}[h]
   \caption{The \pei (interactive identification of individual treatment effect) procedure.}
   \label{alg:pei}
\begin{algorithmic}
   \STATE {\bfseries Initial state:} Statistician (S) knows covariates and outcomes $\{X_i,Y_i\}_{i = 1}^n$.
   \STATE Computer (C) knows the treatment assignments
   $\{A_i\}_{i=1}^n$.
   \STATE Target FDR level $\alpha$ is public knowledge.
   \STATE {\bfseries Initial exchange:} Both players initialize $\mathcal{R}_0 = [n]$ and set $t=1$.
   \STATE 1.~S builds a prediction model $\widehat m$ from $X_i$ to $Y_i$.
   \STATE 2.~S informs C about residuals $E_i \equiv Y_i - \widehat m(X_i)$.
   \STATE 3. C estimates the treatment effect as $\widehat \Delta_i \equiv 4 (A_i - 1/2) E_i$. 
   \STATE 4. C then divides $\mathcal{R}_t$ into $\mathcal{R}_t^- := \{i \in \mathcal{R}_t: \widehat \Delta_i \leq 0\}$ and $\mathcal{R}_t^+ := \{i \in \mathcal{R}_t: \widehat \Delta_i > 0\}$.
   \STATE 5. C reveals only  $|\mathcal{R}_t^+|$ to S (who infers $|\mathcal{R}_t^-|$).
   \STATE {\bfseries Repeated interaction:} 6. S checks if {$\widehat{\mathrm{FDR}}(\mathcal{R}_t) \equiv \frac{|\mathcal{R}_t^-| + 1}{\max\{|\mathcal{R}_t^+|,1\}} \leq \alpha$}.
   \STATE 7.  If yes, S sets $\tau=t$, reports $\mathcal{R}_\tau^+$ and exits.
   \STATE 8.~Else, S picks any $i_t^* \in \mathcal{R}_{t-1}$ using everything S currently knows.
   \STATE (S tries to pick an $i^*_t$ that S thinks is null, i.e. S hopes that $\widehat \Delta_{i^*_t} \leq 0$.)
   \STATE 9.~C reveals $A_{i_t^*}$ to S, who also infers $\widehat \Delta_{i_t^*}$ and its sign.
   \item 10. S updates ${\mathcal{R}_{t+1} = \mathcal{R}_{t}\backslash \{i_t^*\}}$, and also $|\mathcal{R}^+_{t+1}|$ and $|\mathcal{R}^-_{t+1}|$.
   \STATE 11. Increment $t$ and go back to Step 6.
\end{algorithmic}
\end{algorithm}
We end the section with a remark.
In step~8 of Algorithm~\ref{alg:pei}, we hope to exclude subjects that are unlikely to have positive effects, based on the revealed data in~$\mathcal{F}_{t-1}$. In other words, we should guess the sign of treatment effect~$\widehat \Delta_i$, which depends on both the revealed data $\{Y_i, X_i\}$ and the hidden assignment $A_i$.  
However, notice that at the first iteration, we may learn/guess the opposite signs for all the subjects; when all assignments~$\{A_i\}_{i=1}^n$ are hidden at $t = 1$, the likelihood of $\{A_i\}_{i=1}^n$ being the true values (leading to all correct signs for~$\widehat \Delta_i$) is the same as the likelihood of all opposite values (leading to all opposite signs for~$\widehat \Delta_i$), no matter what working model we use. 
Consequently, the subjects with large positive effects could be guessed as having large negative effects, causing them to be excluded from the rejection set. 
To improve power, we propose to wrap around the \pei by a cross-fitting framework as described in the next section. 


\subsection{Improving stability and power with \peiZero}
Cross-fitting refers to the idea of splitting the samples into two halves. We perform the \pei on each half separately, so that for each half, the complete data (including the assignments) of the other half is revealed to the analyst to help infer the sign of treatment effect, addressing the issue of learning the opposite signs and improving the identification power. 

Specifically, split the subjects randomly into two sets of equal size, denoted as $\mathcal{I}$ and~$\mathcal{II}$ where $\mathcal{I} \cup \mathcal{II} = [n]$. The \pei (Algorithm~\ref{alg:pei}) is implemented on each set separately: at the start of \pei on set~$\mathcal{I}$, the analyst has access to the complete data for all subjects in set $\mathcal{II}$, and tries to identify subjects with positive effects in set $\mathcal{I}$ with FDR control at level $\alpha/2$; similar is the \pei on set $\mathcal{II}$. Mathematically, let the candidate rejection set of implementing the \pei on set $\mathcal{I}$ be $\mathcal{R}_t(\mathcal{I})$, where the initial set is $\mathcal{R}_0(\mathcal{I}) = \mathcal{I}$. The available information at iteration $t$ is defined as:
\begin{align} \label{eq:sigma_A}
    \mathcal{F}_{t-1}(\mathcal{I})= \sigma\left(\{Y_i, X_i\}_{i\in \mathcal{R}_{t-1}(\mathcal{I})},  \{Y_j, A_j, X_j\}_{j \notin \mathcal{R}_{t-1}(\mathcal{I})}, \sum_{i \in \mathcal{R}_{t-1}(\mathcal{I})} \one\{\widehat \Delta_i > 0\}\right),
\end{align}
which includes the complete data $\{Y_j, A_j, X_j\}$ for $j \in \mathcal{II}$ at any iteration $t \geq 0$ \footnote{For notational clarity, we use $i$ to denote candidate subjects $i \in \mathcal{R}_t(\mathcal{I})$, and $j$ for non-candidate subjects $j \notin \mathcal{R}_t(\mathcal{I})$, while $k$ is used to index all subjects $k \in [n]$.}. Similarly, we define 
$\mathcal{R}_t(\mathcal{II})$ and 
$\mathcal{F}_{t-1}(\mathcal{II})$ for the \pei implemented on set $\mathcal{II}$. The final rejection set is the union of rejections in $\mathcal{I}$ and~$\mathcal{II}$ (see Algorithm~\ref{alg:para}). We call this algorithm the \peiZero. 

\begin{algorithm}[h]
   \caption{The \peiZero.}
   \label{alg:para}
\begin{algorithmic}
   \STATE {\bfseries Input:} Covariates, outcomes, treatment assignments $\{Y_i, A_i, X_i\}_{i = 1}^n$, target level~$\alpha$;
   \STATE {\bfseries Procedure:} 
   \STATE 1.~Randomly split the sample into two subsets of equal size, denoted as $\mathcal{I}$ and $\mathcal{II}$;
   \STATE 2.~Implement Algorithm~\ref{alg:pei} at level $\alpha/2$, where E initially knows $\{Y_k, X_k\}_{k=1}^n \cup \{A_j\}_{j\in \mathcal{II}}$ and sets $\mathcal{R}_0(\mathcal{I}) = \mathcal{I}$, 
   getting a rejection set $\mathcal{R}_\tau^+(\mathcal{I}) \subseteq \mathcal{I}$;
   \STATE 3.~Implement Algorithm~\ref{alg:pei} at level $\alpha/2$, where E initially knows $\{Y_k, X_k\}_{k=1}^n \cup \{A_j\}_{j\in \mathcal{I}}$ and sets $\mathcal{R}_0(\mathcal{II}) = \mathcal{II}$, 
   getting a rejection set $\mathcal{R}_\tau^+(\mathcal{II}) \subseteq \mathcal{II}$;
   \STATE 4.~Combine two rejection sets as the final rejection set, $\mathcal{R}_\tau^+ = \mathcal{R}_\tau^+(\mathcal{I}) \cup \mathcal{R}_\tau^+(\mathcal{II})$.
\end{algorithmic}
\end{algorithm}
As long as the \pei on two sets do not exchange information, Algorithm~\ref{alg:para} has a valid FDR control (see the proof in Appendix~\ref{apd:crossfit}). 
\begin{theorem} 
\label{thm:LNO-ITE}
Under assumption~\eqref{eq:randomize1} and~\eqref{eq:randomize2} of randomized experiments, $\mathcal{R}_\tau^+$ rejected by the \peiZero has FDR controlled at level $\alpha$ for the null hypothesis~\eqref{eq:hp_twoside}.
\end{theorem}

In addition to addressing the issue of learning the opposite $\widehat \Delta_i$ in the original \pei, another benefit of using the crossing-fitting framework is that with the complete data revealed for at least half of the sample, the analyst does not have to deal with the problem of inferring missing data (the assignment $A_i$), which probably needs some parametric probabilistic modeling and the EM algorithm. Instead, because the assignments are revealed for subjects not in the candidate rejection set (at least half of the sample), their signs of~$\widehat \Delta_j$ can be correctly calculated and used as ``training data''. The analyst can then employ a black-box prediction model, such as a random forest, to predict the signs of $\widehat \Delta_i$ for the subjects whose assignments are masked (hidden). As an example, we propose an automated strategy as follows to select a subject at step~8 in Algorithm~\ref{alg:pei}.

\begin{algorithm}[H]
   \caption{An automated heuristic to select $i_t^*$ in the \peiZero.}
   \label{alg:select_RF}
\begin{algorithmic}
   \STATE {\bfseries Input:} Current rejection set $\mathcal{R}_{t-1}(\mathcal{I})$, and available information for selection $\mathcal{F}_{t-1}(\mathcal{I})$;
   \STATE {\bfseries Procedure:} 
   \STATE 1.~Train a random forest classifier where the label is $\text{sign}(\widehat \Delta_j)$ and the predictors are $Y_j, X_j$ and the residuals $E_j$, using non-candidate subjects $j \notin \mathcal{R}_{t-1}(\mathcal{I})$; 
   \STATE 2.~Estimate the probability of $\widehat \Delta_i$ being positive as $\widehat p(i,t)$ for subjects $i \in \mathcal{R}_{t-1}(\mathcal{I})$;
   \STATE 3.~Select $i_t^* = \argmin\{\widehat p(i,t): i \in \mathcal{R}_{t-1}(\mathcal{I})\}$.
\end{algorithmic}
\end{algorithm}

We remark that in practice, the analyst can interactively change the prediction model, such as exploring parametric models to see which fits the data better. In principle, the analyst can perform any exploratory analysis on data in $\mathcal{F}_{t-1}(\mathcal{I})$ to decide a heuristic or score for selecting subject $i_t^*$; and the FDR control is valid as long as she does not use the assignments~$A_i$ for candidate subjects $i \in \mathcal{R}_{t-1}(\mathcal{I})$. For computation efficiency, we usually update the prediction models (or their parameters) once every 100 iterations~(say).

To summarize, the \peiZero described in Algorithm~\ref{alg:para} involves two rounds of the  \pei (Algorithm~\ref{alg:pei}), where step~8 of selecting a subject is allowed to involve human interaction; alternatively, step~8 can be an automated heuristic as presented in Algorithm~\ref{alg:select_RF}.

\begin{remark}
We contrast our algorithms to identify individual effects with many existing algorithms targeting at alternative goals. As examples, two commonly discussed goals include testing whether \textit{any} individual has any effect~\citep{rosenbaum2002covariance,rosenblum2009using,howard2019uniform} and estimating the averaged treatment effect~\citep{lin2013agnostic,fogarty2018regression,guo2020generalized}. While the above two questions discuss causal inference at an integrated level for the investigated population, we study the problem of making claims for each individual subject, a harder problem by its nature. Therefore, it is expected that a larger effect size is required, compared to the previous two goals, to have reasonably high power for individual effect identification. In the following sections, we demonstrate through repeated numerical experiments and theoretical analysis that the \peiZero has reasonably high power.
\end{remark}

\section{Numerical experiments} \label{sec:sim}
To assess our proposed procedure, we first describe a baseline method, which calculates a $p$-value for each subject under the assumption of linear models, and applies the classical BH method \citep{benjamini1995controlling}. We call this method the linear-BH procedure.

\subsection{Two baselines: the BH procedure (assuming well-specified model) and  Selective SeqStep+}

\textbf{BH procedure under linear assumptions.} For the treated group and control group, we first separately learn a linear model to predict $Y_i$ using $X_i$, denoted as $\widehat l^T$ and $\widehat l^C$. By imputing the unobserved potential outcomes, we get estimators of the potential outcomes $\widetilde Y_i^T = Y_i \one\{A_i = 1\} + \widehat l^T(X_i) \one\{A_i = 0\} $ and $\widetilde Y_i^C = \widehat l^C(X_i) \one\{A_i = 1\} + Y_i \one\{A_i = 0\}$, and the treatment effect for subject $i$ can be estimated as $\widehat{\Delta}_i^{\text{BH}} := \widetilde Y_i^T - \widetilde Y_i^C$. If the potential outcomes are linear functions of covariates with standard Gaussian noises (which we refer to as the linear assumption), the estimated treatment effect asymptotically follows a Gaussian distribution. For each subject $i \in [n]$, we calculate a $p$-value for the zero-effect null~\eqref{eq:hp_twoside} as
\begin{align} \label{eq:p-linear-BH}
    P_i = 1 - \Phi\left(\widehat{\Delta}_i^{\text{BH}} \Big/ \sqrt{\widehat{\text{Var}}(\widehat{\Delta}_i^{\text{BH}})}\right),
\end{align}
where the estimated variance is $\widehat{\text{Var}}(\widehat{\Delta}_i^{\text{BH}}) = \widehat{\text{Var}} (\widetilde Y_i^T) + \widehat{\text{Var}} (\widetilde Y_i^C)$, and $\Phi$ denotes the CDF of a standard Gaussian.
To identify subjects having positive effects with FDR control, we apply the BH procedure to the above $p$-values. Note that, unlike our methods, the error control for BH would not hold when the linearity assumption is violated (see Appendix~\ref{apd:FDR_BH_linear} for the formal FDR control guarantee; see Section~\ref{apd:sim_more} for more numerical experiments when the linear assumption holds or does not). 

\textbf{Selective SeqStep+}. Once we construct the estimated treatment effect $\widehat \Delta_i$ and call out the critical property~\eqref{eq:sign_property_zero} on the probability of the estimated effect sign, we can apply the Selective SeqStep+ by \citet{barber2015controlling}. Specifically, we set the $p$-value for each individual as $p_i = 1 - \frac{1}{2}\one(\widehat \Delta_i > 0)$, which equals $1/2$ when estimated treatment effect $\widehat \Delta_i > 0$ and equals $1$ when $\widehat \Delta_i <= 0$. The hypotheses can be ordered by $|E_i|$ decreasingly, and the constant $c$ is chosen as $1/2$ to maximize the power. We also note that the Selective SeqStep+ can be viewed as an automated version of \pei where the Statistician (S) picks individuals by the ranking of $|E_i|$ in step~8 of Algorithm~\ref{alg:pei}. We expect the power of our proposed \peiZero to be higher than the Selective SeqStep+ (an automated version of \pei), because \peiZero additionally uses information from the revealed treatment assignments when picking/ordering individuals, which are especially helpful to inform the direction of treatment effect.

\subsection{Numerical experiments and power comparison} \label{sec:sim_setup}
We run a simulation with 500 subjects $\smash{(n = 500)}$. Each subject is recorded with two binary attributes (eg. female/male and senior/junior) and one continue attribute (eg. body weight), denoted as a vector ${X_i = (X_i(1), X_i(2), X_i(3)) \in \{0,1\}^2\times \mathbb{R}}$. Among $n$ subjects, the binary attributes are marginally balanced, and the subpopulation with $X_i(1) = 1$ and $X_i(2) = 1$ is of size $30$. The continuous attribute is independent of the binary ones and follows the distribution of a standard Gaussian.

The outcomes are simulated as a function of the covariates $X_i$ and the assignment $A_{i}$ following the generating model~\eqref{eq:work_model}. Recall that we previously used model~\eqref{eq:work_model} as a working model, which is not required to be correctly specified. Here, we generate data from such a model in simulation for a clear evaluation of the considered methods. We specify the noise~$U_i$ as a standard Gaussian, and the expected control outcome as $ f(X_i) = 5(X_i(1) + X_i(2) + X_i(3)),$
and the treatment effect as
\begin{align} \label{eq:bias-sparse}
    \Delta(X_i) =  S_\Delta \cdot [5X_i^3(3) \one\{X_i(3) > 1\} - X_i(1)/2],
\end{align}
where $S_\Delta > 0$ encodes the scale of the treatment effect. In this setup, around 15\% subjects have positive treatment effects with a large scale, and 43\% subjects have a mild negative effect \footnote{R code to fully reproduce all plots in the paper are available at \url{https://github.com/duanby/I-cube}.}.
  
\begin{figure}[htb!]
\centering
\hspace{1cm}
    \begin{subfigure}[t]{0.3\textwidth}
        \centering
        \includegraphics[width=1\linewidth]{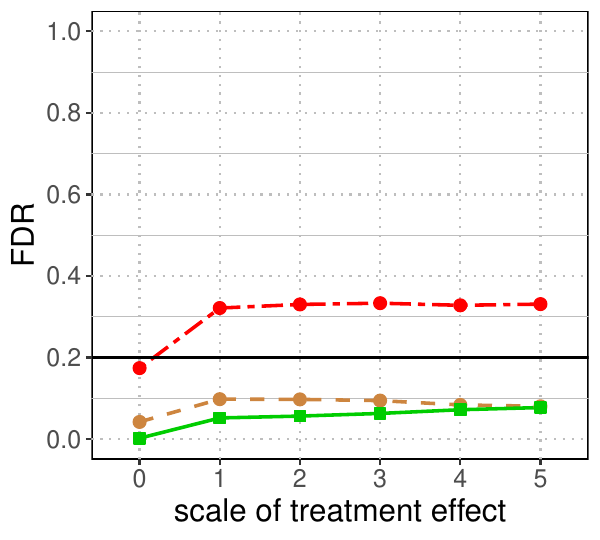}
    \end{subfigure}
    \hfill
    \begin{subfigure}[t]{0.3\textwidth}
    \centering
        \includegraphics[width=1\linewidth]{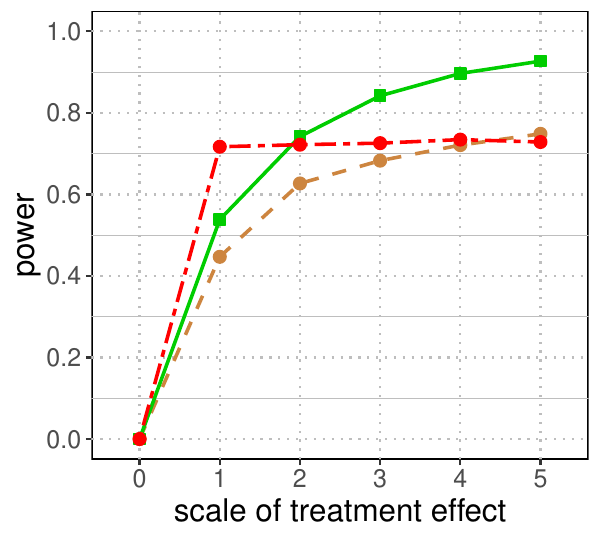}
    \end{subfigure}
\hspace{1cm}
\vfill
    \begin{subfigure}[t]{1\textwidth}
        \centering
        \includegraphics[width=0.7\linewidth]{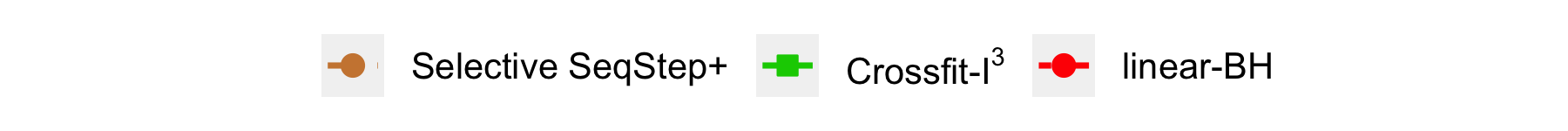}
    \end{subfigure}
    \caption{FDR (left) and power (right) of the \peiZero compared with the linear-BH procedure and the Selective SeqStep+, with the treatment effect specified as model~\eqref{eq:bias-sparse} and the scale $S_\Delta$ varying in $\{0,1,2,3,4,5\}$. The FDR control level is $0.2$, marked by a horizontal line in error control plots. For all plots in this paper, the FDR and power are averaged over 500 repetitions. The \peiZero controls FDR while the linear-BH procedure does not because the treatment effect is nonlinear. Also, the \peiZero can achieve higher power than both the Selective SeqStep+ and the linear-BH procedure.}
    \label{fig:bias-sparse}
\end{figure}

For the \peiZero, we use random forests (with default parameters in \texttt{R}) to compute $\widehat m$, and use the automated selection strategy Algorithm~\ref{alg:select_RF} to select a subject at step~8 in Algorithm~\ref{alg:pei}. The linear-BH procedure results in a substantially higher FDR than desired because the linear assumption does not hold in the underlying truth~\eqref{eq:bias-sparse} (see Figure~\ref{fig:bias-sparse}), whereas the Selective SeqStep+ and our proposed \peiZero control FDR at the target level as expected. At the same time, the \peiZero appears to have higher power than the Selective SeqStep+ and the linear-BH procedure to correctly identify subjects with true positive effects. More experiments that explore different treatment effect setups can be found in Section~\ref{apd:sim_more}.

\section{Asymptotic power analysis in simple settings} \label{sec:power_analysis}
In addition to the numerical experiments, we provide a theoretical power analysis in some simple cases to understand the advantages and limitations of our proposed \peiZero.

First, consider the case without covariates. Our analysis is inspired by the work of \citet{arias2017distribution,rabinovich2020optimal}, who study the power of methods with FDR control under a sparse Gaussian sequence model. Let there be $n$ hypotheses, each associated with a test statistic $V_i$ for $i = 1, \ldots, n$. They consider a class of methods called \textit{threshold procedures} such that the final rejection set $\mathcal{R}$ is in the form $\mathcal{R} = \{i: V_i \geq \tau(V_1, \ldots, V_n)\},$ for some threshold $\tau(V_1, \ldots, V_n)$; they discuss two types of thresholds; see Appendix~\ref{apd:power} for details of their results. An example of the threshold procedure is the BH procedure. Our proposed \pei can also be simplified to a threshold procedure when using an automated selection strategy at step~8 of Algorithm~\ref{alg:pei}: at each iteration, we exclude the subject with the smallest absolute value of the estimated treatment effect $|\widehat \Delta_i|$ (note that this strategy satisfies our requirement of not using assignments since $|\widehat \Delta_i| = |4(A_i - 1/2)(Y_i - \widehat m(X_i))| = 2|Y_i - \widehat m(X_i)|$). The resulting (simplified and automated)~\pei is a threshold procedure where $V_i = \widehat \Delta_i$. 
Note that our original interactive procedure is highly flexible, making the power analysis less obvious, so we discuss the power of \peiZero with the above simplified selection strategy.

To contextualize our power analysis, we paraphrase one of the results in \citet{arias2017distribution,rabinovich2020optimal}. Assume the test statistics $V_i\sim N(\mu_i, 1)$ are independent, with $\mu_i = 0$ under the null and $\mu_i = \mu > 0$ otherwise. Denote the number of non-nulls as $n_1$ and the \textit{sparsity}  of the non-nulls is parameterized by $\beta \in (0,1)$ such that $n_1/n = n^{-\beta}$. Let the signal $\mu$ increase with $n$ as $\mu = \sqrt{2r\log n}$, where the \textit{signal strength} is encoded by $r \in (0,1)$. Their power analysis is characterized by the signal $r$ and sparsity $\beta$, which are also critical parameters to characterize the power in our context as we state later.
These authors effectively prove that
\emph{for any fixed FDR level $\alpha \in (0,1)$, no threshold procedure can have nontrivial power if $r < \beta$, but there exist threshold procedures with asymptotic power one if $r > \beta$}. 

Our analysis differs from theirs in the non-null distribution of the test statistics. Given $n$ subjects, suppose the potential outcomes for subject~$i$ are distributed as: $Y_i^C \sim N(0,1)$ and $Y_i^T \sim N(\mu_i, 1)$, where the alternative mean is $\mu_i = 0$ if subject~$i$ is null, or $\mu_i = \mu > 0$ if~$i$ is non-null. Thus, the observed outcome of a null is $N(0, 1)$, and that of a non-null is a \textit{mixture} of $N(\mu, 1)$ and $N(0, 1)$ (depending on the treatment assignment), instead of a shift of the null distribution as assumed in \citet{arias2017distribution}, and the proof of the following result thus involves some modifications on their proofs (see Appendix~\ref{apd:power_noX}).
\begin{theorem} \label{thm:power_noX}
Given a fixed FDR level $\alpha \in (0,1)$ and let the number of subjects $n$ go to infinity. When there is no covariate, the automated \peiZero and the linear-BH procedure have the same power asymptotically: if $r < \beta$, their power goes to zero; if $r > \beta$, their power goes to~$1/2$. Further, among the treated subjects, their power goes to one.
\end{theorem}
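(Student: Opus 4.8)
The plan is to analyze the automated \peiZero as a threshold procedure and then transfer the known results of \cite{arias2017distribution,rabinovich2020optimal} to our setting. First I would recall that in the no-covariate case the residual-based estimator simplifies: since there are no covariates, $\widehat m(X_i)$ is just an estimate of the global mean $\me(Y_i)$, which under the stated model is $\mu_i/2$ (because the observed $Y_i$ is an equal mixture of $N(\mu_i,1)$ and $N(0,1)$). So up to a vanishing estimation error, $E_i = Y_i - \bar Y$ and $\widehat\Delta_i = 4(A_i-1/2)(Y_i-\bar Y) = \pm 2(Y_i - \bar Y)$. The automated selection strategy removes the subject with smallest $|\widehat\Delta_i| = 2|Y_i - \bar Y|$, so on each half the candidate set is a symmetric window around $\bar Y$, and the reported set $\mathcal R_\tau^+$ consists of those subjects \emph{inside the final window} with $\widehat\Delta_i>0$, i.e. with $A_i=1$ and $Y_i > \bar Y$, or $A_i=0$ and $Y_i < \bar Y$ — but since non-nulls have $\mu>0$, essentially the identified non-nulls are the treated ones with large $Y_i$.

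Second, I would make the reduction precise. On each half the procedure is equivalent to a threshold procedure with statistic $V_i = \widehat\Delta_i$: it keeps shrinking the $|V_i|$-window until $\widehat{\mathrm{FDR}} = (|\mathcal R_t^-|+1)/\max\{|\mathcal R_t^+|,1\} \le \alpha/2$. The key observation is that $|\mathcal R_t^+|$ counts candidates with $A_i=1, Y_i>\bar Y$ plus $A_i=0, Y_i<\bar Y$, while $|\mathcal R_t^-|$ counts the complementary signs. For a \emph{null} subject, $A_i$ is an independent fair coin, so conditionally on $|Y_i - \bar Y|$ the sign of $\widehat\Delta_i$ is a fair coin; hence among nulls in any window, $|\mathcal R_t^-|$ and $|\mathcal R_t^+|$ are roughly equal. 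For a non-null with $\mu>0$: roughly half carry $A_i=1$ (and these, if $Y_i$ is large, land in $\mathcal R_t^+$), and the other half carry $A_i=0$ and behave like nulls (their $Y_i \sim N(0,1)$, so their signs split evenly and only accidentally land in $\mathcal R_t^+$). This is exactly why the power caps at $1/2$ overall but reaches $1$ among treated subjects: the control-arm non-nulls are statistically indistinguishable from nulls. I would then argue that the "effective" multiple-testing problem — treated non-nulls with mean-$\mu$ signal versus everyone else as noise — has non-null fraction $\asymp \tfrac12 n^{-\beta}$, i.e. sparsity parameter still $\beta$ (the factor $\tfrac12$ doesn't change the exponent), and signal strength still $r$. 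Applying the paraphrased dichotomy of \cite{arias2017distribution,rabinovich2020optimal} (which I may assume), the treated-arm power goes to $0$ if $r<\beta$ and to $1$ if $r>\beta$; multiplying by the asymptotic treated fraction $1/2$ of the non-nulls gives the overall claim. For the linear-BH baseline I would observe that with no covariates $\widehat l^T, \widehat l^C$ are just arm means, $\widehat\Delta_i^{\mathrm{BH}}$ equals $Y_i - \widehat l^C$ for treated subjects and $\widehat l^T - Y_i$ for controls, the $p$-values are monotone in these, and the same arm-asymmetry makes BH reject (asymptotically) exactly the treated non-nulls under $r>\beta$, yielding identical power.

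The main obstacle I expect is the rigorous bookkeeping of the stopping time $\tau$ and the crossfit $\alpha/2$ split: I need to show the window does not stop too early (killing power when $r>\beta$) and does stop (no spurious rejections when $r<\beta$), uniformly over the randomness of $\{A_i\}$ and the noise, including controlling the $+1$ in $\widehat{\mathrm{FDR}}$ and the $o(1)$ error from estimating $\bar Y$ rather than the true mean. Concretely, I would (i) show that with the oracle window chosen by a BH-type threshold, $|\mathcal R^-| \approx |\mathcal R_{\mathrm{null-like}}^+|$ by a concentration/symmetrization argument for the fair-coin signs, so $\widehat{\mathrm{FDR}}$ is essentially the true FDP of the treated-non-null-versus-rest problem, then (ii) invoke the matching upper and lower power bounds from the cited papers on that problem. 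I would also need a short argument that the random half-split preserves the $(r,\beta)$ parameters on each half (true up to lower-order terms since $\beta<1$), and that combining the two half-rejection sets just adds the (asymptotically equal) powers. The delicate but routine part is the Gaussian tail / extreme-value estimate showing that in the sparse regime the window's upper edge sits near $\sqrt{2\log n}$, so that $N(\mu,1)$ signal points with $\mu=\sqrt{2r\log n}$ fall above it iff $r>\beta$ — this is precisely the computation underlying the original dichotomy, adapted here to the mixture non-null, and I would point to Appendix~\ref{apd:power_noX} for the details rather than reproduce it.
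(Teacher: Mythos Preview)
Your plan is essentially the paper's proof: both recognize the automated \pei on each half as a BC-type threshold procedure with test statistic $V_i=\widehat\Delta_i$, verify that the null and (mixture) non-null survival functions satisfy the asymptotic generalized-Gaussian tail property, and then invoke the results of \cite{arias2017distribution} (their Theorem~1 for the $r<\beta$ impossibility and Theorem~3 for the $r>\beta$ regime, with a separate limit-sup argument showing the stopping threshold must diverge so overall power cannot exceed $1/2$). One small slip to fix: removing the smallest $|\widehat\Delta_i|$ means the candidate set retains subjects with \emph{large} $|Y_i-\bar Y|$ (two tails), not a symmetric window around $\bar Y$.
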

\begin{remark}
Power of both methods cannot converge to a value larger than $1/2$ because without covariates, we cannot differentiate between the subjects with zero effect (whose outcome follows standard Gaussian regardless of treated or not) and the subjects with positive effects that are not treated (which also follows standard Gaussian). And the proportion of untreated subjects among those with positive effects is $1/2$ because of the assumed randomization.
\end{remark}

The above theorem discusses the case where there are no covariates to help guess which untreated subjects have positive effects. Next, we consider the case with an ``ideal'' covariate~$X_i$: its value corresponds to whether a subject is a non-null (having positive effect) or not, $X_i = \one\{\mu_i > 0\}$. Here, we design the selection strategy (for step~8 of Algorithm~\ref{alg:pei}) as a function of the covariates, because we hope that subjects with the similar covariates have similar treatment effects. Specifically, for the \pei implemented on~$\mathcal{I}$, we learn a prediction of~$\widehat \Delta_j$ by $X_j$ using non-candidate subjects $j \in \mathcal{II}$: $\text{Pred}(x) = \tfrac{1}{|\mathcal{II}|}\sum_{i \in \mathcal{II}}\widehat \Delta_j \one\{X_j = x\}$, where $x = \{0,1\}$. Then for candidate subjects $i \in \mathcal{I}$, we exclude the ones whose $\text{Pred}(X_i)$ are lower. As we integrate information among subjects with the same covariate value, all non-null subjects (i.e., those with $X_i = 1$) would excluded after the nulls (with probability tending to one), regardless of whether they are treated or not; hence we achieve power one. 
\begin{theorem} \label{thm:power_oracle_X}
Given a fixed FDR level $\alpha \in (0,1)$ and let the number of subjects $n$ go to infinity. With a covariate $X_i = \one\{\mu_i > 0\}$, 
the power of the automated \peiZero converges to one for any fixed $r \in (0,1)$ and $\beta \in (0,1)$. In contrast, the power of the linear-BH procedure goes to zero if $r < \beta$. (When $r > \beta$, power of both methods converges to one.)  
\end{theorem}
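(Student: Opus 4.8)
The strategy is to pin down, on a single event of probability tending to one, the approximate sampling distributions of every statistic the two procedures manipulate, and then read off each procedure's power. Write $n_1=n^{1-\beta}$ and $n_0=n-n_1$ for the numbers of non-nulls and true nulls, and $n_1(\mathcal I),n_0(\mathcal I),\dots$ for their analogues in each half after the random split; since the split and the assignments $A_i$ are independent coin flips and $n_1\to\infty$ (as $\beta<1$), binomial and hypergeometric tail bounds place all these counts, and the numbers of treated non-nulls in each half, within a $(1\pm o(1))$ factor of their means w.h.p. Because $\widehat m(x)$ is the empirical average of $Y_i$ over all subjects (treated and untreated) with $X_i=x$, we get $\widehat m(0)=o(1)$ and $\widehat m(1)=\mu/2+O(n_1^{-1/2})=\mu/2+o(1)$ (recall $\mu=\sqrt{2r\log n}\to\infty$); hence $E_i=Y_i-\widehat m(X_i)$ is approximately $N(0,1)$ for a null, so $\widehat\Delta_i\approx N(0,4)$, and approximately $N(\pm\mu/2,1)$ for a non-null that is treated (respectively untreated), so $\widehat\Delta_i=\pm 2E_i\approx N(\mu,4)$ in either case; in particular $\mathbb P(\widehat\Delta_i>0)=\Phi(\mu/2)\to 1$ for every non-null. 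The parallel computation for linear-BH gives that $\widehat l^C(0),\widehat l^C(1),\widehat l^T(0)$ are all $o(1)$ while $\widehat l^T(1)=\mu+o(1)$ --- with the ideal covariate the imputation recovers the treated mean of the non-null group --- so $\widehat\Delta_i^{\mathrm{BH}}\approx N(0,1)$ for every null and $\widehat\Delta_i^{\mathrm{BH}}\approx N(\mu,1)$ for every non-null (now including the untreated ones), with $\widehat{\mathrm{Var}}(\widehat\Delta_i^{\mathrm{BH}})=1+o(1)$ for all $i$; thus $P_i=1-\Phi(T_i)$ with $T_i\approx N(\mu_i,1)$ and $\mu_i\in\{0,\mu\}$.

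For the automated \peiZero, in the run on $\mathcal I$ each step removes the candidate minimizing the score $\text{Pred}(X_i)$ (the average of $\widehat\Delta_j$ over non-candidates $j$ with $X_j=X_i$). By the above $\text{Pred}(1)=\mu+o(1)>0$ and $\text{Pred}(0)=o(1)$, and both fluctuate by $o(1)$, so w.h.p.\ every non-null candidate has strictly larger $\text{Pred}$ than every null candidate; hence the first $n_0(\mathcal I)$ exclusions are exactly the true nulls (in some order). The procedure does not stop at $t=0$, where $\widehat{\mathrm{FDR}}(\mathcal R_0(\mathcal I))\to 1>\alpha/2$; and at $t=n_0(\mathcal I)$, when only non-nulls remain, $\widehat{\mathrm{FDR}}(\mathcal R_t(\mathcal I))=\frac{n_1(\mathcal I)(1-\Phi(\mu/2))+1}{n_1(\mathcal I)\Phi(\mu/2)}(1+o(1))=o(1)<\alpha/2$, so the stopping time satisfies $\tau\le n_0(\mathcal I)$ w.h.p. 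At that $\tau$ all $n_1(\mathcal I)$ non-nulls still lie in $\mathcal R_\tau(\mathcal I)$, so $\mathcal R_\tau^+(\mathcal I)$ contains every non-null with $\widehat\Delta_i>0$, a $\Phi(\mu/2)(1+o(1))\to 1$ fraction of the non-nulls in $\mathcal I$. The same holds on $\mathcal{II}$; combining, $|\mathcal R_\tau^+\cap\pos|/|\pos|\to 1$, so the power of the automated \peiZero tends to one for every fixed $r,\beta\in(0,1)$.

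For linear-BH, the first paragraph shows that, on the high-probability event, it is --- up to $o(1)$ perturbations of the nearly common $\widehat{\mathrm{Var}}_i$ --- a threshold procedure applied to statistics $T_i\approx N(\mu_i,1)$ with precisely the sparse Gaussian sequence parameters $\mu=\sqrt{2r\log n}$ and $n_1/n=n^{-\beta}$. The threshold-procedure characterization paraphrased above from \cite{arias2017distribution,rabinovich2020optimal} then gives power $\to 0$ when $r<\beta$; this is the same bound that governs the no-covariate case of Theorem~\ref{thm:power_noX}, so the step parallels that proof. When $r>\beta$ the standard BH power computation --- most non-null $p$-values concentrate near $n^{-r+o(1)}$, the BH cutoff at rank of order $n_1$ is of order $n^{-\beta}$, and $r>\beta$ drives essentially all non-null $p$-values below it --- gives power $\to 1$; the gain over the value $1/2$ of Theorem~\ref{thm:power_noX} is exactly that the ideal covariate moved the untreated non-nulls from $N(0,1)$ to $N(\mu,1)$.

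The main obstacle is not any single deep step but the uniform bookkeeping: one must secure the concentration of $\text{Pred}$, of the counts $|\mathcal R_t^{\pm}(\mathcal I)|$ \emph{uniformly over} $t\le n_0(\mathcal I)$, and of $\widehat m,\widehat l^T,\widehat l^C$ and $\widehat{\mathrm{Var}}$, all simultaneously on one event of probability $\to 1$; and one must check that the ties --- every null shares the value $\text{Pred}(0)$, every non-null shares $\text{Pred}(1)$, and $\tau$ may land a little before the last null is removed --- are harmless, which they are because any $\tau\le n_0(\mathcal I)$ leaves every non-null in the candidate set, which is all the power argument needs. A second, minor, point is turning linear-BH into a genuine threshold procedure despite the subject-dependent $\widehat{\mathrm{Var}}_i$, e.g.\ by sandwiching it between two exact threshold procedures whose cutoffs differ by $o(1)$.
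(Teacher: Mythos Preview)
Your proposal is correct and follows essentially the same route as the paper's proof. Both arguments (i) show that $\text{Pred}(1)\to\mu$ and $\text{Pred}(0)\to 0$ so that all nulls are excluded before any non-null, (ii) check that once only non-nulls remain the estimated FDR falls below $\alpha/2$, and (iii) conclude that the rejected fraction of non-nulls is the proportion with positive $\widehat\Delta_i$, which tends to one; for linear-BH both reduce to the sparse Gaussian sequence result of \cite{arias2017distribution}. The differences are cosmetic: the paper packages step~(ii) as a separate lemma (all-non-null power $\to\Phi(\mu)$), whereas you do it inline; and your variance calculation $\widehat\Delta_i\approx N(\mu,4)$ and hence $\mathbb P(\widehat\Delta_i>0)=\Phi(\mu/2)$ is in fact the correct one (the paper's informal discussion writes $N(\mu,1)$ and $\Phi(\mu)$, which is a harmless slip since both tend to one).
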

Here is a short informal argument for why our power goes to one (see detailed proof in Appendix~\ref{apd:power_oracle_X}). Since the nulls can be excluded before the non-nulls, we focus on the test statistics of the non-nulls. Let $\overset{d}{\to}$ denote convergence in distribution. The estimated effect $\widehat \Delta_i \overset{d}{\to} N(\mu, 1)$ for each non-null (since in the notation of Algorithm~\ref{alg:pei}, $\widehat m(X_i = 1)$ converges to $\mu/2$ for the non-nulls, and thus, $E_i \overset{d}{\to} N(\mu/2, 1)$ for those with $A_i = 1$, and $E_i \overset{d}{\to} N(-\mu/2, 1)$ for those with $A_i = 0$.) Hence, at the time $t_0$ right after all the nulls are excluded (and all the non-nulls are in~$\mathcal{R}_{t_0}$), the proportion of positive estimated effects $|\mathcal{R}_{t_0}^+|/|\mathcal{R}_{t_0}|$ converges to $\Phi(\mu)$, where $\Phi$ denotes the CDF of a standard Gaussian. We can stop before $t_0$ and identify subjects in $\mathcal{R}_{t_0}^+$ if $\widehat{\mathrm{FDR}}(\mathcal{R}_{t_0})$, as a function of $|\mathcal{R}_{t_0}^+|/|\mathcal{R}_{t_0}|$, is less than $\alpha$, which holds when $\Phi(\mu) > \tfrac{1}{1 + \alpha}$. The power goes to one because $\mu$ grows to infinity for any fixed $r \in (0,1)$, so that for large~$n$, we stop before~$t_0$ and the proportion of rejected non-nulls $|\mathcal{R}_{t_0}^+|/|\mathcal{R}_{t_0}|$ (which converges to~$\Phi(\mu)$ as argued above) also goes to one. In short, the power guarantee does not depend on the sparsity~$\beta$ because of the designed selection strategy that incorporates covariates.

We note that our theoretical power analysis discusses two extreme cases, one with no covariate to assist the testing procedure (Theorem~\ref{thm:power_noX}), and the other with a single ``ideal'' covariate that equals the indicator of non-nulls (Theorem~\ref{thm:power_oracle_X}). The numerical experiments in Section~\ref{sec:sim} consider more practical settings, where the analyst is provided with a mixture of covariates informative about the heterogeneous effect ($X_i(1)$ and $X_i(3)$ in our example) and some uninformative ones; still, the \peiZero tends to have reasonably high power. So far, the paper discusses the setup of a randomized experiment where each subject has 1/2 probability to be treated; in the following, we present the variant of \peiZero for observational studies, where the probabilities of receiving treatment can depend on covariates and unknown.


\section{\peiZero in observational studies}
\label{sec:propensity_unknown}
For clear notation, we denote the true propensity score (probability of receiving treatment) for subject~$i$ as $\pi_i$, and the estimated one as $\widehat \pi_i$ (which we introduce soon). For the setup in observational studies, we introduce an alternative set of assumptions to replace assumption in~\eqref{eq:randomize1}:
\begin{enumerate}
    \item[(iii)] conditional on covariates, treatment assignments are independent:
    \begin{align} \label{eq:randomized3}
        \mathbb{P}[(A_1, \ldots, A_n) = (a_1, \ldots, a_n) \mid X_1, \ldots, X_n] = \prod_{i=1}^n \mathbb{P}(A_i = a_i \mid X_1, \ldots, X_n) =  \prod_{i=1}^n \pi_i,
    \end{align}
    for any $(a_1, \ldots, a_n) \in \{0,1\}^n$ and $\{\pi_i\}_{i=1}^n$ can be unknown. 
    
    \item[(iv)] the propensity scores are bounded away from $0$ and $1$:
    \begin{align} \label{eq:prop_bound}
        0 < \pi_{\min} \leq \pi_i \leq \pi_{\max} < 1 \text{ for all } i \in [n].
    \end{align}
\end{enumerate}

Because the propensity scores $\pi_i$ are unknown, we estimate them using the revealed data --- specifically in the cross-fitting framework using the complete data for subjects in $\mathcal{II}$.  To be exact, we modify the \peiZero in algorithm~\ref{alg:para} in two components:
\begin{itemize}
    \item prior to step~2 of implementing the \pei, we estimate the bounds for the propensity scores as $\widehat{\pi_{\min}}(\mathcal{I})$ and $\widehat{\pi_{\min}}(\mathcal{I})$ by the complete data in $\mathcal{II}$. For example, we can estimate individual propensity scores by a logistic regression on covariates $X_j$ using the complete data from non-candidate subjects $j \in \mathcal{II}$, and take their minimum and maximum as the estimations; and
    \item we modify the implementation of Algorithm~\ref{alg:pei} in the FDR estimator:
    \begin{align} \label{eq:fdr_propoensity_estA}
    \widehat{\mathrm{FDR}}_t^{\widehat \pi}(\mathcal{R}_t(\mathcal{I})) := \left(\frac{1}{1 - \max\{1 - \widehat{\pi_{\min}}(\mathcal{I}), \widehat{\pi_{\max}}(\mathcal{I})\}} - 1\right)\frac{|\mathcal{R}_t^-(\mathcal{I})| + 1}{|\mathcal{R}_t^+(\mathcal{I})| \vee 1};
    \end{align}
\end{itemize}
and similarly modify the procedure on set $\mathcal{II}$. We call the resulting algorithm \peiZeroUnknowP, which have asymptotic FDR control when the propensity scores are well estimated (proof in Appendix \ref{apd:proof-crossfit-unknown}). Note that \peiZeroUnknowP does not involve a modification of the treatment effect estimator $\widehat \Delta_j$, which will not affect the FDR control as stated in the theorem below. As a future direction, it would be an interesting extension to modify $\widehat \Delta_j$ and utilize the information of estimated propensity scores, and potentially improve the identification power.



\begin{theorem} \label{thm:crossfit-unknown}
Suppose there are $n$ samples for identification. In the cross-fitting framework, let $\widehat{\pi_{\min}}(\mathcal{I})$ and $\widehat{\pi_{\max}}(\mathcal{I})$ be the estimated lower and upper bound of the propensity scores based on data information in $\mathcal{F}_{0}(\mathcal{I})$. Denote the estimation error as 
\begin{align*}
\epsilon_n^{\pi}(\mathcal{I}) := \mathbb{E}_{\mathcal{F}_0(\mathcal{I})}\left[\max\{\widehat{\pi_{\min}}(\mathcal{I}) - \pi_{\min}, \pi_{\max} - \widehat{\pi_{\max}}(\mathcal{I}), 0\}\right],
\end{align*}
and similarly define $\epsilon_n^{\pi}(\mathcal{II})$. Let $\epsilon_n^{\pi} = \epsilon_n^{\pi}(\mathcal{I})+\epsilon_n^{\pi}(\mathcal{II})$, and
the FDR of \peiZeroUnknowP is upper bounded:
\begin{align}
    \mathrm{FDR} \leq \alpha \left\{1 + \epsilon_n^{\pi} \left(\frac{4}{ \max\{1 - \pi_{\min}, \pi_{\max}\}\} (1 - \max\{1 - \pi_{\min}, \pi_{\max}\})}\right) \right\},
\end{align}
when $\max\{\epsilon_n^q(\mathcal{I}), \epsilon_n^q(\mathcal{II})\}  \leq  \frac{1}{2}\max\{1 - \pi_{\min}, \pi_{\max}\}$,
under assumptions~\eqref{eq:randomized3}, ~\eqref{eq:prop_bound}, and ~\eqref{eq:randomize2} in observational studies, for the null hypothesis~\eqref{eq:hp_twoside}. 
\end{theorem}

\begin{corollary}
\peiZeroUnknowP has asymptotic FDR control for the zero-effect null~\eqref{eq:hp_twoside} when the estimation of propensity score bounds is consistent in the sense that $\epsilon_n^{\pi} \to 0$ as sample size $n$ goes to infinity.
\end{corollary}

\begin{remark}
The \peiZeroUnknowP would have a larger FDR than the target level if the propensity score estimation is not consistent, and this inflation increases as the true bounds of propensity score get close to $0$ and~$1$. Nonetheless, note that the inflation only depends on the minimum and maximum of the propensity scores (rather than for each individual), and only concerns the \textit{one-sided} error for their estimation. Intuitively, we could have exact FDR control if the estimated minimum propensity score is lower than the true minimum and the estimated maximum larger than the true maximum --- if the estimation is conservative to capture the extreme cases. Such desirable FDR control comes with a risk of having lower power because the conservative propensity score estimations would increase the FDR estimation in \peiZeroUnknowP, and in turn, more subjects need to be excluded before claiming rejections.   
\end{remark}

\begin{remark} \label{rmk:MAY}
Another variation is proposed in Appendix~\ref{sec:nonpositive-effect}, as what we call \peiNegUnknowP, which can have a stronger error control guarantee at the cost of reserving (masking) more information for FDR control and could potentially have lower identification power. Specifically, the variant can have doubly robust asymptotic FDR control: either when the propensity scores are well-estimated as above, or when the conditional outcomes given covariates $m(X_i) \equiv \mathbb{E}(Y_i \mid X_i)$ are well-estimated by $\widehat{m}(X_i)$. In addition, another advantage of the \peiNegUnknowP is that it controls FDR for nonpositive effect, as we detail in Appendix~\ref{sec:nonpositive-effect} and show in numerical experiments in the next section.  
\end{remark}


\subsection{Numerical experiments}
We follow the simulation setting in Section~\ref{sec:sim_setup}, except different propensity scores specified as a function of covariates. Let the 
the treatment effect be
\begin{align} \label{eq:bias-sparse-fixC}
    \Delta(X_i) =  15X_i^3(3) \one\{X_i(3) > 1\} - 3X_i(1)/2,
\end{align}
which is the treatment effect in~\eqref{eq:bias-sparse} with $S_\Delta = 3$. Consider the case where subjects with positive effects coincides with those having higher propensity scores:
\begin{align} \label{eq:propensity}
    \pi_i = \pi(X_i) = (1/2 + \delta) \one\{\Delta(X_i) > 0\} + 1/2 \one\{\Delta(X_i) = 0\} + (1/2 - \delta) \one\{\Delta(X_i) < 0\},
\end{align}
where $\delta \in (0, 0.5)$ denotes the deviation of the propensity score bounds to 1/2. 

In the case of unknown propensity scores, several approaches are explored: estimating the propensity scores as in \peiZeroUnknowP and \peiNegUnknowP; and falsely treating all propensity scores as $1/2$ and implementing the original algorithms, referred to as \peiZero and \peiNeg (detailed in Appendix~\ref{sec:nonpositive-effect} for controlling error of nonpositive effects). They are compared with the oracle algorithms where the true propensity scores are plugged into \peiZeroUnknowP and \peiNegUnknowP, denoted as \peiZeroKnowP and \peiNegKnowP.  We are interested in the sensitivity of \peiZero and \peiNeg because we might assume propensity scores to be $1/2$ while they differ in practice.

\peiZeroUnknowP and \peiNegUnknowP with estimated propensity scores appear to control FDR at the target level for their corresponding null hypotheses, respectively (see Figure~\ref{fig:unknownP}). They have less power compared with the \peiZeroKnowP and \peiNegKnowP, which is expected since the latter two methods make use of the true propensity scores.

\begin{figure}[h!]
\centering
    \begin{subfigure}[t]{0.32\textwidth}
        \centering
        \includegraphics[width=1\linewidth]{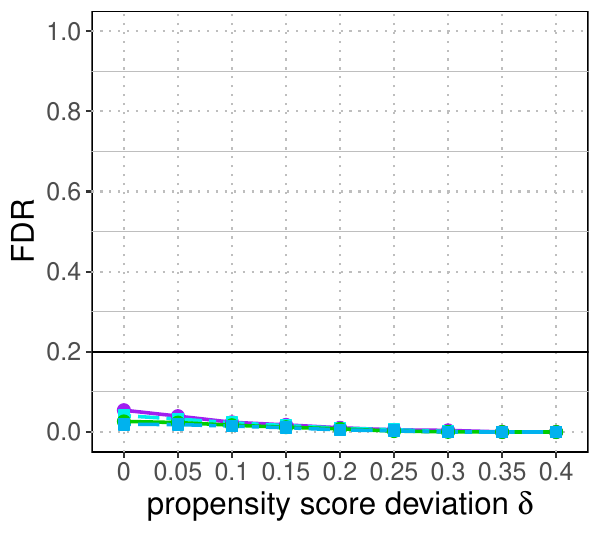}
        \subcaption{FDR for the zero-effect null~\eqref{eq:hp_twoside}.}
        \label{fig:FDR_zero_unknown}
    \end{subfigure}
    \hfill
    \begin{subfigure}[t]{0.32\textwidth}
        \centering
        \includegraphics[width=1\linewidth]{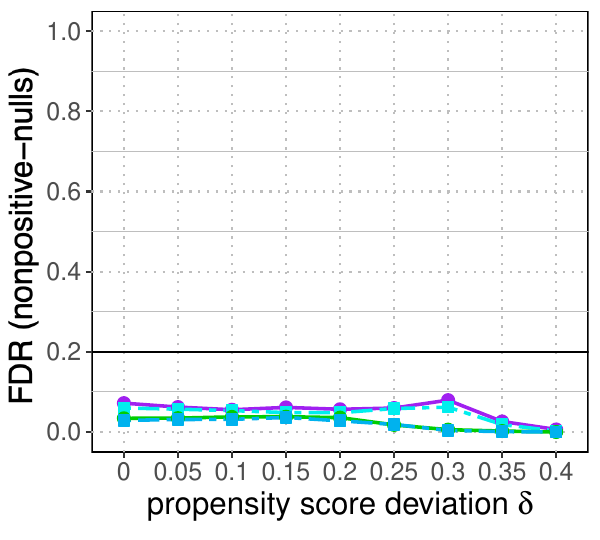}
        \subcaption{FDR for the nonpositive-effect null~\eqref{eq:hp_oneside}.}
        \label{fig:FDR_neg_unknown}
    \end{subfigure}
    \hfill
    \begin{subfigure}[t]{0.32\textwidth}
    \centering
        \includegraphics[width=1\linewidth]{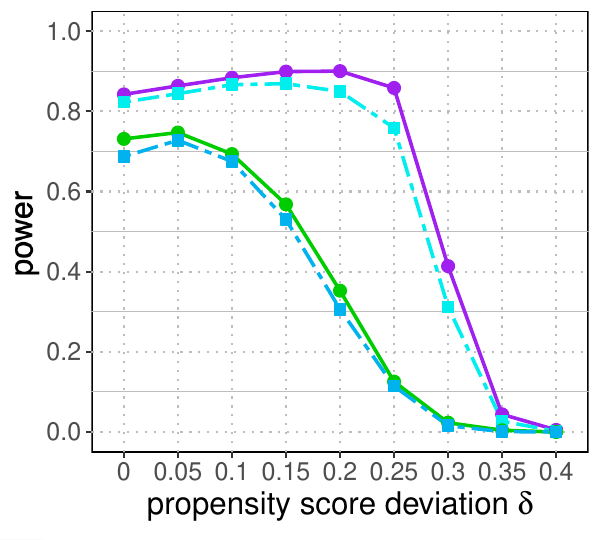}
        \caption{Power of identifying subjects with positive effects.}
        \label{fig:power_pos_unknown}
    \end{subfigure}
    
    \vfill
    \begin{subfigure}[t]{1\textwidth}
        \centering
        \includegraphics[width=0.7\linewidth]{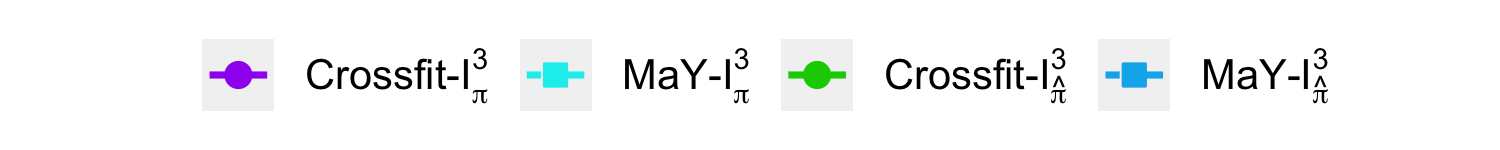}
    \end{subfigure}
    
    \caption{Performance of \peiZeroUnknowP and \peiNegUnknowP, which estimate the propensity scores, compared with \peiZeroKnowP and \peiNegUnknowP, which use the knowledge of the true propensity scores, when the treatment effect specified as model~\eqref{eq:bias-sparse-fixC} and the propensity score deviates from 1/2 by $\delta$ where $\delta$ varies in $\{0, 0.05, 0.1, 0.15, 0.2, 0.25, 0.3, 0.35, 0.4\}$. Both \peiZeroUnknowP and \peiNegUnknowP appears to control FDR, and have similar power. Their power are lower than \peiZeroKnowP and \peiNegUnknowP because the latter additionally use the true propensity scores.}
    \label{fig:unknownP}
\end{figure}

\begin{figure}[h!]
\centering
    \begin{subfigure}[t]{0.32\textwidth}
        \centering
        \includegraphics[width=1\linewidth]{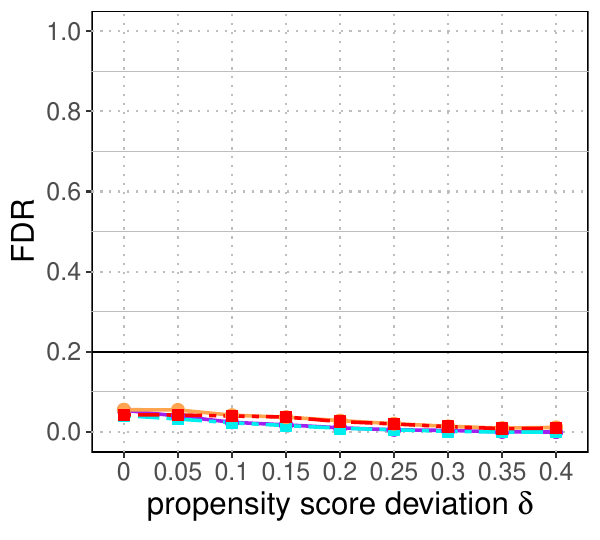}
        \subcaption{FDR for the zero-effect null~\eqref{eq:hp_twoside}.}
        \label{fig:FDR_zero_false}
    \end{subfigure}
    \hfill
    \begin{subfigure}[t]{0.32\textwidth}
        \centering
        \includegraphics[width=1\linewidth]{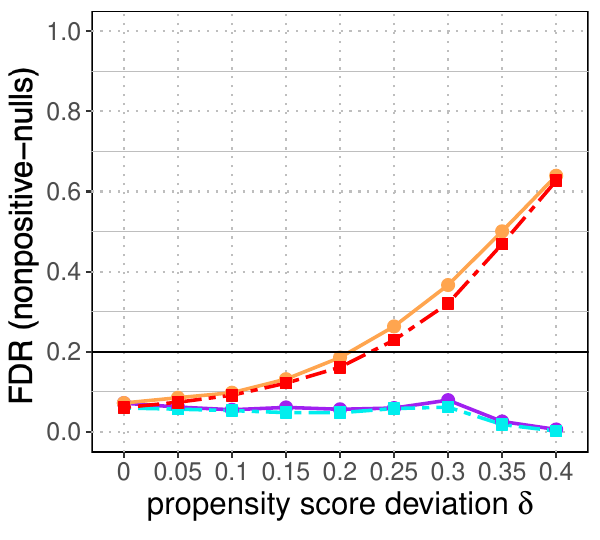}
        \subcaption{FDR for the nonpositive-effect null~\eqref{eq:hp_oneside}.}
        \label{fig:FDR_neg_false}
    \end{subfigure}
    \hfill
    \begin{subfigure}[t]{0.32\textwidth}
    \centering
        \includegraphics[width=1\linewidth]{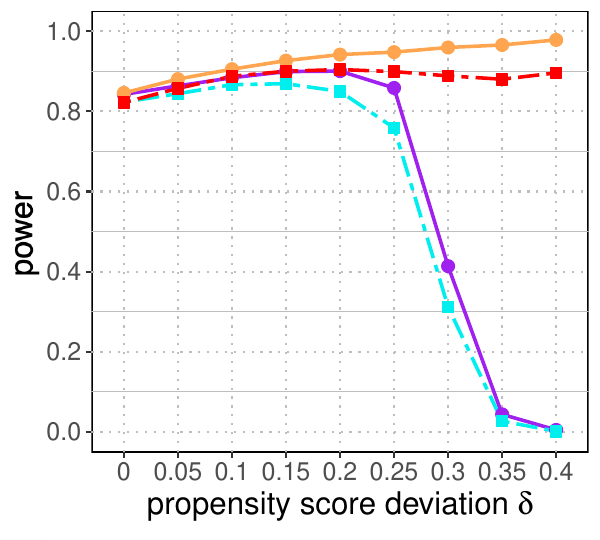}
        \caption{Power of identifying subjects with positive effects.}
        \label{fig:power_pos_false}
    \end{subfigure}
    
    \vfill
    \begin{subfigure}[t]{1\textwidth}
        \centering
        \includegraphics[width=0.7\linewidth]{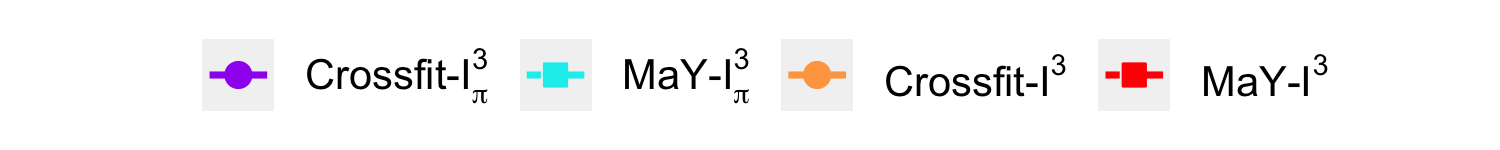}
    \end{subfigure}
    
    \caption{Performance of \peiZero and \peiNeg, which falsely treat all propensity scores as $1/2$, compared with \peiZeroKnowP and \peiNegUnknowP, which use the true propensity scores, when the treatment effect specified as model~\eqref{eq:bias-sparse-fixC} and the propensity score deviates from 1/2 by $\delta$ where $\delta$ varies in $\{0, 0.05, 0.1, 0.15, 0.2, 0.25, 0.3, 0.35, 0.4\}$. Power of the \peiZero and \peiNeg increases because they do not suffer from conservative FDR estimator as $\delta$ increases. Although FDR for the nonpositive-effect null grows to exceed the target level when $\delta$ is larger than 0.2, FDR control for the zero-effect null seems to hold even when the true propensity scores are vastly different from $1/2$.}
    \label{fig:falseP}
\end{figure}

When all propensity scores are falsely treated as $1/2$, we can implement \peiZero and \peiNeg (see Figure~\ref{fig:falseP}). In our experiments, the FDR for the zero-effect null seems to be controlled below the target level even when the true propensity scores are extreme (with $\pi_{\min} = 0.1$ and $\pi_{\max} = 0.9$ when $\delta = 0.4$). It coincides with our claim on doubly robust FDR control once noticing that \peiNeg is equivalent to \peiNegUnknowP when $\widehat \pi_{\min} = \widehat \pi_{\max} = 1/2$. In such a case, the propensity scores are poorly estimated $|\widehat \pi_{\min} - \pi_{\min}| = |\widehat \pi_{\max} - \pi_{\max}| = 0.4$, but FDR can be small when the expected outcome $\mathbb{E}(Y_i \mid \{X_i\}_{i=1}^n)$ is well-estimated by $\widehat m^{-\mathcal{I}}(X_i)$. The FDR for the nonpositive-effect null can exceed the target level when the deviation $\delta$ is large and the propensity score estimation is poor, corresponding to the case where $\pi_{\min} \leq 0.25$ and $\pi_{\max} \geq 0.75$. The power of \peiZero and \peiNeg does not follow the same trend as \peiZeroKnowP and \peiNegKnowP when $\delta$ grows large, because their FDR estimator does not suffer from conservativeness introduced by extreme propensity scores.

\section{FDR control at a subgroup level} 
\label{apd:subgroup}
Our proposed interactive methods control FDR on \textit{individual} level, which means upper bounding the proportion of falsely identified subjects. In this section, we show that the idea of interactive testing can be extended to control FDR on \textit{subgroup} level, where we aim at identifying multiple subgroups with positive effects and upper bounding the proportion of falsely identified subgroups. Recall that FDR control at a subgroup level is studied by \citet{karmakar2018false} as we review in Section~\ref{sec:review}. 

\subsection{Problem setup}
Let there be $G$ non-overlapping subgroups $\mathcal{G}_g$ for $g \in [G] \equiv \{1,\ldots, G\}$. The null hypothesis for each subgroup is defined as zero effect for all subjects within:
\begin{align} \label{hp:subgroup}
    \mathcal{H}_{0g}: H_{0i}^\text{zero} \text{ is true for all } i \in \mathcal{G}_g,
\end{align}
or equivalently, $\mathcal{H}_{0g}: \mathcal{G}_g \subseteq \mathcal{H}_0$ (recall that $\mathcal{H}_0$ is the set of true null subjects). Let $D_g$ be the decision function receiving the values $1$ or $0$ for whether $\mathcal{H}_{0g}$ is rejected or not rejected, respectively, and the FDR at a subgroup level is defined as:
\begin{align*} 
    \mathrm{FDR}^\text{subgroup} := \mathbb{E}\left[ \frac{|\{g \in [G]: \mathcal{G}_g \subseteq \mathcal{H}_0, D_g = 1|}{\max\{|\{g \in [G]: D_g = 1|,1\}}\right].
\end{align*}
Same as the algorithms at an individual level, the algorithms we propose at a subgroup level can be applied to samples that are paired or unpaired. For simple notation, we use $\{Y_i, A_i, X_i\}$ to denote the observed data for subject $i$ when the samples are unpaired, and for pair $i$ when the samples are paired (where $Y_i = \{Y_{i1}, Y_{i2}\}$ and similarly for $A_i$ and $X_i$).

\subsection{An interactive algorithm to identify subgroups}
We first follow the same steps of \citet{karmakar2018false} to define subgroups and generate the $p$-value for each subgroup. Specifically, the subgroups $\mathcal{G}_g$ for $g \in [G]$ is defined using the outcomes and covariates $\{Y_j, X_j\}_{j=1}^n$ (by an arbitrary algorithm or strategy, such as grouping subjects with the same covariates). For each subgroup~$\mathcal{G}_g$, we can compute a $p$-value $P_g$ by the classical Wilcoxon test (or using a permutation test, which obtains the null distribution by permuting the treatment assignment $\{A_i\}_{i=1}^n$). 

The interactive procedure we propose differs from \citet{karmakar2018false} by how we process the $p$-values of the subgroups. We adopt the work of \citet{lei2017star} that proposes an interactive procedure with FDR control for generic multiple testing problems. The key property that allows human interaction while guaranteeing valid FDR control is similar to that in the \pei: the independence between the information used for selection and that used for FDR control. Here with the $p$-values of subgroups, the two independent parts are 
\begin{align*}
    P^1_g := \min\{P_g, 1 - P_g\},
\end{align*}
which is revealed to the analyst for selection and
\begin{align*}
    P^2_g := 2\cdot \one\{P_g < \tfrac{1}{2}\} - 1,
\end{align*}
which is masked (hidden) for FDR control. Notice that for a null subgroup with a uniform $p$-value, $(P^1_g, P^2_g)$ are independent, and we have that 
\begin{align} \label{eq:sign_property_subgroup}
    \mathbb{P}(P^2_g = 1 \mid P^1_g, [Y_i, X_i]_{i \in \mathcal{G}_g}) \leq 1/2,
\end{align}
because the $p$-values obtained by permutating assignments is uniform when conditional on the outcomes and covariates. We remark that the above property is similar to property~\eqref{eq:sign_property_zero} and~\eqref{eq:sign_property_neg} that lead to valid FDR control at an individual level. 

\begin{algorithm}[h]
   \caption{An interactive procedure for subgroup identification.}
   \label{alg:subgroup}
\begin{algorithmic}
   \STATE {\bfseries Initial state:} Explorer (E) knows the covariates, outcomes $\{Y_i, X_i\}_{i=1}^n$.
   \STATE Oracle (O) knows the treatment assignments
   $\{A_i\}_{i=1}^n$.
   \STATE Target FDR level $\alpha$ is public knowledge.
   \STATE {\bfseries Initial exchange:} Set $t=1$.
   \STATE 1.~E defines subgroups $\{\mathcal{G}_g\}_{g=1}^G$ using $\{Y_i, X_i\}_{i = 1}^n$.
   \STATE 2.~Both players initialize $\mathcal{R}_0 = [G]$, and E informs O about the subgroup division.
   \STATE 3.~O compute the $p$-value for each subgroup $\{P_g\}_{g=1}^G$, and decompose each $p$-value as $P^1_g := \min\{P_g, 1 - P_g\}$ and $P^2_g := 2\cdot \one\{P_g < \tfrac{1}{2}\} - 1$. 
   \STATE 4. O then divides $\mathcal{R}_t$ into $\mathcal{R}_t^- := \{g \in \mathcal{R}_t: P^2_g \leq 0\}$ and $\mathcal{R}_t^+ := \{g \in \mathcal{R}_t: P^2_g > 0\}$.
   \STATE 5. O reveals $\{P_g^1\}_{g=1}^G$, $|\mathcal{R}_t^-|$ and $|\mathcal{R}_t^+|$ to E.
   \STATE {\bfseries Repeated interaction:} 6. E checks if {$\widehat{\mathrm{FDR}}^{\text{subgroup}}(\mathcal{R}_t) \equiv \frac{|\mathcal{R}_t^-| + 1}{\max\{|\mathcal{R}_t^+|,1\}} \leq \alpha$}.
   \STATE 7.  If yes, E sets $\tau=t$, reports $\mathcal{R}_\tau^+$ and exits;
   \STATE 8.~Else, E picks any $g_t^* \in \mathcal{R}_{t-1}$ using everything E currently knows.
   \STATE (E tries to pick an $g^*_t$ that they think is null; E hopes that $P_g^2 \leq 0$.)
   \STATE 9.~O reveals $\{A_i\}_{i\in \mathcal{G}_g}$ to E, who also infers $P_g^2$.
   \item 10. E updates ${\mathcal{R}_{t+1} = \mathcal{R}_{t}\backslash \{g_t^*\}}$, and also $|\mathcal{R}^+_{t+1}|$ and $|\mathcal{R}^-_{t+1}|$;
   \STATE 11. Increment $t$ and go back to Step 6.
\end{algorithmic}
\end{algorithm}

Similar to the proposed methods at an individual level, the interactive procedure for subgroups progressively excludes subgroups and recursively estimates the FDR.
Let the candidate rejection set~$\mathcal{R}_t$ be a set of selected subgroups, starting from all subgroups included $\mathcal{R}_0 = [G]$. We interactively shrink~$\mathcal{R}_t$ using the available information:
\begin{align*}
    \mathcal{F}_{t-1}^{\text{subgroup}} = \sigma\left(\{P^1_g, [Y_i, X_i]_{i \in \mathcal{G}_g}\}_{g \in \mathcal{R}_{t-1}}, \{P_g, [Y_j, A_j, X_j]_{j \in \mathcal{G}_g}\}_{g \notin \mathcal{R}_{t-1}}, \sum_{g \in \mathcal{R}_{t-1}} P_g^2\right),
\end{align*}
which masks (hides) the partial $p$-value $P_g^2$ and the treatment assignment $A_i$ for candidate subgroups in $\mathcal{R}_{t-1}$; and the sum $\sum_{g \in \mathcal{R}_{t-1}} P_g^2$ is mainly provided for FDR estimation.
Similar to our previously proposed interactive procedures, the FDR estimator is defined as:
\begin{align} \label{eq:fdr_hat_subgroup}
    \widehat{\text{FDR}}^{\text{subgroup}}(\mathcal{R}_{t}) = \frac{|\mathcal{R}_t^-| + 1}{\max\{|\mathcal{R}_t^+|,1\}},
\end{align}
with $\mathcal{R}_t^+ = \{g \in \mathcal{R}_t: P^2_g = 1\}$ and $\mathcal{R}_t^- = \{g \in \mathcal{R}_t: P^2_g = -1\}$. 
The algorithm shrinks $\mathcal{R}_t$ until time $\tau := \inf\{t: \widehat{\text{FDR}}^{\text{subgroup}}(\mathcal{R}_{t}) \leq \alpha\}$, and identifies only the subgroups in $\mathcal{R}_\tau^+$, as summarized in Algorithm~\ref{alg:subgroup}.
Details of strategies to select subgroups based on the revealed $p$-value and covariates can be found in \citet{lei2017star}. As a comparison, \citet{karmakar2018false} use the same set of $p$-values $\{P_g\}_{g \in [G]}$, and control FDR by the classical BH procedure.

\subsection{Numerical experiments} We compare the performance of our proposed interactive procedure for subgroup identification with the method proposed by \citet{karmakar2018false}, following an experiment in their paper. Suppose each subject is recorded with two discrete covariates $X_i = \{X_i(1), X_i(2)\}$ where $X_i(1) \in \{1, \ldots, 40\}$ takes $40$ levels with equal probability, and $X_i(2)$ is binary with equal probability (for example, $X_i(1)$ could encode the city subject $i$ lives in, and $X_i(2)$ the gender). The treatment effect~$\Delta(X_i)$ is a constant $\delta$ if $X_i(1)$ is even, and we vary $\delta$ in six levels. We conduct the above experiment in two cases: unpaired samples ($n = 2000$) with independent covariates and paired samples ($n = 1000$) whose covariate values are the same for subjects within each pair.

\begin{figure}[h!]
\centering
    \begin{subfigure}[t]{0.3\textwidth}
        \centering
        \includegraphics[width=1\linewidth]{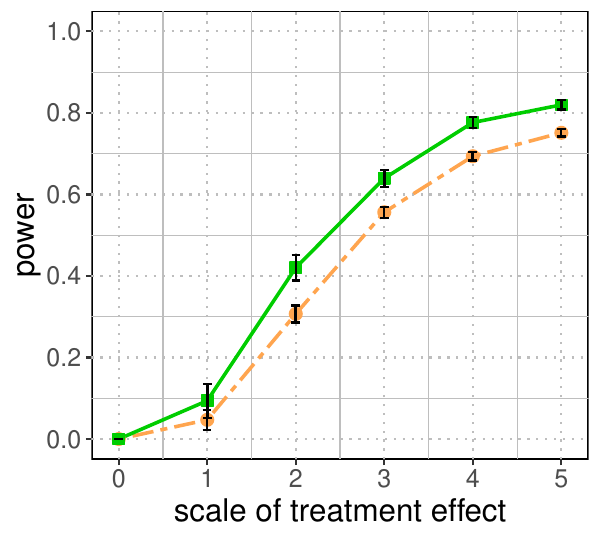}
        \subcaption{Unpaired samples.}
        \label{fig:unpaired_subgroup}
    \end{subfigure}
    \hfill
    \begin{subfigure}[t]{0.3\textwidth}
        \centering
        \includegraphics[width=1\linewidth]{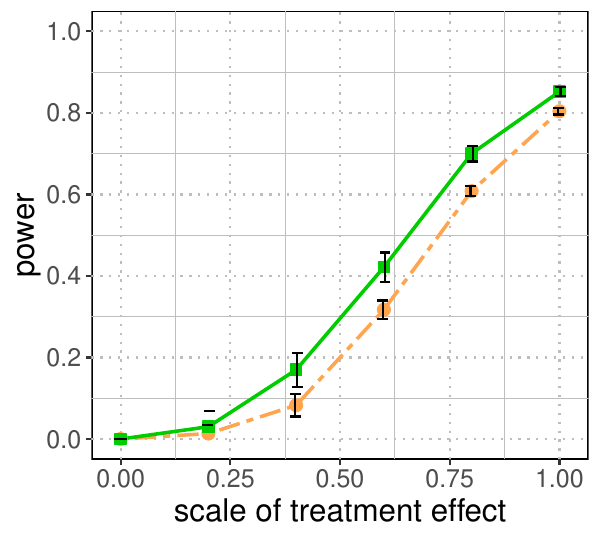}
        \subcaption{Paired samples.}
        \label{fig:paired_subgroup}
    \end{subfigure}
    \hfill
    \begin{subfigure}[t]{0.3\textwidth}
        \centering
        \includegraphics[width=1\linewidth]{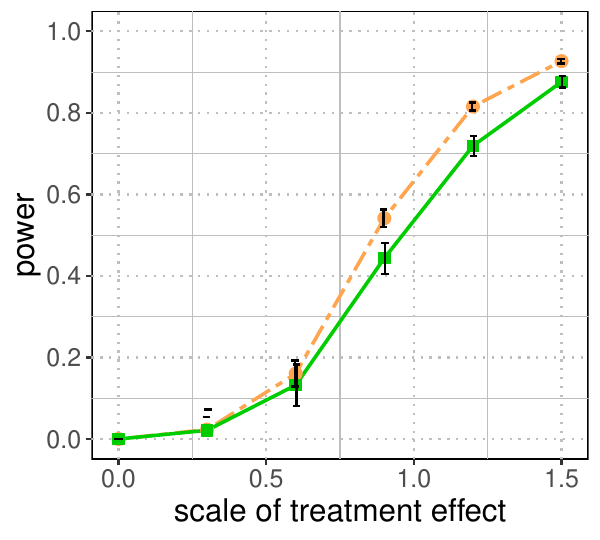}
        \subcaption{Only a few subgroups are non-nulls.}
        \label{fig:few_subgroup}
    \end{subfigure}
    
    \hfill
    \begin{subfigure}[t]{1\textwidth}
        \centering
        \includegraphics[width=0.7\linewidth]{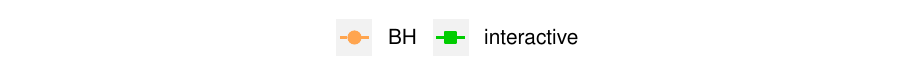}
    \end{subfigure}

    \caption{Performance of methods to identify subgroups with positive effects: the BH procedure and the interactive procedure (for $80$ subgroups defined by the distinct values of covariates). We vary the scale of treatment effect under unpaired or paired samples. In both cases, the interactive procedure can have higher power than the BH procedure. When the number of non-null subgroups is too small (less than 20), the BH procedure can have higher power. The error bar marks two standard deviations from the center.}
    \label{fig:subgroups}
\end{figure}

Recall that the subgroups can be defined by covariates and outcomes. Here, since the covariates are discrete, we define subgroups by different values of $(X_i(1), X_i(2))$, resulting in 80 subgroups. The interactive procedure tends to have higher power than the BH procedure (see Figure~\ref{fig:unpaired_subgroup} and Figure~\ref{fig:paired_subgroup}) because it focuses on the subgroups that are more likely to be the non-nulls using the excluding process, and utilizes the covariates together with the $p$-values to guide the algorithm. Meanwhile, the BH procedure does not account for covariates once the $p$-values are calculated. Nonetheless, the interactive procedure can have lower power when the total number of subgroups that are truly non-null is small. We simulate the case where a subject has a positive effect $\delta$ if $X_i(1)$ a multiplier of $4$ (i.e., $X_i(1)/2$ is even), so that there are 20 non-null subgroups in total (previously 40 non-nulls). The power of the interactive procedure is lower than the BH procedure (see Figure~\ref{fig:few_subgroup}) because the FDR estimator in~\eqref{eq:fdr_hat_subgroup} can be conservative when $|\mathcal{R}^+|$ is small due to a small number of true non-nulls (for example, with FDR control at $\alpha = 0.2$, we need to shrink $\mathcal{R}_t$ until $|\mathcal{R}^-| < 3$ when $|\mathcal{R}^+|$ is around 20). 

A side note is that we define the subgroups by distinct values of the covariates, whereas \citet{karmakar2018false} suggest forming subgroups by regressing the outcomes on covariates using a tree algorithm. In their experiments and several numerical experiments we tried, we find that the number of subgroups defined by the tree algorithm is usually less than ten. However, we think the FDR control is less meaningful when the total number of subgroups is small. To justify our comment, note that an algorithm with valid FDR control at level $\alpha$ can make zero rejection with probability $1 - \alpha$ and reject all subgroups with probability~$\alpha$, which can happen when the total number of subgroups is small. In contrast, with a large number of subgroups, a reasonable algorithm is unlikely to jump between the extremes of making zero rejection and rejecting all $n$ subgroups; and thus, controlling FDR indeed informs that the proportion of false identifications is low for the evaluated algorithm.

\subsection{Explanation of the higher power achieved by the interactive procedure} Although the interactive procedure and the BH procedure define the same set of subgroups and corresponding $p$-values, the interactive procedure has two properties that potentially improve the power from the BH procedure:
(a)~it excludes possible null subgroups so that it can be less sensitive to a large number of nulls, whereas the BH procedure considers all the subgroups at once; (b)~ the interactive procedure additionally uses the covariates. We can separately evaluate the effect of the above two properties by implementing two versions of Algorithm~\ref{alg:subgroup}, which differ in the strategy to select subgroups in step~$\mathfrak{a}$. Specifically, the adaptive procedure selects the subgroup whose revealed (partial) $p$-value $P_g^1$ is the smallest (not using the covariates); and the interactive procedure selects the subgroup by an estimated probability of the $P_g^2$ to be positive (using the revealed $P_g^1$, the covariates, and the outcomes). 

\begin{figure}[!htb]
\centering
\hspace{1cm}
    \begin{subfigure}[t]{0.32\textwidth}
        \centering
        \includegraphics[width=1\linewidth]{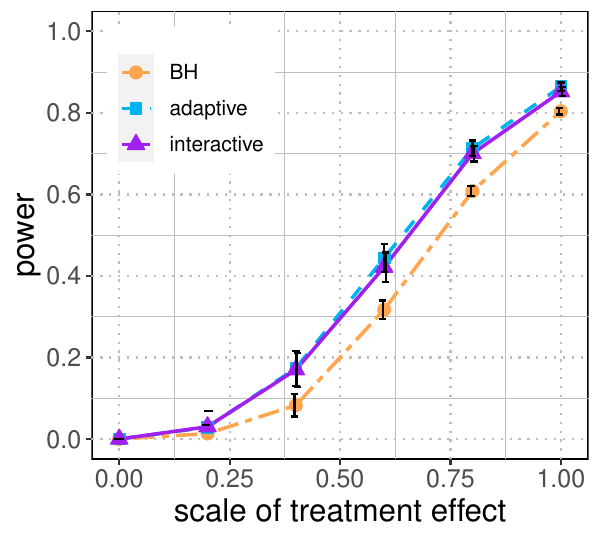}
        \subcaption{Effect as discrete function of covariates.}
        \label{fig:paired_subgroup_discrete}
    \end{subfigure}
    \hfill
    \begin{subfigure}[t]{0.32\textwidth}
        \centering
        \includegraphics[width=1\linewidth]{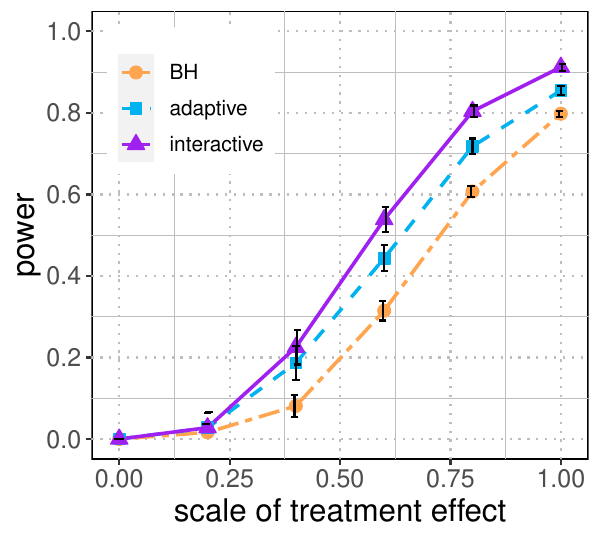}
        \subcaption{Effect as a simpler function of covariates.}
        \label{fig:paired_subgroup_smooth}
    \end{subfigure}
\hspace{1cm}
    
    \caption{Power of two methods for subgroup identification: the BH procedure proposed by~\citet{karmakar2018false}, the adaptive procedure, and the interactive procedure under different types of treatment effect (we define $80$ subgroups by discrete values of the covariates). Our proposed interactive procedure tends to have higher power than the BH procedure because~(1) it excludes possible nulls (shown by higher power of the adaptive procedure than the BH procedure in both plots); and (2)~it additionally uses the covariates (shown when the treatment effect can be well learned as a function of covariates in the right plot).}
    \label{fig:pair-subgroup-bias-sparse}
\end{figure}

To see if both properties of Algorithm~\ref{alg:subgroup} contribute to the improvement of power from the BH procedure, we tested the methods under two simulation settings. Recall that the previous experiment defines a positive treatment effect when the discrete covariate $X_i(1) \in \{1, \ldots, 40\}$ is even. Here, we add another case where the treatment effect is positive when $X_i(1) \leq 20$, so that the density of subgroups with positive effects is the same as previous, but the treatment effect is a simpler function of the covariates. Hence in the latter case, we would expect the interactive procedure to learn this function of covariates rather accurately, and have higher power than the adaptive procedure which does not use the covariates; as confirmed in Figure~\ref{fig:paired_subgroup_smooth}. In the former simulation setting where the treatment effect is not a smooth function of the covariates and hard to be learned, the adaptive procedure and interactive procedure have similar power (Figure~\ref{fig:paired_subgroup_discrete}). Still, they have higher power than the BH procedure because they exclude possible null subgroups.

\section{Additional numerical experiments for variations in treatment effect}
\label{apd:sim_more}
We have seen the numerical results of the proposed methods in previous sections where the treatment effect is defined in~\eqref{eq:bias-sparse} with sparse and strong positive effect, and dense and weak negative effect. This section presents three more examples of the treatment effect.

\subsection{Linear effect}
Let the treatment effect be
\begin{align} \label{eq:linear-both}
    \Delta(X_i) = S_\Delta \cdot [2X_i(1) X_i(2) +  2X_i(3)],
\end{align}
where $S_\Delta > 0$. In this case, all subjects have treatment effects, and the scale correlates with the covariates (with interaction terms) linearly. Thus, the linear-BH procedure has valid error control as shown in Figure~\ref{fig:linear} (unlike other cases with nonlinear treatment effect).

\subsection{Sparse and strong effect that is positive}
Let the treatment effect be
\begin{align} \label{eq:sparse-oneside}
    \Delta(X_i) = S_\Delta \cdot [5X_i^3(3) \one\{X_i(3) > 1\}],
\end{align}
where $S_\Delta > 0$. Here, the subjects with $X_i(3) > 1$ have positive treatment effects. Although linear-BH procedure seems to have high power, its FDR is largely inflated since the assumption of linear correlation does not hold (see Figure~\ref{fig:sparse-oneside}). In contrast, our proposed methods and Selective SeqStep+ have valid FDR control, and our proposed methods have higher power.

\subsection{Sparse and strong effect in both directions}
Let the treatment effect be
\begin{align} \label{eq:sparse-twoside}
    \Delta(X_i) = S_\Delta \cdot [5X_i^3(3) \one\{|X_i(3)| > 1\}],
\end{align}
where $S_\Delta > 0$. Here, the subjects with $X_i(3) > 1$ have positive treatment effects and those with $X_i(3) < -1$ have negative treatment effects; the scale and proportion of effects in both directions are the same.
The power of Selective SeqStep+ is trivial because $|E_i|$ used in ordering cannot inform the direction of treatment effect. In contrast, the power of \peiZero and \peiNeg are slightly lower than the previous setting since there is additionally negative effect in this example (see Figure~\ref{fig:sparse-twoside}). 

\begin{figure}[H]

    \centering
    \begin{subfigure}[t]{1\textwidth}
        \centering
        \includegraphics[width=0.3\linewidth]{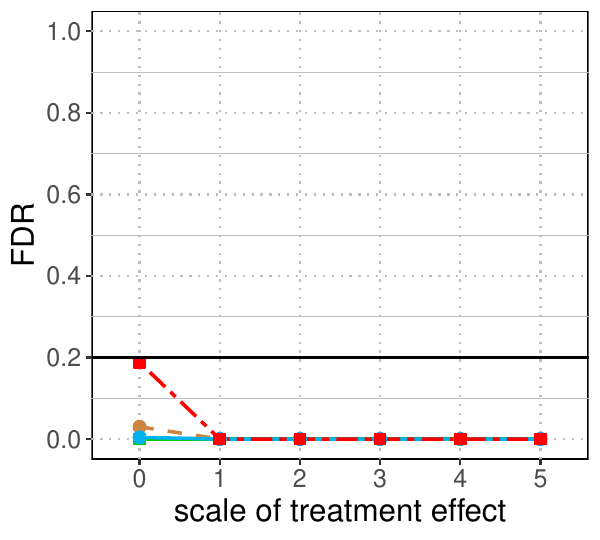}
    \hfill
        \includegraphics[width=0.3\linewidth]{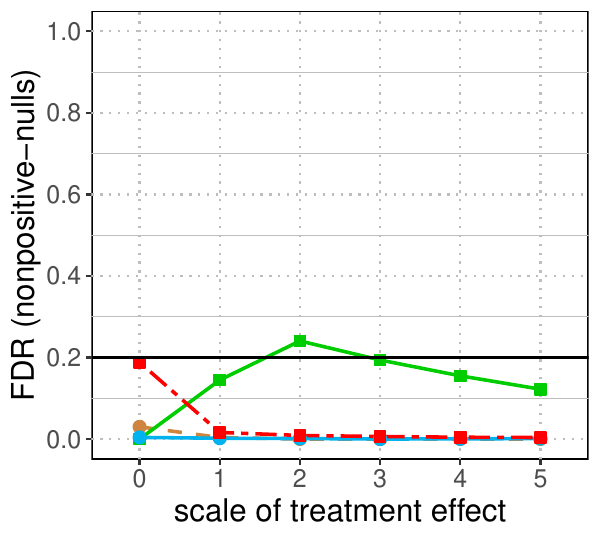}
    \hfill
        \includegraphics[width=0.3\linewidth]{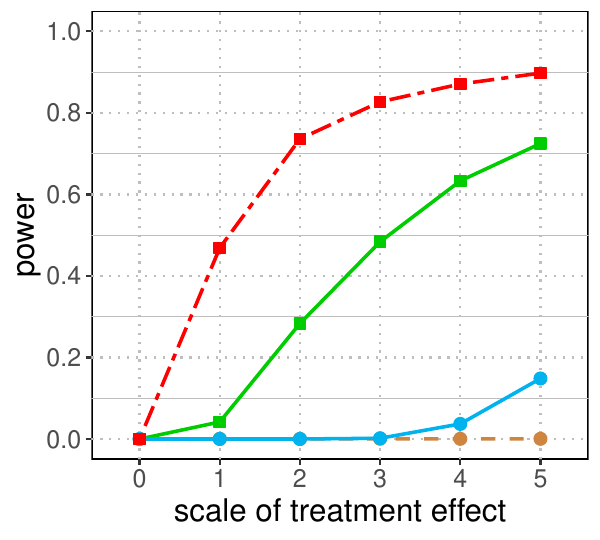}
    \caption{Dense two-sided effect (linear) as in model~\eqref{eq:linear-both}.}
    \label{fig:linear}
    \end{subfigure}
\vfill
     \begin{subfigure}[t]{1\textwidth}
        \centering
        \includegraphics[width=0.3\linewidth]{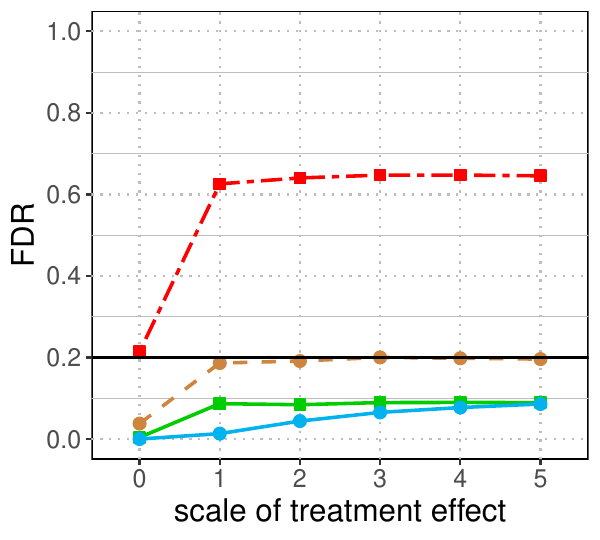}
    \hfill
        \includegraphics[width=0.3\linewidth]{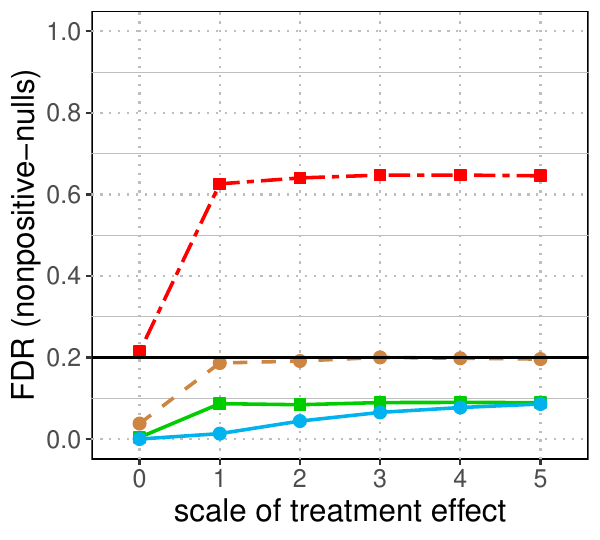}
    \hfill
        \includegraphics[width=0.3\linewidth]{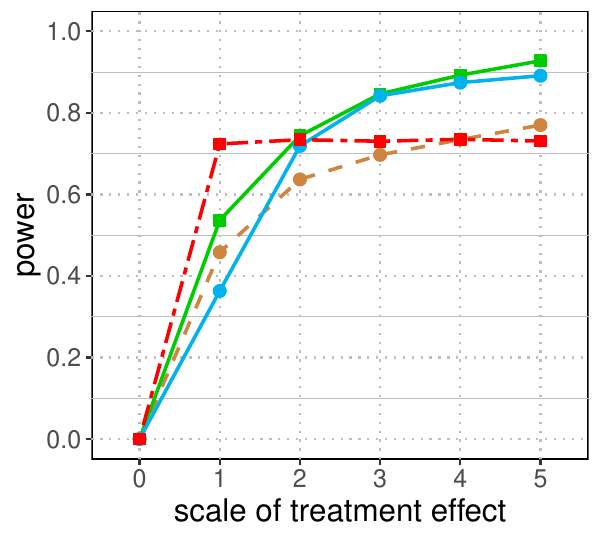}
    \caption{Sparse and strong effect that is positive (nonlinear) in model~\eqref{eq:sparse-oneside}.}
    \label{fig:sparse-oneside}
    \end{subfigure}
\vfill
    \begin{subfigure}[t]{1\textwidth}
        \centering
        \includegraphics[width=0.8\linewidth]{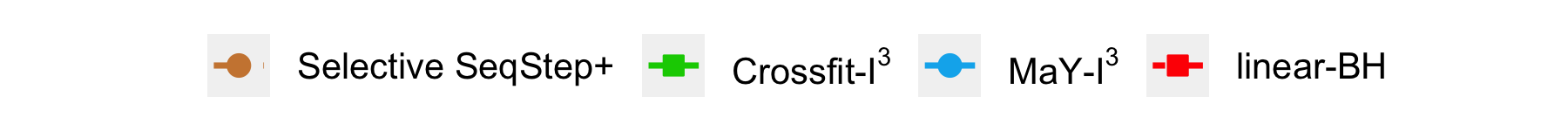}
    \end{subfigure}
\vfill
\end{figure}
\begin{figure}[H]
\ContinuedFloat 
    \begin{subfigure}[t]{1\textwidth}
        \centering
        \includegraphics[width=0.3\linewidth]{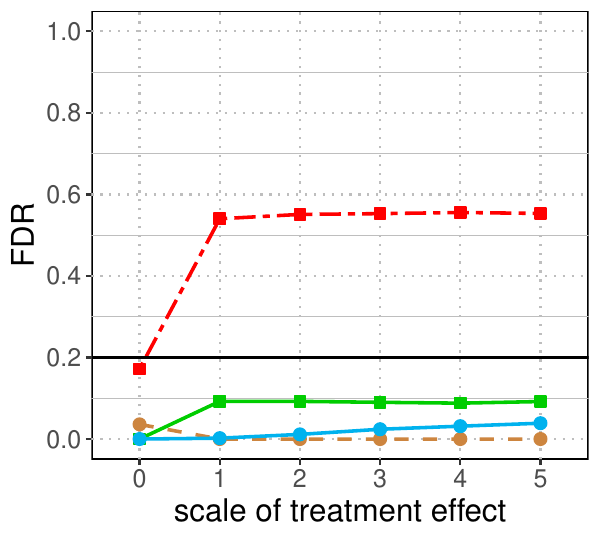}
    \hfill
        \includegraphics[width=0.3\linewidth]{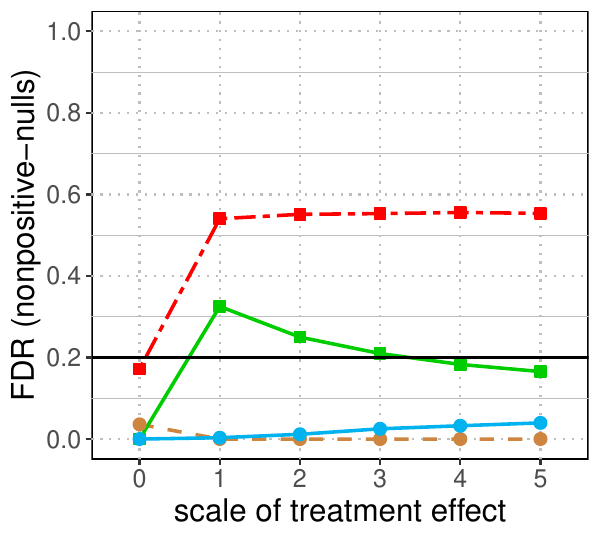}
    \hfill
        \includegraphics[width=0.3\linewidth]{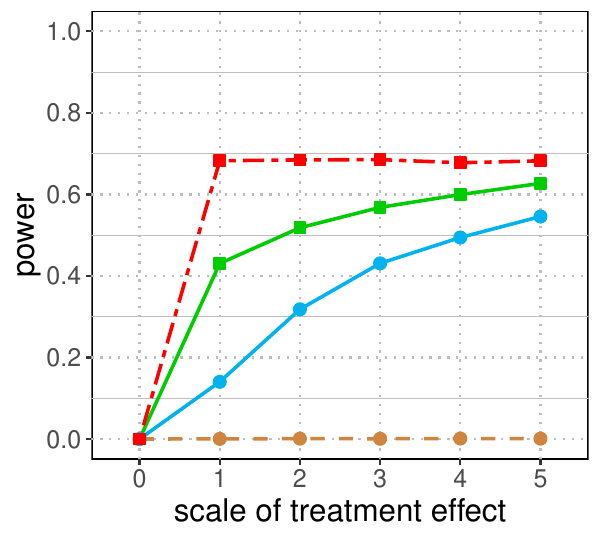}
    \caption{Sparse and strong effect in both directions (nonlinear) in model~\eqref{eq:sparse-twoside}.}
    \label{fig:sparse-twoside}
    \end{subfigure}
   
    \caption{FDR for the zero-effect null~\eqref{eq:hp_twoside} (the first column), and FDR for the nonpositive-effect null~\eqref{eq:hp_oneside} (the second column), and power (the third column) of four methods: linear-BH procedure, Selective SeqStep+, \peiZero, \peiNeg, under three types of treatment effect when varying the scale of treatment effect $S_\Delta$ in $\{0,1,2,3,4,5\}$. When the linear assumption holds as in the first row, the linear-BH procedure has valid FDR control and high power, but its FDR is large when the treatment is a nonlinear function of the covariates as in the last two rows. In contrast, the Selective SeqStep+, \peiZero and \peiNeg have valid FDR control for their target null hypotheses, respectively. Our proposed \peiZero and \peiNeg have higher power than Selective SeqStep+, especially when positive effects have comparable scale as negative effects as in the first and the third rows, because the proposed methods can additionally use the revealed treatment assignments to inform the direction of treatment effects.}
    \label{fig:unpair_varysignal}
\end{figure}

\section{A prototypical application to ACIC challenge dataset} \label{sec:imp}

We implement our proposed methods on datasets generated by \href{https://sites.google.com/view/acic2019datachallenge/home}{Atlantic Causal Inference Conference (ACIC)}, which intend to evaluate methods for average treatment effect(ATE) estimation and uses real data covariates and modified outcomes to simulate cases with heterogeneous treatment effect, heterogeneous propensity scores, etc. We take an example dataset with 500 subjects, each of which is recorded with 22 continuous covariates. The proportion of treated subjects is 0.7, indicating that the propensity scores might not be $1/2$ as in a standard randomized experiment. The actual ATE is $0.1$, rather small compared to the outcomes range $[14, 76]$, but the treatment effect could be positive and large for a subgroup of subjects and our proposed algorithms can be used to identify them.


\begin{figure}[H]
    \begin{subfigure}[t]{1\textwidth}
        \includegraphics[width=0.49\linewidth]{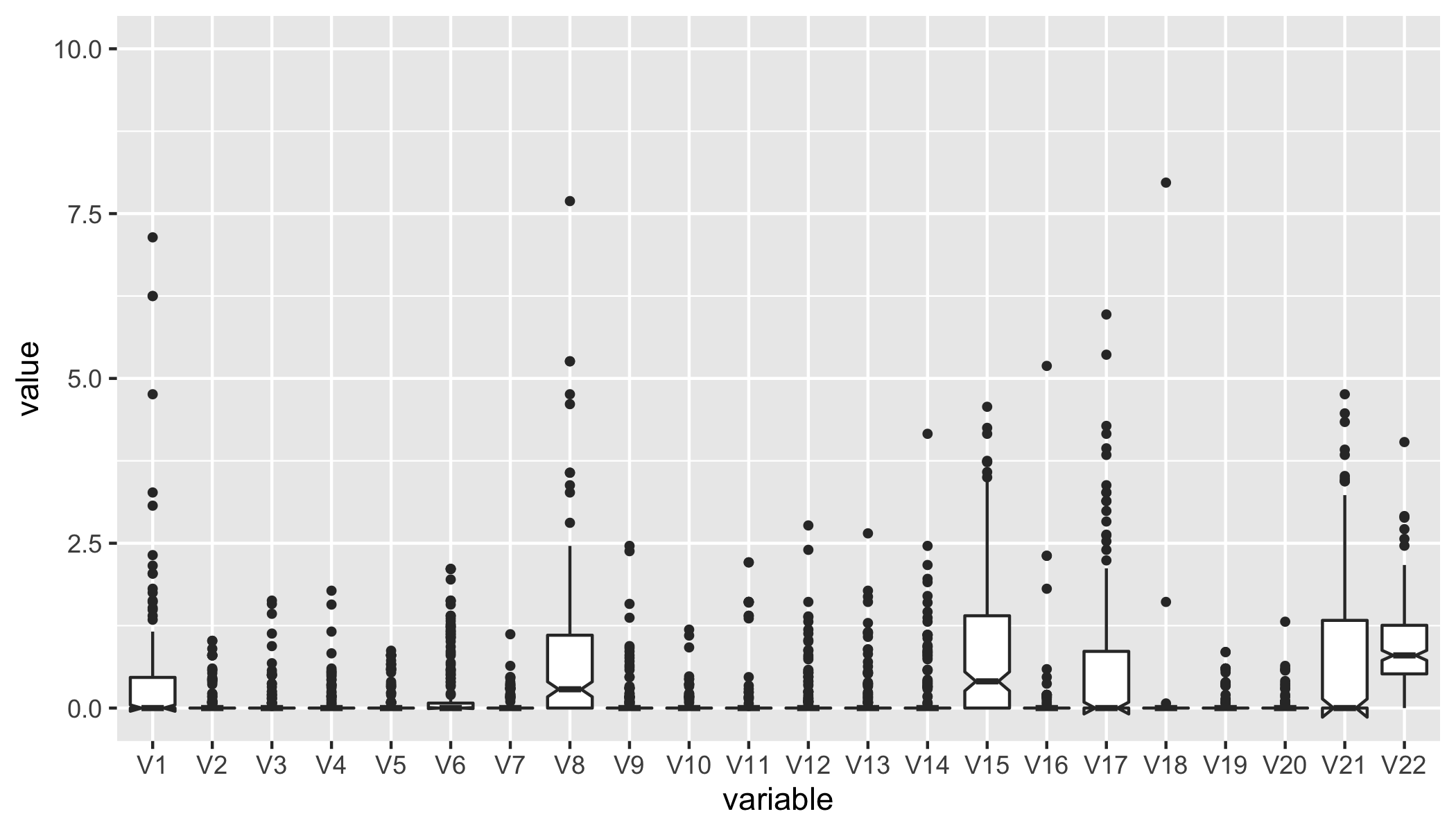}
    \hfill
        \includegraphics[width=0.49\linewidth]{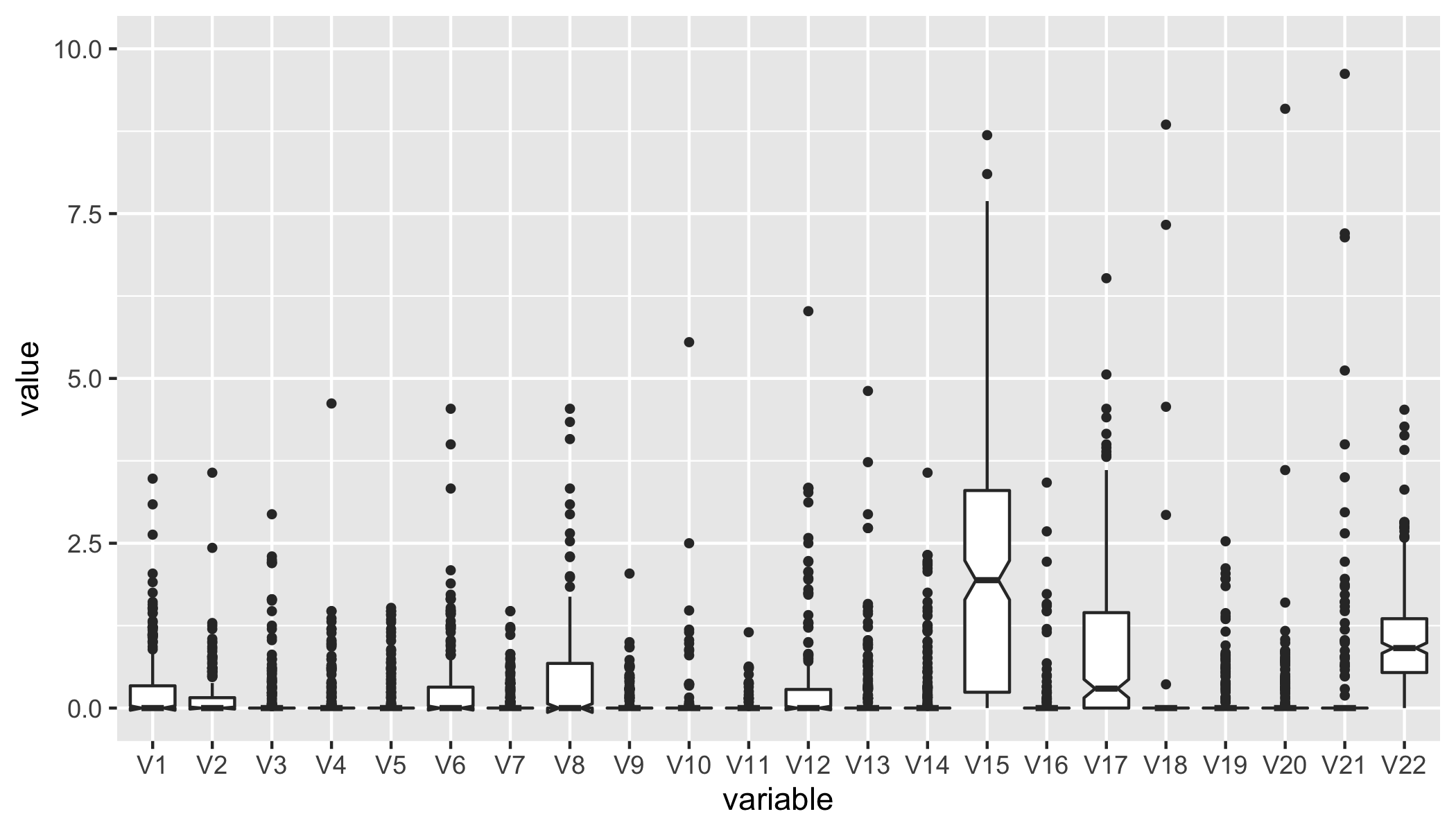}
    \caption{Boxplot of covariates for subjects identified as nonzero effect (left) versus those not being identified (right).}
    \end{subfigure}
\end{figure}
\begin{figure}[H]
\ContinuedFloat
    \begin{subfigure}[t]{1\textwidth}
        \includegraphics[width=0.49\linewidth]{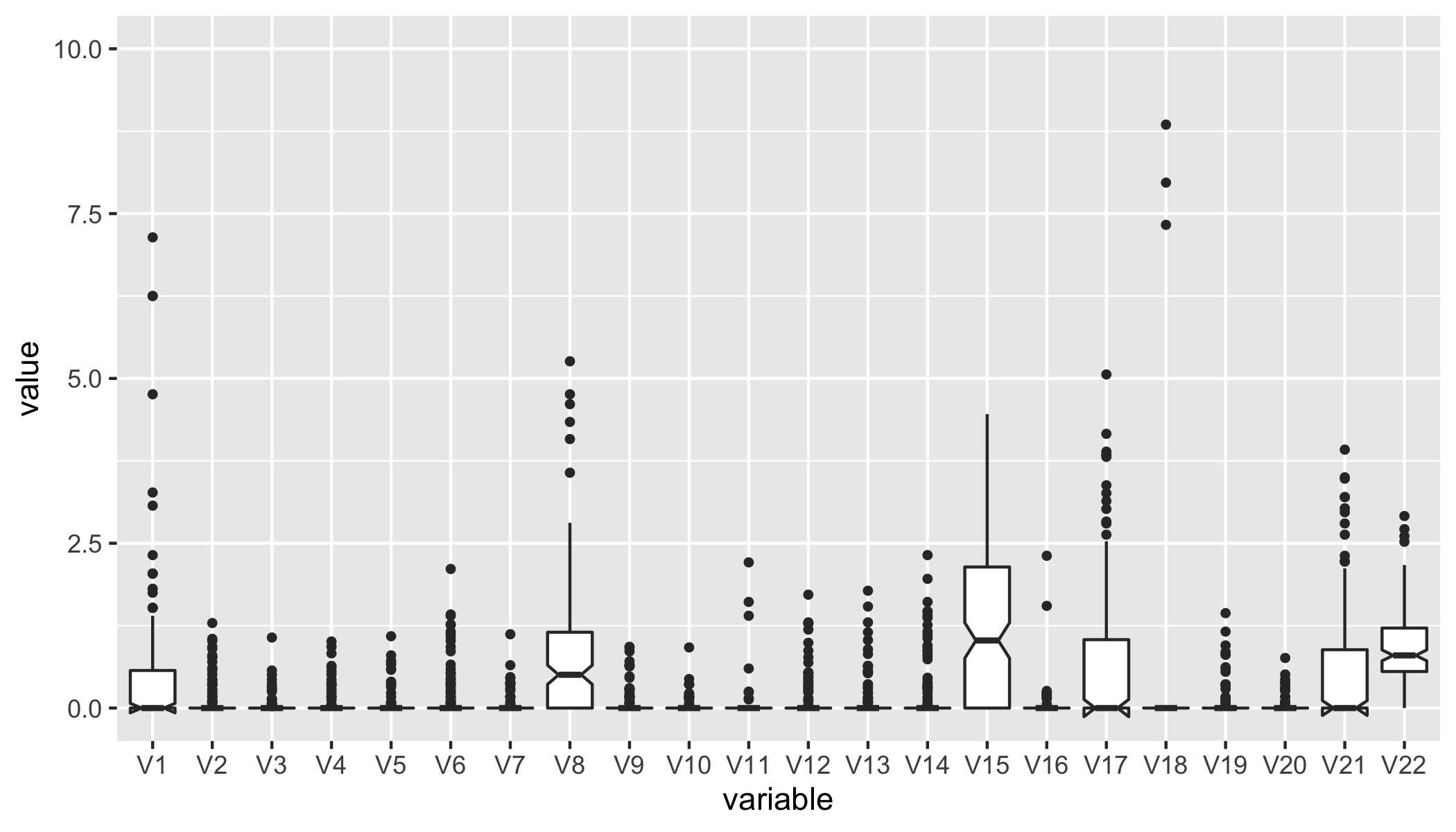}
    \hfill
        \includegraphics[width=0.49\linewidth]{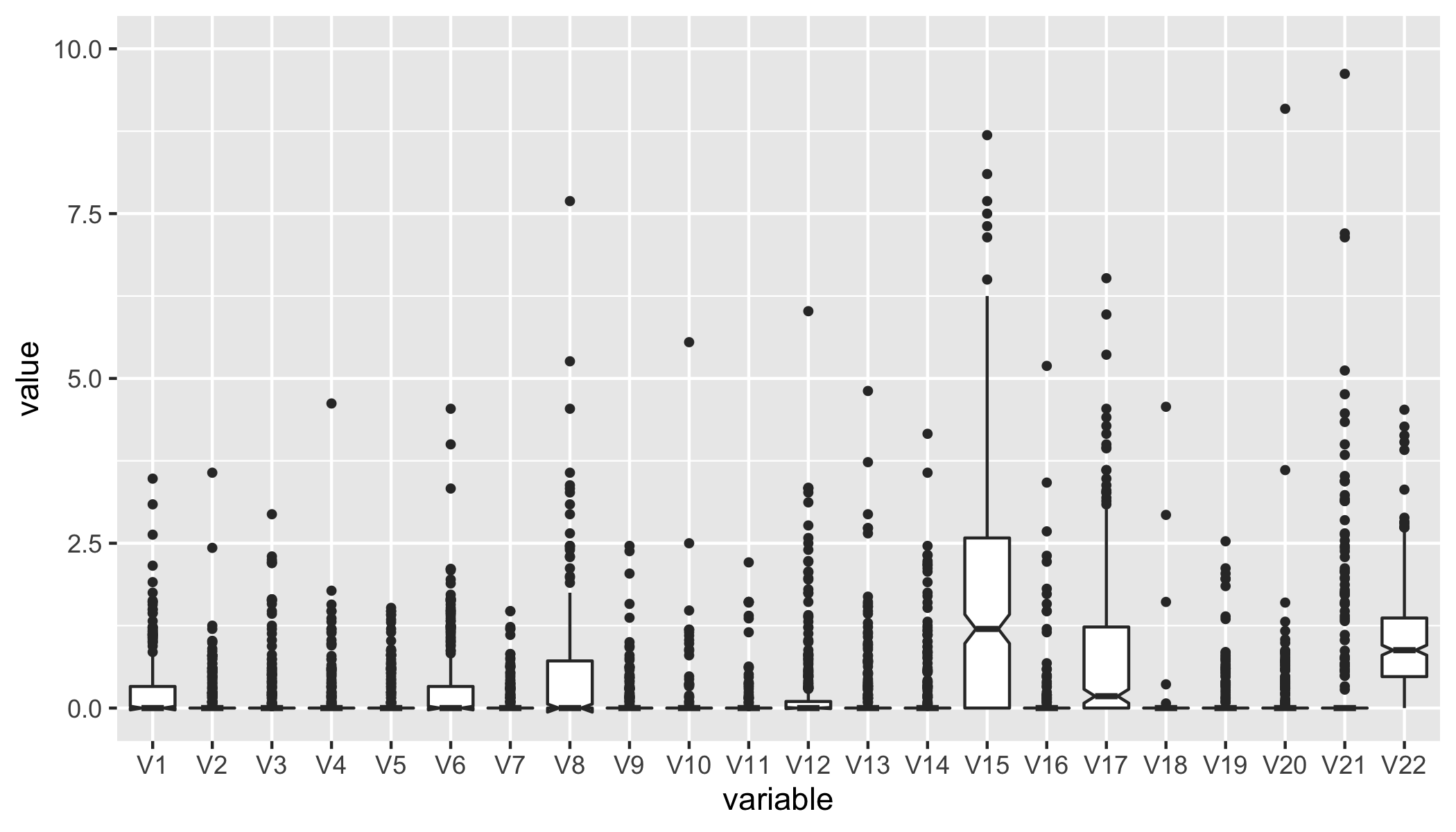}
    \caption{Boxplot of covariates for subjects identified as positive effect (left) versus those not being identified (right).}
    \end{subfigure}
    
    \caption{Characteristics of identified subjects: they tend to have larger value for variable 8, 21 and smaller value for variable 6, 15, 17, compared with not identified subjects.}
    \label{fig:realdat_covar}
\end{figure}

Four of our proposed methods are implemented with FDR control at level $\alpha = 0.2$: the \peiZero and \peiNeg which assume the propensity scores to be $1/2$ for all subjects, and \peiZeroUnknowP and \peiNegUnknowP which estimate the propensity scores. The numbers of identifications by \peiZero, \peiNeg, \peiZeroUnknowP and \peiNegUnknowP are $446, 429, 238, 162$. Among them, 234 subjects are commonly identified by \peiZero and \peiZeroUnknowP, which control the expected proportion of falsely identifying subjects with zero effect (approximately if the propensity scores are not $1/2$); and 158 subjects are commonly identified by \peiNeg and \peiNegUnknowP, which control the expected proportion of falsely identifying subjects with nonpositive effect (approximately if the propensity scores are not $1/2$). Compared with the rest subjects, the ones identified as having positive effect tend to have larger values for covariate 8, 21 and smaller values for covariate 6, 15, 17 (see Figure~\ref{fig:realdat_covar}).

\section{Summary} \label{sec:discuss}
We discuss the problem of identifying subjects with positive effects. Most existing methods identify \textit{subgroups} with positive treatment effects, and they cannot upper bound the proportion of falsely identified \textit{subjects} within an identified subgroup. In contrast, we propose \peiZero with finite-sample FDR control (i.e., the expected proportion of subjects with zero effect is no larger than $\alpha$ among the identified subjects). One advantage of the \peiZero is allowing human interaction --- an analyst (or an algorithm) can incorporate various types of prior knowledge and covariates using any working model; she can also adjust the model at any step, potentially improving the identification power. Despite this flexibility, the \peiZero achieves valid FDR control. Notably, because \peiZero incorporates covariates, it can identify subjects with positive effects, including those not treated. 

Our proposed interactive procedure can extended to various settings:
\begin{itemize}
    \item FDR control of nonpositive effects in randomized experiments (Appendix~\ref{sec:nonpositive-effect});
    \item FDR control of zero/nonpositive effects in observational studies (Section~\ref{sec:propensity_unknown} and Appendix~\ref{sec:extend_obs});
    \item paired samples (Appendix~\ref{sec:paired});
    \item FDR control at a subgroup level (Section~\ref{apd:subgroup}).
\end{itemize}


The error control for our interactive procedures is based on the independence properties between the data used for FDR control and the revealed data for interaction, such as property~\eqref{eq:sign_property_zero} for the zero-effect null and~\eqref{eq:sign_property_neg} for the nonpositive-effect null.

The idea of interactive testing can be generalized to many other problems as long as we can construct two parts of data that are (conditionally) independent. As an example, for alternative definitions of the zero-effect null, such as those involving conditional expectations or conditional quantiles, one can explore specific functions of $\{Y_i, X_i\}$ that are independent of treatment assignment under certain model assumptions, and then plug them into the \peiZero framework. Importantly, our interactive procedures using the idea of ``masking and unmasking'' should be contrasted with data-splitting approaches. We remark that no test, interactive or otherwise, can be run twice from scratch (with a tweak made the second time to boost power) after the entire data has been examined; this amounts to $p$-hacking. We view our interactive tests as one step towards enabling experts (scientists and statisticians) to work together with statistical models and machine learning algorithms in order to discover scientific insights with rigorous guarantees.

\appendix

\section{Extensions of the \pei}
\label{apd:extension}

\subsection{FDR control of nonpositive effects} \label{sec:nonpositive-effect}
The \peiZero controls the false identifications of subjects with zero treatment effect, as defined in the null hypothesis~\eqref{eq:hp_twoside} and its corresponding footnote. In this section, we develop a modification to additionally control the error of falsely identifying subjects with nonpositive treatment effects, by defining a different null hypothesis.

\subsubsection{Problem setup} We define the null hypothesis for subject $i$ as nonpositive effect:
\begin{align} \label{eq:hp_oneside}
    H_{0i}^\text{nonpositive}: (Y_i^T \mid X_i) \preceq (Y_i^C \mid X_i), 
\end{align}
or equivalently, $H_{0i}^\text{nonpositive}: (Y_i \mid A_i = 1, X_i) \preceq (Y_i \mid A_i = 0, X_i).$ As before, our algorithm applies to two alternative definitions of the null hypothesis. In the context of treating the potential outcomes and covariates as fixed, the null hypothesis is
\begin{align} \label{eq:hp_oneside_fixed}
    H_{0i}^\text{nonpositive}: Y_i^T \leq Y_i^C,
\end{align}
and in the hybrid version where the potential outcomes are random with joint distribution $(Y_i^T, Y_i^C) \mid X_i \sim P_i$, the null posits
\begin{align} \label{eq:hp_oneside_hybrid}
    H_{0i}^\text{nonpositive}: Y_i^T \leq Y_i^C \text{ almost surely-} P_i.
\end{align}
Note that the nonpositive-effect null is less strict than the zero-effect null. Thus, an algorithm with FDR control for $H_{0i}^\text{nonpositive}$ must have valid FDR control for $H_{0i}^\text{zero}$, but the reverse needs not be true. Indeed, we observe in numerical experiments (Figure~\ref{fig:FDR_neg_two}) that the \peiZero does not control FDR for the nonpositive-effect null. This section presents a variant of \peiZero that controls false identifications of nonpositive effects, possibly more practical when interpreting the identified subjects. For example, when controlling FDR for the nonpositive-effect null at level $\alpha = 0.2$, we are able to claim that the expected proportion of identified subjects with positive effects is no less than $80\%$.

\subsubsection{An interactive procedure in randomized experiments} Recall that the FDR control of the \peiZero is based on property~\eqref{eq:sign_property_zero} that when the null hypothesis is true for subject~$i$, we have $\mathbb{P}\left(\widehat \Delta_i \mid \{Y_j, X_j, E_j\}_{j=1}^n\right) \leq 1/2$, but this statement no longer holds when the null hypothesis is defined as $H_{0i}^\text{nonpositive}$ in~\eqref{eq:hp_oneside}. Fortunately, this issue can be fixed by making the condition in~\eqref{eq:sign_property_zero} coarser and removing the outcomes:
\begin{align*} 
    \mathbb{P}\left(\widehat \Delta_i \mid \{X_j\}_{j=1}^n\right) \leq 1/2,
\end{align*}
which is reflected in the interactive procedure as reducing the available information for selecting subject~$i_t^*$ (at step~8 of Algorithm~\ref{alg:pei}) --- we additionally mask (hide) the outcome~$Y_i$ of the candidate subjects~$i \in \mathcal{R}_{t-1}(\mathcal{I})$ when implementing the \pei on set $\mathcal{I}$. We call the resulting interactive algorithm \peiNeg, as it masks the outcomes. 

Specifically, the \peiNeg modifies \peiZero where we define the available information to select subjects when implementing Algorithm~\ref{alg:pei} on set $\mathcal{I}$ as
\begin{align} \label{eq:sigma_LHO}
        \mathcal{F}_{t-1}^{-Y}(\mathcal{I}) = \sigma\left(\{X_i\}_{i \in \mathcal{R}_{t-1}(\mathcal{I})}, \left\{Y_j, A_j,  X_j\right\}_{j \notin \mathcal{R}_{t-1}(\mathcal{I})}, \sum_{i \in \mathcal{R}_{t-1}(\mathcal{I})} \one\{\widehat \Delta_i > 0\}\right).
\end{align}
To calculate $\widehat \Delta_i$ at $t = 0$ when $Y_i$ for all $i \in \mathcal{I}$ are masked, let $\widehat m^{-\mathcal{I}}(X_i)$ be an estimator of $\me(Y_i \mid X_i)$ that is learned using data from non-candidate subjects $\{Y_j, X_j\}_{j \notin \mathcal{I}}$, and let the residuals be $E_i^{-\mathcal{I}} := Y_i - \widehat m^{-\mathcal{I}}(X_i)$.  Define $\Delta_i^{-\mathcal{I}} := 4(A_i - 1/2)\cdot E_i^{-\mathcal{I}}$, and similar to property~\eqref{eq:sign_property_zero} for the zero-effect null, we have
\begin{align} \label{eq:sign_property_neg}
    \mathbb{P}\left(\widehat \Delta_i^{-\mathcal{I}} > 0 \mid \{X_j\}_{j \in \mathcal{I}} \cup \{Y_j, X_j, E_j^{-\mathcal{I}}\}_{j \notin \mathcal{I}} \right) \leq 1/2,
\end{align}
under $H_{0i}^\text{nonpositive}$, leading to valid FDR control for nonpositive effects. Overall, the \peiNeg follows Algorithm~\ref{alg:para}, except the estimated treatment effect $\widehat \Delta_i$ replaced by $\widehat \Delta_i^{-\mathcal{I}}$, and the available information for selection $\mathcal{F}_{t-1}(\mathcal{I})$ replaced by~$\mathcal{F}_{t-1}^{-Y}(\mathcal{I})$. See Appendix~\ref{apd:MaY-regular} for the proof of FDR control.

\begin{theorem} \label{thm:LHO-ITE}
Under assumption~\eqref{eq:randomize1} and~\eqref{eq:randomize2} of randomized experiments, the \peiNeg has a valid FDR control at level $\alpha$ for the nonpositive-effect null hypothesis under any of definitions~\eqref{eq:hp_oneside},~\eqref{eq:hp_oneside_fixed} or~\eqref{eq:hp_oneside_hybrid}. For the last definition, FDR control also holds conditional on the covariates and potential outcomes.
\end{theorem}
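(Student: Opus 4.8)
The plan is to piggyback on the FDR proof of the \peiZero (Theorem~\ref{thm:LNO-ITE}, established in Appendix~\ref{apd:crossfit}): that argument is driven entirely by the one-sided sign property~\eqref{eq:sign_property_zero}, by the cross-fitting decomposition into two folds each run at level $\alpha/2$, and by an optional-stopping argument for an interactive supermartingale over the analyst's filtration. Nothing in that chain uses the zero-effect null except through the inequality $\mathbb{P}(\widehat \Delta_i > 0 \mid \cdot) \le 1/2$. Hence the proof of Theorem~\ref{thm:LHO-ITE} reduces to two tasks: (i) show that the modified statistic $\widehat \Delta_i^{-\mathcal{I}}$ obeys the analogue~\eqref{eq:sign_property_neg} under $H_{0i}^\text{nonpositive}$ with respect to the coarser sigma-field the \peiNeg conditions on; and (ii) check that the interactive-martingale machinery goes through verbatim once $\{Y_i\}_{i\in\mathcal R_{t-1}(\mathcal I)}$ is removed from the analyst's information, i.e., with $\mathcal F^{-Y}_{t-1}(\mathcal I)$ in place of $\mathcal F_{t-1}(\mathcal I)$ and with $\widehat \Delta_i^{-\mathcal I}$ in place of $\widehat \Delta_i$.

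For task~(i), condition on $\mathcal G_i := \sigma(\{X_j\}_{j\in\mathcal I}\cup\{Y_j,X_j,E_j^{-\mathcal I}\}_{j\notin\mathcal I})$, the conditioning event in~\eqref{eq:sign_property_neg}. The value $c := \widehat m^{-\mathcal I}(X_i)$ is a function of $\{Y_j,X_j\}_{j\notin\mathcal I}$ and of $X_i$, hence $\mathcal G_i$-measurable, so it behaves as a fixed constant; crucially it involves no $A_k$ with $k\in\mathcal I$. Now $\widehat \Delta_i^{-\mathcal I} > 0$ exactly when $A_i = 1,\ Y_i^T > c$ or $A_i = 0,\ Y_i^C < c$. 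Under the randomization assumptions~\eqref{eq:randomize1}--\eqref{eq:randomize2}, given the covariates $A_i$ is a fair coin independent both of $(Y_i^T,Y_i^C)$ (in the fixed and hybrid interpretations, also given the potential outcomes) and of the remaining data entering $\mathcal G_i$, so $\mathbb{P}(\widehat \Delta_i^{-\mathcal I} > 0 \mid \mathcal G_i) = \tfrac{1}{2}\mathbb{P}(Y_i^T > c \mid X_i) + \tfrac{1}{2}\mathbb{P}(Y_i^C < c \mid X_i)$. Under $H_{0i}^\text{nonpositive}$, stochastic dominance gives $\mathbb{P}(Y_i^T > c \mid X_i) \le \mathbb{P}(Y_i^C > c \mid X_i) = 1 - \mathbb{P}(Y_i^C \le c \mid X_i) \le 1 - \mathbb{P}(Y_i^C < c \mid X_i)$, so the two terms sum to at most $1$ and~\eqref{eq:sign_property_neg} follows. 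The same computation covers~\eqref{eq:hp_oneside_fixed} (where stochastic dominance degenerates to the pointwise inequality $Y_i^T \le Y_i^C$) and~\eqref{eq:hp_oneside_hybrid} (where conditioning additionally on the potential outcomes is legitimate, yielding the stated conditional FDR control).

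For task~(ii), with~\eqref{eq:sign_property_neg} in hand one reruns the Appendix~\ref{apd:crossfit} argument fold by fold. On fold $\mathcal I$, the process comparing $|\mathcal R_t^+ \cap \mathcal H_0|$ with $1 + |\mathcal R_t^- \cap \mathcal H_0|$ is an interactive supermartingale adapted to $\{\mathcal F^{-Y}_{t-1}(\mathcal I)\}_t$ --- here one verifies that every quantity used to choose $i_t^*$ is $\mathcal F^{-Y}_{t-1}(\mathcal I)$-measurable, which is precisely why masking $Y_i$ for candidates was necessary --- and optional stopping at $\tau = \inf\{t : \widehat{\mathrm{FDR}}(\mathcal R_t) \le \alpha/2\}$, combined with the telescoping bound $\mathrm{FDP}(\mathcal R_\tau^+(\mathcal I)) \le \widehat{\mathrm{FDR}}(\mathcal R_\tau)\cdot \tfrac{|\mathcal R_\tau^+\cap\mathcal H_0|}{1+|\mathcal R_\tau^-\cap\mathcal H_0|}$, gives $\mathbb{E}[\mathrm{FDP}(\mathcal R_\tau^+(\mathcal I))] \le \alpha/2$. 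Since the two folds exchange no information and their rejection sets are disjoint, adding the fold-wise bounds by linearity of expectation yields $\mathrm{FDR}(\mathcal R_\tau^+) \le \alpha$. The main obstacle is really task~(i): one must be scrupulous that $\widehat m^{-\mathcal I}(X_i)$ is a genuine constant after conditioning on $\mathcal G_i$ (no dependence on $A_i$ leaking through the fitted model), and that the randomization assumption delivers fair-coin independence simultaneously for all three null formulations; once the sign inequality is secured, the remainder is a routine transcription of the already-established cross-fitting proof.
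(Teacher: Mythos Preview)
Your proposal is correct and follows essentially the same route as the paper's own proof: verify the one-sided sign inequality~\eqref{eq:sign_property_neg} for $\widehat\Delta_i^{-\mathcal I}$ via stochastic dominance (with $\widehat m^{-\mathcal I}(X_i)$ fixed after conditioning), then invoke the cross-fitting argument of Appendix~\ref{apd:crossfit} fold by fold. The only cosmetic difference is that the paper compares $\mathbb{P}(\widehat\Delta_i^{-\mathcal I}\le 0\mid\mathcal G_{-1})$ against $\mathbb{P}(\widehat\Delta_i^{-\mathcal I}> 0\mid\mathcal G_{-1})$ term-by-term, whereas you bound $\mathbb{P}(\widehat\Delta_i^{-\mathcal I}>0)$ directly; and the paper is explicit that its conditioning $\sigma$-field $\mathcal G_{-1}$ also contains $\{A_j\}_{j\notin\mathcal I}$ (needed so that $\mathcal F_{t-1}^{-Y}(\mathcal I)\subseteq\mathcal G'_t$), which you should include in $\mathcal G_i$ when you write out the details.
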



Similar to Algorithm~\ref{alg:select_RF} for the \peiZero, we can design an automated algorithm for the \peiNeg to select a subject in step~8 of  Algorithm~\ref{alg:pei}, but the available information $\mathcal{F}_{t-1}^{-Y}(\mathcal{I})$ no longer includes the outcomes of candidate subjects. One naive strategy is to follow Algorithm~\ref{alg:select_RF} in the main paper, which is designed for the \peiZero, with the outcomes removed from the predictors; however, it appears to result in less accurate prediction of the effect signs, and in turn rather low power (numerial results are in the next paragraph). Here, we take a different approach by predicting the treatment effect instead of their signs, because the treatment effect might be better predicted as a function of the covariates (without outcomes) than a binary sign, especially when the treatment effect is indeed a smooth and simple function of the covariates. Specifically, we first estimate the treatment effect for the non-candidate subjects $j \notin \mathcal{R}_{t-1}(\mathcal{I})$ using a well-studied doubly-robust estimator (see \citet{kennedy2020optimal} and references therein):
\begin{align} \label{eq:est_effect_dr}
    \Delta_j^{\text{DR}} = 4(A_j - 1/2)\cdot(Y_j - \widehat \mu_A(X_j)) + \widehat \mu_1(X_j) - \widehat \mu_0(X_j),
\end{align}
where $(\widehat \mu_0, \widehat \mu_1)$ are random forests trained to predict the outcomes for the control and treated group, respectively. Using the provided covariates $X_i$, we can predict~$\Delta_i^{\text{DR}}$ for the candidate subjects $i \in \mathcal{R}_{t-1}(\mathcal{I})$. The subject with the smallest prediction of $\Delta_i^{\text{DR}}$ is then excluded. This automated strategy is described in Algorithm~\ref{alg:select_LHO}.

\begin{algorithm}[H]
   \caption{An automated heuristic to select $i_t^*$ in the \peiNeg.}
   \label{alg:select_LHO}
\begin{algorithmic}
   \STATE {\bfseries Input:} Current rejection set $\mathcal{R}_{t-1}(\mathcal{I})$, and available information for selection $\mathcal{F}_{t-1}^{-Y}(\mathcal{I})$;
   \STATE {\bfseries Procedure:} 
   \STATE 1.~Estimate the treatment effect for non-candidate subjects $j \notin \mathcal{R}_{t-1}(\mathcal{I})$ as $\Delta_j^{\text{DR}}$ in~\eqref{eq:est_effect_dr};
   \STATE 2.~Train a random forest where the label is the estimated effect $\Delta_j^{\text{DR}}$ and the predictors are the covariates $X_j$, using non-candidate subjects $j \notin \mathcal{R}_{t-1}(\mathcal{I})$;
   \STATE 3.~Predict $\Delta_i^{\text{DR}}$ for candidate subjects $i \in \mathcal{R}_{t-1}(\mathcal{I})$ via the above random forest, denoted as $\widehat \Delta_i^{\text{DR}}$;
   \STATE 4.~Select $i_t^*$ as $\argmin\{\widehat \Delta_i^{\text{DR}}: i \in \mathcal{R}_{t-1}(\mathcal{I})\}$.
\end{algorithmic}
\end{algorithm}

To summarize, we have presented two types of strategy for selecting subjects: the \peiZero chooses the one with the smallest predicted probability of a positive $\widehat \Delta_i$ (see Algorithm~\ref{alg:select_RF} in the main paper), which we denote here as the \textit{min-prob strategy}; and the \peiNeg chooses the one with the smallest prediction of estimated effect $\Delta_j^{\text{DR}}$ (see Algorithm~\ref{alg:select_LHO}), which we denote here as the \textit{min-effect strategy}. Note that the proposed interactive algorithm can use arbitrary strategy as long as the available information for selection is restricted. That is, the \peiZero can use the same min-effect strategy, and the \peiNeg can use the min-prob strategy (after removing the outcomes from the predictors, which we elaborate in the next paragraph). However, we observe in numerical experiments that both interactive procedures have higher power when using their original strategies, respectively (see Figure~\ref{fig:CATE_RF}).

\begin{figure}[h!]
\centering
    \includegraphics[width=0.4\linewidth]{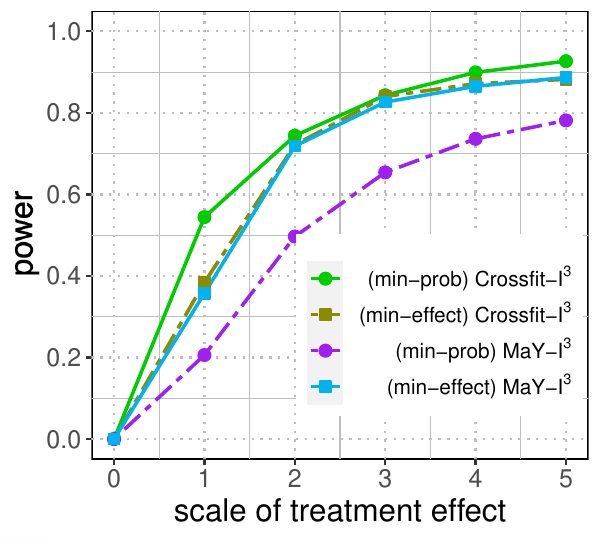}
    \caption{Power of the \peiZero and \peiNeg with two strategies to select subjects: the min-prob strategy and the min-effect strategy, under the treatment effect defined in~\eqref{eq:bias-sparse} of the main paper with the scale $S_\Delta$ varies in $\{0,1,2,3,4,5\}$. The \peiZero tends to have higher power when using the min-prob strategy, and the \peiNeg tends to have higher power when using the min-effect strategy.}
    \label{fig:CATE_RF}
\end{figure}

\subsubsection{Numerical experiments} Before details of the experiment results, we first describe the min-prob strategy for the \peiNeg, where the available information $\mathcal{F}_{t-1}^{-Y}(\mathcal{I})$ does not include the outcomes for candidate subjects. Similar to the min-prob strategy in Algorithm~\ref{alg:select_RF} of the main paper, we hope to use the outcome $Y_i$ and residual $E_i = Y_i - \widehat m(X_i)$ as predictors, and predict the sign of treatment effect for candidate subjects $i \in \mathcal{R}_{t-1}(\mathcal{I})$, but $Y_i$ and $E_i$ for the candidate subjects are not available in~$\mathcal{F}_{t-1}^{-Y}(\mathcal{I})$. Thus, we propose algorithm~\ref{alg:select_LHO_RF}, where we first estimate $Y_i$ and $E_i$ using the covariates (see step~1-2); and step 3-5 are similar to Algorithm~\ref{alg:select_RF}, which obtain the probability of having a positive treatment effect. 

\begin{algorithm}[h]
   \caption{The min-prob strategy to select $i_t^*$ in the \peiNeg.}
   \label{alg:select_LHO_RF}
\begin{algorithmic}
   \STATE {\bfseries Input:} Current rejection set $\mathcal{R}_{t-1}(\mathcal{I})$, and available information for selection $\mathcal{F}_{t-1}^{-Y}(\mathcal{I})$;
   \STATE {\bfseries Procedure:} 
   \STATE 1.~Predict the outcome $Y_k$ of each subject $k \in [n]$ by covariates, denoted as $\widehat Y^{-\mathcal{I}}(X_k)$, where $\widehat Y^{-\mathcal{I}}$ is learned using non-candidate subjects $j \notin \mathcal{R}_{t-1}(\mathcal{I})$;
   \STATE 2.~Predict the residual $E_k = Y_k - \widehat m(X_K)$ of each subject $k \in [n]$ by covariates, denoted as $\widehat E^{-\mathcal{I}}(X_k)$, where $\widehat E^{-\mathcal{I}}$ is learned using non-candidate subjects $j \notin \mathcal{R}_{t-1}(\mathcal{I})$;
   \STATE 3.~Train a random forest classifier where the label is $\text{sign} (\widehat \Delta_j)$ and the predictors are $\left(\widehat Y^{-\mathcal{I}}(X_j), X_j, \widehat E^{-\mathcal{I}}(X_j)\right)$, using non-candidate subjects $j \notin \mathcal{R}_{t-1}(\mathcal{I})$;
   \STATE 4.~Predict the probability of $\widehat \Delta_i$ being positive as~$\widehat p(i,t)$ for candidate subjects $i \in \mathcal{R}_{t-1}(\mathcal{I})$; 
   \STATE 5.~Select $i_t^* = \argmin\{\widehat p(i,t): i \in \mathcal{R}_{t-1}(\mathcal{I})\}$.
\end{algorithmic}
\end{algorithm}

The \peiZero has higher power when using the min-prob strategy than the min-effect strategy because the former additionally uses the outcome as a predictor. For the \peiNeg, the min-effect strategy leads to higher power because the estimated treatment effect $\Delta_j^{\text{DR}}$ in~\eqref{eq:est_effect_dr} can provide reliable evidence of which subjects have a positive effect. If using the min-prob strategy, it could be harder to learn an accurate prediction by Algorithm~\ref{alg:select_LHO_RF} where two of the predictors $\widehat Y^{-\mathcal{I}}(X_j)$ and $ \widehat E^{-\mathcal{I}}(X_j)$ are obtained by estimation, increasing the complexity in modeling. Therefore, we present the \peiZero and \peiNeg with the min-prob and min-effect strategies, respectively, as preferred in numerical experiments. Nonetheless, we remark that our proposed interactive frameworks for the \peiZero and \peiNeg allow arbitrary strategies to select subjects, and an analyst can design her own strategy based on her domain knowledge.


\begin{figure}[h!]
\centering
    \begin{subfigure}[t]{0.32\textwidth}
        \centering
        \includegraphics[width=1\linewidth]{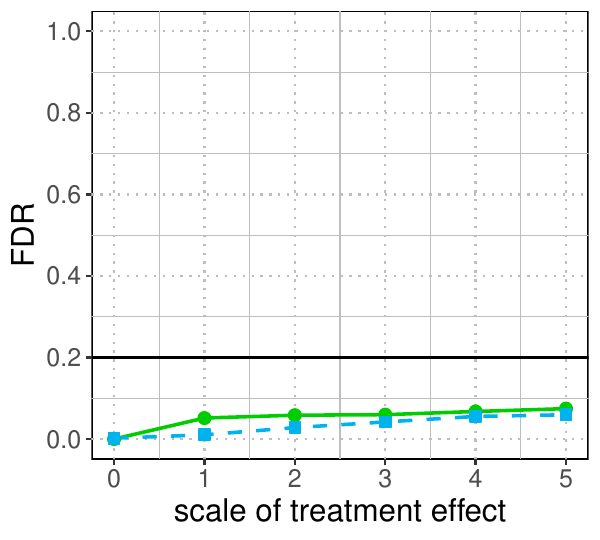}
        \subcaption{FDR for the zero-effect null~\eqref{eq:hp_twoside}.}
        \label{fig:FDR_zero_two}
    \end{subfigure}
    \hfill
    \begin{subfigure}[t]{0.32\textwidth}
        \centering
        \includegraphics[width=1\linewidth]{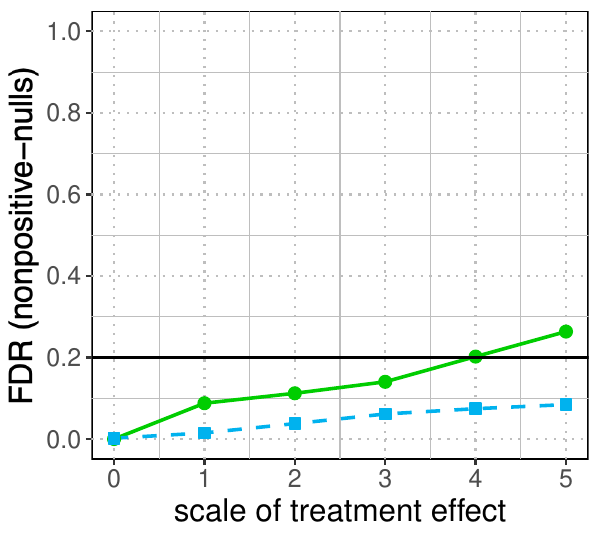}
        \subcaption{FDR for the nonpositive-effect null~\eqref{eq:hp_oneside}.}
        \label{fig:FDR_neg_two}
    \end{subfigure}
    \hfill
    \begin{subfigure}[t]{0.32\textwidth}
    \centering
        \includegraphics[width=1\linewidth]{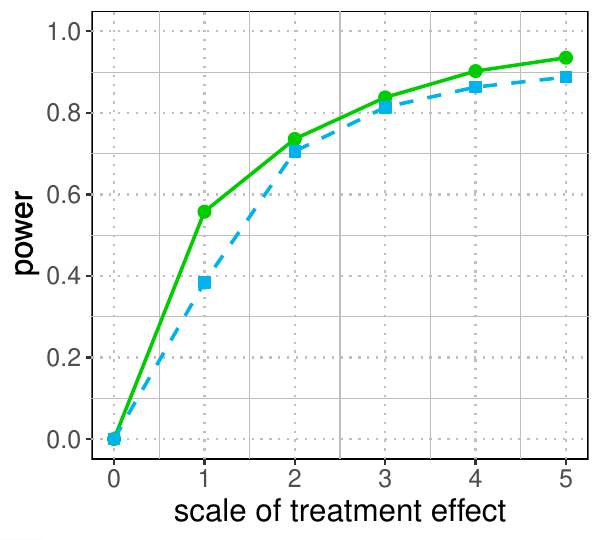}
        \caption{Power of identifying subjects with positive effects.}
        \label{fig:power_pos_two}
    \end{subfigure}
    
    \vfill
    \begin{subfigure}[t]{1\textwidth}
        \centering
        \includegraphics[width=0.7\linewidth]{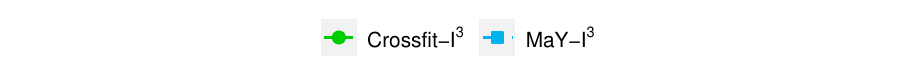}
    \end{subfigure}
    
    \caption{Performance of two interactive methods, \peiZero and \peiNeg, with the treatment effect specified as model~\eqref{eq:bias-sparse} and the scale $S_\Delta$ varying in $\{0,1,2,3,4,5\}$. The \peiNeg controls FDR for a more relaxed null (nonpositive effects) than the \peiZero, while the \peiZero has slightly higher power than the \peiNeg.}
    \label{fig:bias-sparse_two}
\end{figure}

We compare the \peiZero and \peiNeg using the same experiment as Section~\ref{sec:sim_setup}.
In terms of the error control, both the \peiZero and \peiNeg control FDR for the zero-effect null at the target level (Figure~\ref{fig:FDR_zero_two}). 
When the null is defined as having a nonpositive effect, the \peiZero can violate the error control (Figure~\ref{fig:FDR_neg_two}), whereas the \peiNeg preserves valid FDR control. In terms of the power, the \peiZero has slightly higher power since the analyst can select subjects using information defined by~$\mathcal{F}_{t-1}(\mathcal{I})$ in~\eqref{eq:sigma}, which is richer compared to $\mathcal{F}_{t-1}^{-Y}(\mathcal{I})$ in~\eqref{eq:sigma_LHO} for the \peiNeg.

To summarize, the error control of the \peiNeg is more strict than the \peiZero, controlling false identifications of both zero effects and negative effects, while its power is slightly lower. We recommend the \peiZero if one only concerns the error of falsely identifying subjects with zero effect. Alternatively, we recommend the \peiNeg when it is desired to control the error of falsely identifying subjects with nonpositive effects.

\subsubsection{Extensions to observational studies}
\label{sec:extend_obs}

The extension from randomized experiments to observational studies for the nonpositive-effect nulls is similar to that for the zero-efffect nulls from \peiZero to \peiZeroUnknowP~--- we estimate the propensity scores~$\pi_i$ using the revealed data (detailed steps described in Section~\ref{sec:propensity_unknown}). We call the resulting algorithm~\peiNegUnknowP.

FDR control for \peiNegUnknowP holds even when the bounds of true propensity scores reach $0$ or $1$:
\begin{itemize}
    \item[(iv)] the propensity scores are bounded between $0$ and $1$:
\begin{align} \label{eq:prop_bound2}
        0 \leq \pi_{\min} \leq \pi_i \leq \pi_{\max} \leq 1 \text{ for all } i \in [n].
\end{align}
\end{itemize}
which is less stringent than assumption~(\ref{eq:prop_bound}) for \peiZeroUnknowP. 
The error control of the \peiNegUnknowP is doubly robust: when the propensity score estimation is poor, FDR would still be close to target level when the expected outcomes are well-estimated. To characterize the error of expected outcome estimation, we define a ``centered'' CDF $\Phi$ as
\begin{align*}
    \Phi_i(\epsilon) := \mathbb{P}(Y_i - \me(Y_i \mid X_i) \leq \epsilon \mid \{X_i\}_{i=1}^n)
\end{align*}
with ``upper'' and ``lower'' bounds defined as $\Phi_{\max}(\epsilon) := \max_{i \in [n]} \Phi_i(\epsilon)$ and $\Phi_{\min}(\epsilon) := \min_{i \in [n]} \Phi_i(\epsilon)$. If the estimation error $\epsilon$ is small and the outcome distribution is symmetric and continuous, the centered CDF is close to $1/2$, leading to less FDR inflation as we describe later. 
We define several estimation errors when performing the \pei on set $\mathcal{I}$ as follows. 
\begin{itemize}
\item Let the error of propensity score estimation be
$\epsilon_n^\pi(\mathcal{I}) := \max_{i \in \mathcal{I}} |\pi_i - \widehat \pi_i(\mathcal{I})|$, where $\widehat \pi_i(\mathcal{I})$ is the estimated propensity score using data information in~$\mathcal{F}_{0}^{-Y}(\mathcal{I})$.

\item Let the error of expected outcome estimation be
\begin{align*}
    \epsilon_n^Y(\mathcal{I}) := \max_{ i \in \mathcal{I}}\{|\me(Y_i \mid X_i) - \widehat m^{-\mathcal{I}}(X_i)|\},
\end{align*}
where $\widehat m^{-\mathcal{I}}(X_i))$ is the estimated expected outcome learned using data information in~$\mathcal{F}_{0}^{-Y}(\mathcal{I})$, which includes the complete data of non-candidate subjects~$j \in [n] \backslash \mathcal{I}$. Recall that the sign of $\Delta_{\widehat m}(X_i)$ in \peiNeg depends on the sign of $Y_i - \widehat m^{-\mathcal{I}}(X_i)$, whose probability of being positive deviates from 1/2 at most by $\max\left\{\Phi_{\max}\left[\epsilon_n^Y(\mathcal{I})\right], 1 - \Phi_{\min}\left[-\epsilon_n^Y(\mathcal{I})\right]\right\}$.

\item For a candidate subject $i\in \mathcal{I}$ such that the zero-effect null hypothesis~\eqref{eq:hp_twoside} is true, denote the probability of $\Delta_{\widehat m}(X_i)$ being positive as 
\begin{align}
    q_{i}(\mathcal{I}) := \mathbb{P}\left((A_i - 1/2) \cdot (Y_i - \widehat m(X_i)) > 0 \mid \mathcal{F}_{0}^{-Y}(\mathcal{I})\right),
\end{align}
which is upper bounded by
\begin{align} \label{eq:def_q}
    q_{i}(\mathcal{I}) 
    \leq \min\{\max\{\pi_{\max},1-\pi_{\min}\}, \max\left\{\Phi_{\max}\left[\epsilon_n^Y(\mathcal{I})\right], 1 - \Phi_{\min}\left[-\epsilon_n^Y(\mathcal{I})\right]\right\}\} =: q_{\max}(\mathcal{I})
\end{align}
which is close to 1/2 (the ideal case) when \textit{either} the true propensity score is close to $1/2$ \textit{or} the error of the expected outcome estimation~$\epsilon_n^Y(\mathcal{I})$ is small.

\item The estimation error of $q_{i}(\mathcal{I})$ is upper bounded as 
\begin{align}
     \epsilon^q_n(\mathcal{I}) := \epsilon_n^{\pi}(\mathcal{I}) - \max\left\{0, \max\{\pi_{\max},1-\pi_{\min}\} - \max\left\{\Phi_{\max}\left[\epsilon_n^Y(\mathcal{I})\right], 1 - \Phi_{\min}\left[-\epsilon_n^Y(\mathcal{I})\right]\right\}\right\}\}. 
\end{align}
\end{itemize}
Similar error terms can be derived for the procedure on set $\mathcal{II}$.

\begin{theorem} \label{thm:MaY-unknown}
The FDR control of \peiNegUnknowP is upper bounded:
\begin{align*}
     \mathbb{E}\left[\mathrm{FDP}_{\widehat t}^{\widehat \pi} \right] \leq \alpha \left\{1 + \mathbb{E}_{\mathcal{F}_0(\mathcal{I})}\left[\epsilon_n^{q}(\mathcal{I}) \left(\frac{4}{ q_{\max}(\mathcal{I}) (1 - q_{\max}(\mathcal{I}))}\right)\right] +  \mathbb{E}_{\mathcal{F}_0(\mathcal{II})}\left[\epsilon_n^{q}(\mathcal{II}) \left(\frac{4}{ q_{\max}(\mathcal{II}) (1 - q_{\max}(\mathcal{II}))}\right)\right]\right\},
\end{align*}
when $\epsilon_n^{q}(\mathcal{I}) \leq q_{\max}(\mathcal{I}) / 2$ and $\epsilon_n^{q}(\mathcal{II}) \leq q_{\max}(\mathcal{II}) / 2$,
in an observational setting satisfying assumptions~\eqref{eq:randomized3}, ~\eqref{eq:prop_bound2}, and ~\eqref{eq:randomize2} for the zero-effect null~\eqref{eq:hp_twoside}. 
\end{theorem}

\begin{corollary}[Doubly-robust FDR control]
As sample size $n$ goes to infinity, the \peiNegUnknowP has asymptotic FDR control for the zero-effect null~\eqref{eq:hp_twoside} when either
\begin{enumerate}
    \item (a) the propensity score estimation is consistent in the sense that $\mathbb{E}_{\mathcal{F}_0(\mathcal{I})}\left[\epsilon_n^{\pi}(\mathcal{I})\right] \to 0$; and\\ (b)~${\mathbb{E}_{\mathcal{F}_0(\mathcal{II})}\left[\epsilon_n^{\pi}(\mathcal{II})\right] \to 0}$, and the true propensity scores are bounded away from 0 and 1; or
    \item (a)~the expected outcome estimation is consistent in the sense that $\epsilon_n^Y(\mathcal{I}) \to 0$ almost surely over the conditional distribution given $\mathcal{F}_0(\mathcal{I})$; and (b)~same for $\epsilon_n^Y(\mathcal{II})$; and (c)~the difference between bounds on true propensity scores and $1/2$ is larger than its estimation error: $\max\{\pi_{\max},1-\pi_{\min}\} - 1/2 \geq \epsilon_n^{\pi}(\mathcal{I}) \vee \epsilon_n^{\pi}(\mathcal{II})$ almost surely; and (d)~the outcome distribution is symmetric.
\end{enumerate}

\end{corollary}

Note that the above theorem states the FDR guarantee for the zero-effect null~\eqref{eq:hp_twoside}, and the error control for the nonpositive-effect null~\eqref{eq:hp_oneside} is discussed in Appendix~\ref{apd:MaY-unknown}, whose condition to ensure asymptotic error control is similar, but practically it could be hard to have consistent estimation for the expected outcomes.
Also, the above theorem provides an upper bound of the FDR in terms of the maximum estimation error over all subjects, while in practice we expect the FDR to be close to the target level when the the estimation error is small for most subjects. 

\subsection{Paired samples} \label{sec:paired}

\subsubsection{Problem setup} Our discussion has focused on the case where samples are not paired, and the proposed algorithms can be extended to the paired-sample setting. Suppose there are $n$ pairs of subjects. Let outcomes of subjects in the $i$-th pair be~$Y_{ij}$, treatment assignments be indicators~$A_{ij}$, covariates be~$X_{ij}$ for $j = 1,2$ and $i \in [n]$. We deal with randomized experiments without interference, and assume that
\begin{enumerate}
    \item[(i)] conditional on covariates, the treatment assignments are independent coin flips:
    \allowdisplaybreaks
    \begin{align*}
        & \mathbb{P}[(A_{11}, \ldots, A_{n1}) = (a_1, \ldots, a_n) \mid X_1, \ldots, X_n] = \prod_{i=1}^n \mathbb{P}(A_i = a_i) = (1/2)^n, \text{ and } {}\\
        &A_{i1} + A_{i2} = 1 \text{ for all } i \in [n].
    \end{align*}
    
    \item[(ii)] conditional on covariates, the outcome of one subject $Y_{i_1,j_1}$ is independent of the treatment assignment of another subject~$A_{i_2,j_2}$ conditional on~$A_{i_1,j_1}$, for any $(i_1,j_1) \neq (i_2,j_2)$.
\end{enumerate}
As before, we can develop interactive algorithms for two types of error control (only the definitions when treating the potential outcomes as random variables are presented, but the FDR control still applies to all versions of the null):
\begin{align}
    H_{0i}^\text{(zero, paired)}:~& (Y_{ij}^T \mid X_{ij}) \overset{d}{=} (Y_{ij}^C \mid X_{ij}) \text{ for both } j = 1,2; \label{hp:paired-zero}{}\\
    H_{0i}^\text{(nonpositive, paired)}:~& (Y_{ij}^T \mid X_{ij}) \preceq (Y_{ij}^C \mid X_{ij}) \text{ for both } j = 1,2. \label{hp:paired-nonpos}
\end{align}
Next, we present the extensions of \peiZero for FDR control of zero effect and \peiNeg for FDR control of nonpositive effect for the paired-sample setting, by a few modifications in the effect estimator.

\subsubsection{Interactive algorithms for paired samples} With the pairing information, the treatment effect can be estimated without involving $\widehat m$ as in~\eqref{eq:est_effect}:
\begin{align} \label{eq:est_effect_paired}
    \widehat \Delta_i^{\text{paired}} := (A_{i1} - A_{i2})(Y_{i1} - Y_{i2}),
\end{align}
as used by \citet{rosenbaum2002covariance} and \citet{howard2019uniform}, among others. The above estimation satisfies the critical property to guarantee FDR control: for a null pair $i$ of two subjects with zero effects in~\eqref{hp:paired-zero}, we have
\begin{align} \label{eq:obs_pair}
    \mathbb{P}(\widehat \Delta_i^{\text{paired}} > 0 \mid \{Y_{j1}, Y_{j2}, X_{j1}, X_{j2}\}_{j=1}^n) \leq 1/2.
\end{align}
Thus, the \peiZero (Algorithm~\ref{alg:para}) with $\widehat \Delta_i$ replaced by $\widehat \Delta_i^{\text{paired}}$ has valid FDR control for the zero-effect null~\eqref{hp:paired-zero}, where the analyst excludes pairs using the available information, including $\{Y_{i1}, Y_{i2}, X_{i1}, X_{i2}\}$ for candidate subjects $i \in \mathcal{R}_{t-1}(\mathcal{I})$, and $\{Y_{j1}, Y_{j2}, A_{j1}, A_{j2}, X_{j1}, X_{j2}\}$ for non-candidate subjects $j \notin \mathcal{R}_{t-1}(\mathcal{I})$, and the sum $\sum_{i \in \mathcal{R > 
}_{t-1}(\mathcal{I})} \one\{\widehat \Delta_i^{\text{paired}}>0\}$ for FDR estimation. An automated strategy exclude pair $i_t^*$ (at step~8 of Algorithm~\ref{alg:pei}) under paired samples is the same as Algorithm~\ref{alg:select_RF}, except $\widehat \Delta_i$ being replaced by $\widehat \Delta_i^{\text{paired}}$.

Recall the nonpositive-effect null under paired samples:
\begin{align*}
    H_{0i}^\text{(nonpositive, paired)}: (Y_{ij} \mid A_{ij} = 1, X_{ij}) \preceq (Y_{ij} \mid A_{ij} = 0, X_{ij}) \text{ for both } j = 1,2,
\end{align*}
and we observe that 
\begin{align} \label{eq:obs_pair_MaY}
    \mathbb{P}(\widehat \Delta_i^{\text{paired}} > 0 \mid \{X_{j1}, X_{j2}\}_{j=1}^n) \leq 1/2,
\end{align}
where $\widehat \Delta_i^{\text{paired}}$ is defined in~\eqref{eq:est_effect_paired} of the main paper. Thus, the \peiNeg with $\widehat \Delta_i$ replaced by~$\widehat \Delta_i^{\text{paired}}$ has valid FDR control for the nonpositive-effect null, where the analyst progressively excludes pairs using the available information:
\begin{align*}
    \mathcal{F}_{t-1}^{-Y, \text{paired}} = \sigma\left(\{X_{i1}, X_{i2}\}_{i \in \mathcal{R}_{t-1}}, \{Y_{j1}, Y_{j2}, A_{j1}, A_{j2}, X_{j1}, X_{j2}\}_{j \notin \mathcal{R}_{t-1}}, \sum_{i \in \mathcal{R}_{t-1}(\mathcal{I})} \one\{\widehat \Delta_i^{\text{paired}} > 0\}\right).
\end{align*}
We can also implement an automated version of the \peiNeg where the selection of the excluded subject follows a similar procedure as Algorithm~\ref{alg:select_LHO}. The difference is that in step~1, we estimate the treatment effect for non-candidate subjects $j \notin \mathcal{R}(\mathcal{I})$ directly as $\widehat \Delta_i^{\text{paired}} \equiv (A_{i1} - A_{i2})(Y_{i1} - Y_{i2})$ instead of $\Delta_i^{\text{DR}}$ to avoid estimating outcomes in $(\widehat \mu_0, \widehat \mu_1)$.

\subsubsection{Numerical experiments}
We compare the power of the interactive procedures with and without the pairing information, using the same experiments as previous. When the subjects within each pair have the same covariate values, the power under paired samples is higher than treating them as unpaired (see Figure~\ref{fig:exact_pair}), because the noisy variation in the observed outcomes that results from the potential control outcomes can be removed by taking the difference in outcomes within each pair. 


The advantage of procedures under paired samples becomes less evident when the subjects within a pair do not match exactly. We simulate unmatched pairs by introducing a parameter $\epsilon$ such that for each pair $i$, the covariates of the two subjects within satisfy: $\mathbb{P}(X_{i1}(1) \neq X_{i2}(1)) = \epsilon, \mathbb{P}(X_{i1}(2) \neq X_{i2}(2)) = \epsilon, X_{i1}(3) = X_{i2}(3) + U(0, 2\epsilon),$ where $U(0, 2\epsilon)$ is uniformly distributed between $0$ and $2\epsilon$, and a larger $\epsilon$ leads to a larger degree of mismatch. As $\epsilon$ increases, the power of procedures using the pairing information decreases (see Figure~\ref{fig:noexact_pair}), because the estimated treatment effect $\widehat \Delta_i^{\text{paired}}$ becomes less accurate for the mismatching setting. 

\vspace{20pt}
\begin{figure}[H]
\centering
\hspace{1cm}
    \begin{subfigure}[t]{0.3\textwidth}
        \centering
        \includegraphics[width=1\linewidth]{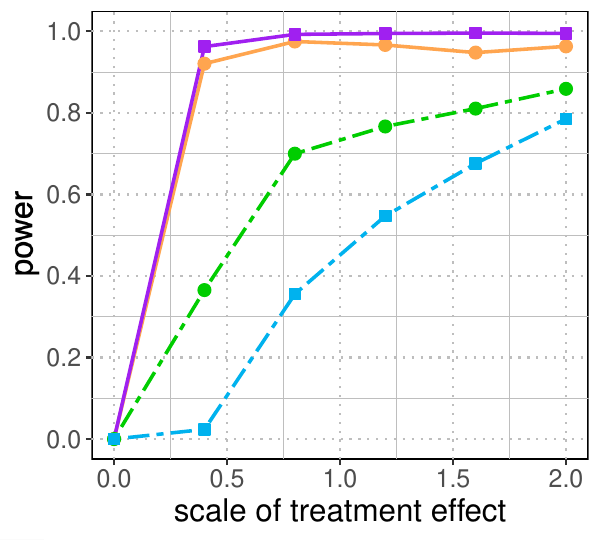}
        \subcaption{Exact pairs.}
        \label{fig:exact_pair}
    \end{subfigure}
    \hfill
    \begin{subfigure}[t]{0.3\textwidth}
        \centering
        \includegraphics[width=1\linewidth]{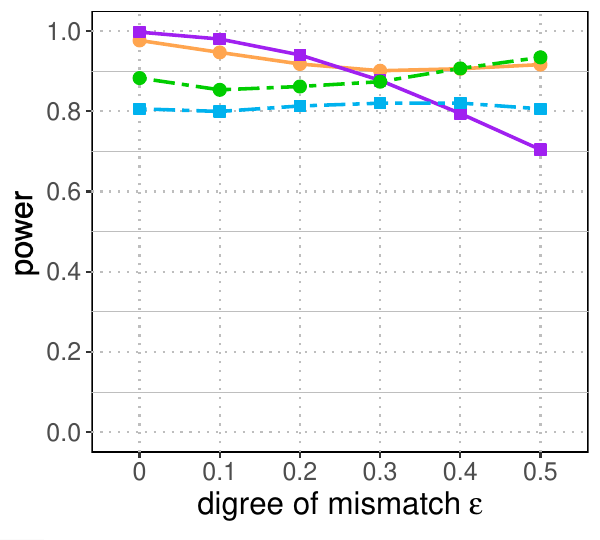}
        \subcaption{Subjects within the pair do not match exactly.}
        \label{fig:noexact_pair}
    \end{subfigure}
\hspace{1cm}

    \vfill
    \begin{subfigure}[t]{1\textwidth}
        \centering
        \includegraphics[width=0.7\linewidth]{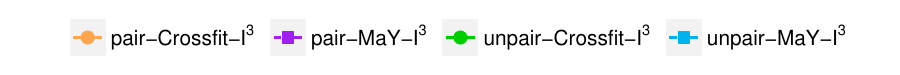}
    \end{subfigure}
    
    \caption{Power under paired samples with treatment effects specified by model~\eqref{eq:bias-sparse} when our proposed algorithms (\peiZero and \peiNeg) utilize the pairing information, which is higher than treating all subjects as unpaired. The advantage is less evident when the subjects within each pair are not exactly matched to have the same covariate values.}
    \label{fig:pair-bias}
\end{figure}
\vspace{20pt}

\begin{figure}[h!]
\centering

    \begin{subfigure}[t]{0.32\textwidth}
        \centering
        \includegraphics[width=1\linewidth]{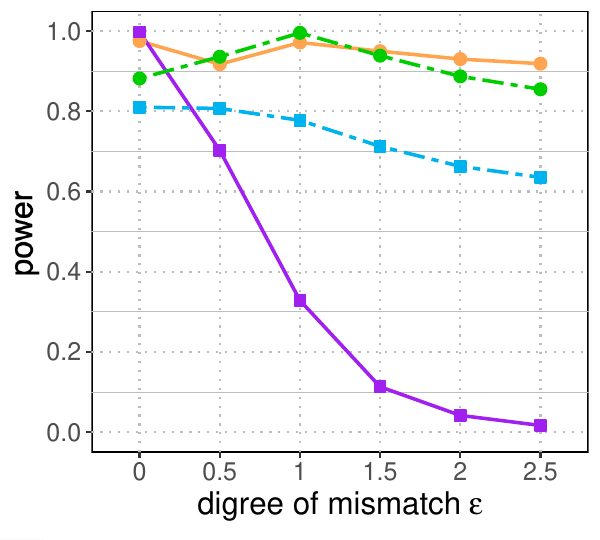}
    \end{subfigure}
    
    \vfill
    \begin{subfigure}[t]{1\textwidth}
        \centering
        \includegraphics[width=0.7\linewidth]{legend_pair.pdf}
    \end{subfigure}
    
    \caption{Power of identifying subjects with positive effects of the proposed algorithms (\peiZero and \peiNeg) with or without pairing information, when the scale of treatment effect is fixed at $2$ and the degree of mismatch~$\epsilon$ varies. The power of algorithms without pairing information first increase and then decrease as $\epsilon$ becomes larger. }
     \label{fig:noexact_pair_large}
\end{figure}

To further investigate the power decrease, we extend the definition of mismatch for $\epsilon \in (0,1)$ to a larger $\epsilon$: $ \mathbb{P}(X_{i1}(1) \neq X_{i2}(1)) = \min\{\epsilon, 1\}$ and $\mathbb{P}(X_{i1}(2) \neq X_{i2}(2)) = \min\{\epsilon, 1\} \text{ and } X_{i1}(3) = X_{i2}(3) + U(0, 2\epsilon)$,
where $U(0, 2\epsilon)$ is uniformly distributed between $0$ and $2\epsilon$, and a larger $\epsilon$ leads to a larger degree of mismatch. As $\epsilon$ increases, the power under the unpaired samples first increases (see Figure~\ref{fig:noexact_pair_large}). It is because the treatment effect is positive when $X_i(3) > 1$, which only takes $15\%$ proportion if without mismatching; thus, the pattern of treatment effect is not easy to learn. In contrast, when there is a positive shift on $X_i(3)$ as designed in the mismatching setting above, more subjects have positive effects so that the algorithm can more easily learn the effect pattern and hence increase the power. The power can slightly decrease when the degree of mismatch is too large ($\epsilon > 1$), because there are fewer subjects without treatment effect, also affecting the estimation of treatment effect.


\section{Proof of error controls}
The proofs are based on an optional stopping argument, as a variant of the ones presented in \citet{lei2018adapt}, \citet{lei2017star}, \citet{li2017accumulation} and \citet{barber2015controlling}.
\begin{lemma}[Lemma~2 of \citet{lei2018adapt}] \label{lm:adapt}
Suppose that, conditionally on the $\sigma$-field~$\mathcal{G}_{-1}$, $b_{1}, \ldots, b_{n}$ are independent Bernoulli random variables with
\begin{align*}
    \mathbb{P}(b_{i} = 1 \mid \mathcal{G}_{-1}) = \rho_{i} \geq \rho > 0, \text{ almost surely.}
\end{align*}
Let $(\mathcal{G}_t)_{t=0}^\infty$ be a filtration with $\mathcal{G}_0 \subset \mathcal{G}_1 \subset \ldots$ and suppose that $[n] \supseteq \mathcal{C}_0 \supseteq \mathcal{C}_1 \supseteq \ldots$, with each subset $\mathcal{C}_{t+1}$ measurable with respect to $\mathcal{G}_t$. If we have
\begin{align} \label{eq:fil_G}
    \mathcal{G}_t = \sigma\left(\mathcal{G}_{-1}, \mathcal{C}_{t}, (b_{i})_{i\notin \mathcal{C}_{t}}, \sum_{i \in \mathcal{C}_{t}} b_i \right),
\end{align}
and $\tau$ is an almost-surely finite stopping time with respect to the filtration $(\mathcal{G}_t)_{t \geq 0}$, then
\begin{align*}
    \mathbb{E}\left[\frac{1 + |\mathcal{C}_\tau|}{1 + \sum_{i \in \mathcal{C}_\tau}b_{i}} \Bigg| \mathcal{G}_{-1}\right] \leq \rho^{-1}.
\end{align*}
\end{lemma}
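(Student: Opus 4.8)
The plan is a standard optional-stopping argument for the nonnegative process
\[
M_t := \frac{1+|\mathcal{C}_t|}{1+\sum_{i\in\mathcal{C}_t} b_i}.
\]
Two harmless reductions come first. By inserting, between consecutive sets $\mathcal{C}_{t-1}\supseteq\mathcal{C}_t$, finitely many intermediate sets that each delete a single element (ordered, say, by index, so that each is measurable with respect to the previous $\sigma$-field) and refining the filtration $(\mathcal{G}_t)$ of \eqref{eq:fil_G} accordingly, one may assume $|\mathcal{C}_{t-1}\setminus\mathcal{C}_t|\le 1$; the given $\tau$ becomes an a.s.\ finite stopping time of the refined filtration and $M_\tau$ is unchanged. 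Coordinates $i\notin\mathcal{C}_0$ are revealed from time $0$ on, so they can be folded into $\mathcal{G}_{-1}$ (this preserves independence and the bounds $\rho_i\ge\rho$), and we may take $\mathcal{C}_0$ to be the whole index set, of size $n$.

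Next I would settle the homogeneous case $\rho_i\equiv\rho$. The key point is that every set earlier in the chain contains $\mathcal{C}_{t-1}$, so $\mathcal{G}_{t-1}$ carries no information about $(b_i)_{i\in\mathcal{C}_{t-1}}$ beyond the sum $S_{t-1}:=\sum_{i\in\mathcal{C}_{t-1}}b_i$; hence the conditional law of $(b_i)_{i\in\mathcal{C}_{t-1}}$ given $\mathcal{G}_{t-1}$ is exchangeable, i.e.\ that of $S_{t-1}$ ones placed uniformly among $|\mathcal{C}_{t-1}|$ slots. Therefore, whichever $\mathcal{G}_{t-1}$-measurable index $j_t\in\mathcal{C}_{t-1}$ is removed, $\mathbb{P}(b_{j_t}=1\mid\mathcal{G}_{t-1})=S_{t-1}/|\mathcal{C}_{t-1}|$, and a one-line computation gives $\mathbb{E}[M_t\mid\mathcal{G}_{t-1}]\le M_{t-1}$. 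Since $0\le M_t\le 1+n$ and $\tau<\infty$ a.s., optional stopping yields $\mathbb{E}[M_\tau\mid\mathcal{G}_{-1}]\le\mathbb{E}[M_0\mid\mathcal{G}_{-1}]=(1+n)\,\mathbb{E}\bigl[(1+S_0)^{-1}\mid\mathcal{G}_{-1}\bigr]$, and the elementary binomial identity $\mathbb{E}[(1+\mathrm{Bin}(n,\rho))^{-1}]=\frac{1-(1-\rho)^{\,n+1}}{(n+1)\rho}\le\frac1{(n+1)\rho}$ finishes this case.

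The one genuinely delicate point is the heterogeneous case $\rho_i\ge\rho$, where $M_t$ need \emph{not} be a supermartingale: an analyst who knows the (necessarily $\mathcal{G}_{-1}$-measurable) $\rho_i$ can delete the coordinate most likely to be a one, which can push $M_t$ upward. I would first ``homogenize''. Write $b_i=\beta_i\vee\gamma_i$ with independent $\beta_i\sim\mathrm{Ber}(\rho)$ and $\gamma_i\sim\mathrm{Ber}\bigl(\tfrac{\rho_i-\rho}{1-\rho}\bigr)$, and pass to the larger field $\mathcal{G}_{-1}^+:=\sigma(\mathcal{G}_{-1},(\gamma_i)_i)$ and filtration $\mathcal{G}_t^+:=\sigma(\mathcal{G}_t,(\gamma_i)_i)$; given $\mathcal{G}_{-1}^+$ the $b_i$ are still independent Bernoullis, now with parameters in $\{\rho,1\}$, and $\tau$ and $(\mathcal{C}_t)$ remain adapted, so it suffices to bound $\mathbb{E}[M_\tau\mid\mathcal{G}_{-1}^+]$. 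Writing $\mathcal{U}:=\{i:\gamma_i=0\}$ for the genuinely random coordinates, the ``sure-one'' coordinates add equally to the numerator and denominator of $M_t$ while the numerator dominates, so $M_t\le\widetilde M_t:=\frac{1+|\mathcal{C}_t\cap\mathcal{U}|}{1+\sum_{i\in\mathcal{C}_t\cap\mathcal{U}}b_i}$; and since $\widetilde M_t$ moves only when a $\mathcal{U}$-coordinate is deleted, the exchangeability computation of the previous paragraph applies verbatim to $(b_i)_{i\in\mathcal{C}_t\cap\mathcal{U}}$ and shows $\widetilde M_t$ is a $(\mathcal{G}_t^+)$-supermartingale with $\mathbb{E}[\widetilde M_0\mid\mathcal{G}_{-1}^+]\le\rho^{-1}$. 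Optional stopping then gives $\mathbb{E}[M_\tau\mid\mathcal{G}_{-1}^+]\le\mathbb{E}[\widetilde M_\tau\mid\mathcal{G}_{-1}^+]\le\rho^{-1}$, and averaging over $\mathcal{G}_{-1}^+$ completes the proof. I expect this homogenization step to be the main thing to carry out carefully: checking that conditioning on the $\gamma_i$ preserves the lemma's hypotheses, that $\widetilde M$ dominates $M$, and that deleting ``sure-one'' coordinates — which reveal nothing new — does not disturb the supermartingale property of $\widetilde M$.
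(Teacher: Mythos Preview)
The paper does not actually prove this lemma: it is quoted verbatim as ``Lemma~2 of \cite{lei2018adapt}'' and immediately used as a black box in the proofs of Theorems~\ref{thm:naive-ITE}--\ref{thm:LHO-ITE}. So there is no in-paper proof to compare against.

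Your argument is correct and follows the standard supermartingale route. The single-deletion and $\mathcal{C}_0=[n]$ reductions are routine. In the homogeneous case, the exchangeability claim is exactly right---the posterior of $(b_i)_{i\in\mathcal{C}_{t-1}}$ given $\mathcal{G}_{t-1}$ is uniform on configurations with sum $S_{t-1}$, so $\mathbb{P}(b_{j_t}=1\mid\mathcal{G}_{t-1})=S_{t-1}/|\mathcal{C}_{t-1}|$ and a direct computation gives $\mathbb{E}[M_t\mid\mathcal{G}_{t-1}]=M_{t-1}$ (equality, not just inequality); bounded optional stopping plus the identity $\mathbb{E}[(1+\mathrm{Bin}(n,\rho))^{-1}]=(1-(1-\rho)^{n+1})/((n+1)\rho)$ finish it.

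Your observation that $M_t$ can fail to be a supermartingale when the $\rho_i$ differ is accurate (removing the coordinate with the largest $\rho_i$ can push $\mathbb{E}[M_t\mid\mathcal{G}_{t-1}]$ above $M_{t-1}$), and your max-coupling $b_i=\beta_i\vee\gamma_i$ is a clean way to repair it. Conditioning on $(\gamma_i)_i$ keeps $\tau$ a stopping time and the $\mathcal{C}_t$ adapted with respect to the enlarged filtration $(\mathcal{G}_t^+)$; the inequality $M_t\le\widetilde M_t$ follows from $(a+c)/(b+c)\le a/b$ for $a\ge b$, $c\ge 0$; and the exchangeability induction for $(b_i)_{i\in\mathcal{C}_t\cap\mathcal{U}}$ under $\mathcal{G}_t^+$ goes through since deleting a sure-one index reveals nothing new while deleting an index in $\mathcal{U}$ is the usual sampling-without-replacement step. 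This homogenization is the Bernoulli analogue of the monotone-coupling device used in the knockoffs/AdaPT literature to replace super-uniform $p$-values by exact uniforms, so your proof is in the same spirit as the cited source even though the paper itself defers the details.
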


\subsection{Proof of theorem~\ref{thm:naive-ITE}} \label{apd:proof-naive}
\begin{proof}
We show that the \pei controls FDR by Lemma~\ref{lm:adapt}, where 
\begin{align*}
    b_i := \one\{(A_i - 1/2) \cdot E_i \leq 0\} \text{ and }
    \mathcal{G}_{-1} := \sigma\left(\{Y_j, X_j\}_{j=1}^n\right)
     \text{ and }  \mathcal{C}_t := \mathcal{R}_{t} \cap \mathcal{H}_0, 
\end{align*}
for $t = 0,1,\ldots$. The assumptions in Lemma~\ref{lm:adapt} are satisfied: (a)~$\mathbb{P}(b_i = 1 \mid \mathcal{G}_{-1}) \geq 1/2$ for subjects  with zero effect $i \in \mathcal{H}_0$:
\begin{align*}
    &\mathbb{P}\left((A_i - 1/2) \cdot E_i \leq 0 \mid \mathcal{G}_{-1}\right){}\\
    =~&  \mathbb{P}(A_i = 1) \one(E_i \leq 0 \mid \mathcal{G}_{-1}) + \mathbb{P}(A_i = 0)\one(E_i \geq 0 \mid \mathcal{G}_{-1}),{}\\
    &\text{ because } A_i \text{ is independent of } \mathcal{G}_{-1} {}\\
    =~& 1/2\left[\one(E_i \leq 0 \mid \mathcal{G}_{-1}) + \one(E_i \geq 0 \mid \mathcal{G}_{-1}))\right]
    \geq 1/2;
\end{align*}
and (b)~the filtration in our algorithm satisfies $\mathcal{F}_{t} \subseteq \mathcal{G}_{t}$, so the time of stopping the algorithm $\widehat t := \min\{t:\widehat{\mathrm{FDR}}(\mathcal{R}_t) \leq \alpha\}$ is a stopping time with respect to $\mathcal{G}_t$; and~(c)  $\mathcal{C}_{t+1}$ is measurable with respect to $\mathcal{G}_t$. Thus, by Lemma~\ref{lm:adapt},
expectation, 
we have
\begin{align*}
    \mathbb{E}\left[\frac{1 + |\mathcal{R}_{\widehat t} \cap \mathcal{H}_0|}{1 + |\mathcal{R}_{\widehat t}^- \cap \mathcal{H}_0|}~\Big|~\mathcal{G}_{-1}\right] \leq 2,
\end{align*}
By definition, the FDR conditional on $\mathcal{G}_{-1}$ at the stopping time $\widehat t$ is
\begin{align*}
    &\mathbb{E}\left[\frac{|\mathcal{R}_{\widehat t}^+ \cap \mathcal{H}_0|}{\max\{|\mathcal{R}_{\widehat t}^+| ,1\}}~\Big|~\mathcal{G}_{-1}\right]
    = \mathbb{E}\left[\frac{1 + |\mathcal{R}_{\widehat t}^- \cap \mathcal{H}_0|}{\max\{|\mathcal{R}_{\widehat t}^+| ,1\}} \cdot \frac{|\mathcal{R}_{\widehat t}^+ \cap \mathcal{H}_0|}{1 + |\mathcal{R}_{\widehat t}^-\cap \mathcal{H}_0|}~\Big|~\mathcal{G}_{-1}  \right]{}\\
    \leq~& \mathbb{E}\left[ \widehat{\mathrm{FDR}}(\mathcal{R}_{\widehat t}) \cdot \frac{|\mathcal{R}_{\widehat t}^+ \cap \mathcal{H}_0|}{1 + |\mathcal{R}_{\widehat t}^-\cap \mathcal{H}_0|}~\Big|~\mathcal{G}_{-1} \right] \leq \alpha   \mathbb{E}\left[\frac{|\mathcal{R}_{\widehat t}^+ \cap \mathcal{H}_0|}{1 + |\mathcal{R}_{\widehat t}^-\cap \mathcal{H}_0|}~\Big|~\mathcal{G}_{-1}\right] {}\\
    =~& \alpha   \mathbb{E}\left[\frac{1 + |\mathcal{R}_{\widehat t} \cap \mathcal{H}_0|}{1 + |\mathcal{R}_{\widehat t}^- \cap \mathcal{H}_0|} - 1~\Big|~\mathcal{G}_{-1} \right] \leq \alpha,
\end{align*}
and the proof completes by applying the tower property of conditional expectation. 

Notice that when the potential outcomes are treated as fixed, the same proof applies to the null defined as $Y_j^T = Y_j^C$, because the independence between $A_i$ and $\mathcal{G}_{-1}$ still holds for the nulls. In the hybird version of the null~$H_{0i}^\text{zero}: Y_i^T = Y_i^C \text{ almost surely-}P_i$, the above proof applies with $\mathcal{G}_{-1} := \sigma\left(\{Y_j, Y_j^T, Y_j^C, X_j\}_{j=1}^n\right)$. Thus, FDR is controlled at level $\alpha$ conditional on the potential outcomes and covariates $\{Y_j^T, Y_j^C, X_j\}_{j=1}^n$. 
\end{proof}

\subsection{Proof of theorem~\ref{thm:LNO-ITE}} \label{apd:crossfit}
\begin{proof}
Let the set of false rejections in $\mathcal{R}(\mathcal{I})$ be $\mathcal{V}(\mathcal{I})$. We conclude that the FDR of the \pei implemented on set $\mathcal{I}$ is controlled at level $\alpha/2$:
\begin{align*}
    \mathbb{E}\left[\frac{|\mathcal{V}(\mathcal{I})|}{\max\{|\mathcal{R}(\mathcal{I})| ,1\}}~\Big|~\mathcal{G}_{-1}\right] \leq \alpha/2,
\end{align*}
following the error control of the  \pei in Section~\ref{apd:proof-naive}, where the initial candidate rejection set is $R_0 = \mathcal{I}$, and thus, $C_0 = \mathcal{I} \cap \mathcal{H}_0$. Similarly, the FDR of the \pei implemented on set $\mathcal{II}$ is also controlled at level $\alpha/2$. Therefore, the FDR of the combined set $\mathcal{R}(\mathcal{I}) \cup \mathcal{R}(\mathcal{II})$ is controlled at level $\alpha$ as claimed: 
\begin{align*}
    &\mathbb{E}\left[\frac{|\mathcal{V}(\mathcal{I}) \cup \mathcal{V}(\mathcal{II})|}{|\mathcal{R}(\mathcal{I}) \cup \max\{|\mathcal{R}(\mathcal{II})| ,1\}}~\Big|~\mathcal{G}_{-1}\right]{}\\
    \leq~& \mathbb{E}\left[\frac{|\mathcal{V}(\mathcal{I})|}{|\mathcal{R}(\mathcal{I}) \cup \max\{|\mathcal{R}(\mathcal{II})| ,1\}}~\Big|~\mathcal{G}_{-1}\right] + \mathbb{E}\left[\frac{|\mathcal{V}(\mathcal{II})|}{|\mathcal{R}(\mathcal{I}) \cup \max\{|\mathcal{R}(\mathcal{II})| ,1\}}~\Big|~\mathcal{G}_{-1}\right]{}\\
    \leq~& \mathbb{E}\left[\frac{|\mathcal{V}(\mathcal{I})|}{\max\{|\mathcal{R}(\mathcal{I})| ,1\}}~\Big|~\mathcal{G}_{-1}\right] + \mathbb{E}\left[\frac{|\mathcal{V}(\mathcal{II})|}{| \max\{|\mathcal{R}(\mathcal{II})| ,1\}}~\Big|~\mathcal{G}_{-1}\right] \leq \alpha,
\end{align*}
the proof completes for the null~\eqref{eq:hp_twoside} in the main paper after applying the tower property of conditional expectation. The FDR control also applies to the other two definitions of the null with fixed or hybrid version of the outcomes, following the same arguments as the end of Section~\ref{apd:proof-naive}.
\end{proof}


\subsection{Proof of theorem~\ref{thm:LHO-ITE}}
\label{apd:MaY-regular}
\begin{proof}
We prove that the FDR control holds for the \pei implemented on $\mathcal{I}$, and the same conclusion applies to $\mathcal{II}$, so the overall FDR control is guaranteed following the proof of theorem~\ref{thm:LNO-ITE} in Section~\ref{apd:crossfit}.

We first present the proof when the potential outcomes are viewed as random variables. Define $\mathcal{G}_{-1} := \sigma\left(\left\{X_i\right\}_{i=1}^n, \left\{Y_i, A_i\right\}_{i \notin \mathcal{I}}\right)$, and $\mathcal{G}'_{t} = \sigma\left(\mathcal{G}_{-1}, \mathcal{C}_{t}, (Y_i, A_i)_{i\notin \mathcal{C}_{t}}, \sum_{i \in \mathcal{C}_{t}} b_i \right)$, which contains more information than  $\mathcal{G}_{t}$ as defined in~\eqref{eq:fil_G}. We claim that Lemma~\ref{lm:adapt} holds when we replace $\mathcal{G}_{t}$ by $\mathcal{G}'_{t}$, 
because the distribution of $b_i$ conditional on $\mathcal{G}_{t}$ is the same as on $\mathcal{G}'_{t}$ for any $t = 0, \ldots, n$.
Similar to the proof of Theorem~\ref{thm:naive-ITE} in Section~\ref{apd:proof-naive}, we check that the assumption in Lemma~\ref{lm:adapt} are satisfied: (a)~the filtration in our algorithm satisfies $\mathcal{F}_{t} \subseteq \mathcal{G}'_{t}$, so the time of stopping the algorithm $\widehat t := \min\{t:\widehat{\mathrm{FDR}}(\mathcal{R}_t) \leq \alpha\}$ is a stopping time with respect to $\mathcal{G}'_t$; and (b)~$\mathcal{C}_{t+1}$ is measurable with respect to $\mathcal{G}'_t$; and (c)~for subjects with nonpositive effect $i \in \mathcal{H}_0^\text{nonpositive}$:
\begin{align} \label{clm:MaY-sign}
    \mathbb{P}\left((A_i - 1/2) \cdot E_i^{-\mathcal{I}} \leq 0 \mid \mathcal{G}_{-1} \right) \geq 1/2.
\end{align}
To see that the last assumption holds, notice that
\begin{align*}
    &\mathbb{P}\left((A_i - 1/2) \cdot (Y_i - \widehat m^{-\mathcal{I}}(X_i)) \leq 0 \mid \mathcal{G}_{-1}\right){}\\
    =~& \mathbb{P}(Y_i^C \geq \widehat m^{-\mathcal{I}}(X_i) \mid \mathcal{G}_{-1}) \mathbb{P}(A_i = 0) + \mathbb{P}(Y_i^T \leq \widehat m^{-\mathcal{I}}(X_i) \mid \mathcal{G}_{-1}) \mathbb{P}(A_i = 1); \text{ and} {}\\
    &\mathbb{P}\left((A_i - 1/2) \cdot (Y_i - \widehat m^{-\mathcal{I}}(X_i)) > 0 \mid \mathcal{G}_{-1}\right){}\\ 
    =~& \mathbb{P}(Y_i^C < \widehat m^{-\mathcal{I}}(X_i) \mid \mathcal{G}_{-1}) \mathbb{P}(A_i = 0) + \mathbb{P}(Y_i^C > \widehat m^{-\mathcal{I}}(X_i) \mid \mathcal{G}_{-1}) \mathbb{P}(A_i = 1).
\end{align*}
For any potential outcomes of the nulls such that $(Y_i^T \mid X_i) \preceq (Y_i^C \mid X_i)$, it holds that
\begin{align*} 
    \mathbb{P}(Y_i^C \geq D \mid X_i) \geq \mathbb{P}(Y_i^T > D \mid X_i), \text{ and } 
    \mathbb{P}(Y_i^T \leq D \mid X_i) \geq \mathbb{P}(Y_i^C < D \mid X_i),
\end{align*}
for any constant $D$, so 
\begin{align*}
   &\mathbb{P}(Y_i^C \geq \widehat m^{-\mathcal{I}}(X_i) \mid \mathcal{G}_{-1})  \geq  \mathbb{P}(Y_i^T > \widehat m^{-\mathcal{I}}(X_i) \mid \mathcal{G}_{-1}), \text{ and }{}\\
   &\mathbb{P}(Y_i^T \leq \widehat m^{-\mathcal{I}}(X_i) \mid \mathcal{G}_{-1})  \geq \mathbb{P}(Y_i^C < \widehat m^{-\mathcal{I}}(X_i) \mid \mathcal{G}_{-1}),
\end{align*}
because $m^{-\mathcal{I}}(X_i)$ is fixed given $\mathcal{G}_{-1}$. Because $\mathbb{P}(A_i = 1)$ is $1/2$ for all subjects, we have
\begin{align*}
    \mathbb{P}\left((A_i - 1/2) \cdot (Y_i - \widehat m^{-\mathcal{I}}(X_i)) \leq 0 \mid \mathcal{G}_{-1}\right) \geq \mathbb{P}\left((A_i - 1/2) \cdot (Y_i - \widehat m^{-\mathcal{I}}(X_i)) > 0 \mid \mathcal{G}_{-1}\right),
\end{align*}
which proves Claim~\eqref{clm:MaY-sign} and in turn the FDR control of the \peiNeg.

When the potential outcomes are treated as fixed, the above proof applies to the null defined as $Y_i^T \leq Y_i^C$ in~\eqref{eq:hp_oneside_fixed} of the main paper, in which case $ \mathbb{P}(Y_i^C \geq D \mid \mathcal{G}_{-1})$ is zero or one, and the above arguments still hold. For the hybird version of the null~\eqref{eq:hp_oneside_hybrid} in the main paper, the above proof applies with $\mathcal{G}_{-1} := \sigma\left(\left\{Y_i^T, Y_i^C, X_i\right\}_{i=1}^n, \left\{Y_i, A_i\right\}_{i \notin \mathcal{I}}\right)$. Thus, FDR is controlled at level $\alpha$ conditional on the potential outcomes and covariates $\{Y_j^T, Y_j^C, X_j\}_{j=1}^n$. 
\end{proof}

\subsection{Preliminaries to proof of error controls under observational studies}

\begin{lemma} \label{lm:FDR_hetero}
Let $q_i$ be the conditional probability of a positive estimated sign:
\begin{align*}
     q_{i} := \mathbb{P}\left[(A_i - 1/2) \cdot (Y_i - \widehat m(X_i)) > 0 \mid \mathcal{G}_{-1} \right],
\end{align*}
where $\widehat m$ is an summary statistic (mean or mean) of $Y_i \mid X_i$, learned using $\mathcal{G}_{-1}$. Denote the maximum as $q_{\max}(\mathcal{I}) \equiv \max_{i \in \mathcal{I}} q_{i}$ and let its estimation of using information in $\mathcal{G}_{-1}$ be~$\widehat q_{\max}$, and the (one-sided) estimation error be $\epsilon_n^q(\mathcal{I}) = \max\{q_{\max}(\mathcal{I}) - \widehat q_{\max}(\mathcal{I}),0\}$. Define the FDR estimator as
\begin{align*}
    \widehat{\mathrm{FDR}}_{\widehat t}(\mathcal{I}) \equiv \left(\frac{1}{1 - \widehat{q_{\max}}(\mathcal{I})} - 1\right)\frac{|\mathcal{R}_{\widehat t}^-| + 1}{|\mathcal{R}_{\widehat t}^+| \vee 1},
\end{align*}
then the FDR of \pei run by Analyst I at level $\alpha/2$ is bounded:
\begin{align*}
    \mathbb{E}\left[\mathrm{FDP}_{\widehat t}(\mathcal{I}) \mid  \mathcal{G}_{-1} \right] \leq \alpha/2 \left\{1 + \epsilon_n^q(\mathcal{I}) \cdot \frac{4}{q_{\max}(\mathcal{I}) (1 - q_{\max}(\mathcal{I}))}\right\}.
\end{align*}
when $\epsilon_n^q(\mathcal{I}) \leq q_{\max}(\mathcal{I}) / 2$.
\end{lemma}

\begin{proof}
By Lemma~\ref{lm:adapt} where 
\begin{align*}
    b_i := \one\{(A_i - 1/2) \cdot E_i \leq 0\}  \quad \text{ and } \quad \mathcal{C}_t := \mathcal{R}_{t} \cap \mathcal{H}_0, 
\end{align*}
and the tower property of conditional expectation, we have
\begin{align*}
    \mathbb{E}\left[\frac{|\mathcal{R}_{\widehat t}^+ \cap \mathcal{H}_0|}{1 + |\mathcal{R}_{\widehat t}^-|} \Big| \mathcal{G}_{-1}\right] \leq \left(\frac{1}{1 - q_{\max}(\mathcal{I})} - 1\right),
\end{align*}
 where the stopping time is denoted as $\widehat t$. The FDR at $\widehat t$ is upper bounded:
\begin{align*}
    &\mathbb{E}\left[\frac{|\mathcal{R}_{\widehat t}^+ \cap \mathcal{H}_0|}{|\mathcal{R}_{\widehat t}^+| \vee 1} \Big| \mathcal{G}_{-1}\right]{}\\ 
    =~& \mathbb{E}\left[\frac{1 + |\mathcal{R}_{\widehat t}^-|}{|\mathcal{R}_{\widehat t}^+| \vee 1} \cdot \frac{|\mathcal{R}_{\widehat t}^+ \cap \mathcal{H}_0|}{1 + |\mathcal{R}_{\widehat t}^-|} \Big| \mathcal{G}_{-1}\right]{}\\
    \leq~& \alpha/2 \left(\frac{1}{1 - \widehat{q_{\max}}(\mathcal{I})} - 1\right)^{-1}  \mathbb{E}\left[\frac{|\mathcal{R}_{\widehat t}^+ \cap \mathcal{H}_0|}{1 + |\mathcal{R}_{\widehat t}^-|} \Big| \mathcal{G}_{-1} \right]{}\\
    \leq~& \alpha/2 \left(\frac{1}{1 - \widehat{q_{\max}}(\mathcal{I})} - 1\right)^{-1}  \left(\frac{1}{1 - q_{\max}(\mathcal{I})} - 1\right). 
\end{align*}
By Taylor expansion on $f(x) = \left(\frac{1}{1 - x} - 1\right)^{-1}$ around $x_0 = q_{\max}(\mathcal{I})$, we have 
$$
f(x_0 - \epsilon) \leq f(x_0) + \frac{\epsilon}{(x_0 - \epsilon)^2} \leq f(x_0) + \frac{4\epsilon}{x_0^2}$$ 
when $0 \leq \epsilon \leq \frac{x_0}{2}$. Thus, FDR is close to the target level when $q_{\max}(\mathcal{I}) -\widehat{q_{\max}}(\mathcal{I})$ is small:
\begin{align*}
    \mathbb{E}\left[\mathrm{FDP}_{\widehat t}^{\widehat \pi}(\mathcal{I}) \mid \mathcal{G}_{-1}\right] \leq~&  \alpha/2 \left\{\left(\frac{1}{1 - q_{\max}(\mathcal{I})} - 1\right)^{-1} + 4\epsilon_n^q(\mathcal{I}) \left(\frac{1}{q_{\max}(\mathcal{I})}\right)^2 \right\} \left(\frac{1}{1 - q_{\max}(\mathcal{I})} - 1 \right){}\\
    =~& \alpha/2 \left\{1 + \epsilon_n^q(\mathcal{I}) \frac{4}{q_{\max}(\mathcal{I})(1 - q_{\max}(\mathcal{I}))} \right\},
\end{align*}
when $\epsilon_n^q(\mathcal{I}) \leq q_{\max}(\mathcal{I}) / 2$.
\end{proof}

\subsection{Proof of theorem~\ref{thm:crossfit-unknown}}
\label{apd:proof-crossfit-unknown}
\begin{proof}

By Lemma~\ref{lm:FDR_hetero}, the probability of a positive sign of the estimated treatment effect $q_i$ is
\begin{align*}
    q_{i} :=~&\mathbb{P}\left((A_i - 1/2) \cdot E_i > 0 \mid \mathcal{G}_{-1}\right){}\\
    =~&  \pi_i\mathbb{P}(E_i > 0 \mid \mathcal{G}_{-1}) + (1 - \pi_i)\mathbb{P}(E_i < 0 \mid \mathcal{G}_{-1}) \leq \max\{1 - \pi_{\min}, \pi_{\max}\}.
\end{align*}

\noindent Thus, $q_{\max}(\mathcal{I}) = \max\{1 - \pi_{\min}, \pi_{\max}\}$ and $\widehat q_{\max}(\mathcal{I}) = \max\{1 - \widehat{\pi_{\min}}(\mathcal{I}), \widehat{\pi_{\max}}(\mathcal{I})\}$, and $\epsilon_n^q(\mathcal{I}) = q_{\max}(\mathcal{I}) - \widehat q_{\max}(\mathcal{I})$, we have 
\begin{align*}
    \mathbb{E}\left[\mathrm{FDP}_{\widehat t}^{\widehat \pi}(\mathcal{I}) \right] \leq \alpha/2 \left\{1 + \mathbb{E}_{\mathcal{F}_{0}(\mathcal{I})}\left[\epsilon_n^q(\mathcal{I}) \cdot \frac{4}{q_{\max}(\mathcal{I}) (1 - q_{\max}(\mathcal{I}))}\right] \right\}.
\end{align*}
when $\epsilon_n^q(\mathcal{I}) \leq  \frac{1}{2}\max\{1 - \pi_{\min}, \pi_{\max}\}$.
\end{proof}

\subsection{Proof of theorem~\ref{thm:MaY-unknown}} \label{apd:MaY-unknown}
\begin{proof}
Denote the individual propensity score as $\mathbb{P}(A_i = 1 \mid X_i) = \pi_i$. For the nulls, the probability of a positive sign of the estimated treatment effect as $q_i$:
\begin{align*}
    q_{i}(\mathcal{I}) :=~&\mathbb{P}\left((A_i - 1/2) \cdot (Y_i - \widehat m(X_i)) > 0 \mid \mathcal{F}_{0}^{-Y}(\mathcal{I})\right){}\\
    =~& \pi_i\mathbb{P}(Y_i - \widehat m(x_i) > 0 \mid \mathcal{F}_{0}^{-Y}(\mathcal{I})) + (1 - \pi_i)\mathbb{P}(Y_i - \widehat m(x_i) < 0 \mid \mathcal{F}_{0}^{-Y}(\mathcal{I})){}\\
    \leq~& \min\left\{\max\{\pi_i,1-\pi_i\}, \max\{\Phi_{\max}\left[\epsilon_n^Y(\mathcal{I})\right], 1 - \Phi_{\min}\left[-\epsilon_n^Y(\mathcal{I})\right]\}\right\}
\end{align*}
where $\mathbb{P}(Y_i - \widehat m(x_i) > 0 \mid \mathcal{F}_{0}^{-Y}(\mathcal{I}))$ can be separated from $\mathbb{P}(A_i - 1/2 > 0 \mid \mathcal{F}_{0}^{-Y}(\mathcal{I}))$ because they are independent for zero-effect nulls. Thus, let an upper bound be
\begin{align*}
    q_{\max}(\mathcal{I}) = \min\{\max\{\pi_{\max},1-\pi_{\min}\},\max\{\Phi_{\max}\left[\epsilon_n^Y(\mathcal{I})\right], 1 - \Phi_{\min}\left[-\epsilon_n^Y(\mathcal{I})\right]\}\}.
\end{align*}
Let the estimator of $q_i$ be 
\begin{align*}
    \widehat q_i := \max\{\widehat \pi_i, 1 - \widehat \pi_i\}.
\end{align*}

To describe the resulting estimation error of $q_i$, we define a difference $d_{i}(\mathcal{I}) :=  \max\{\pi_i,1-\pi_i\} - \max\{\Phi_{\max}\left[\epsilon_n^Y(\mathcal{I})\right], 1 - \Phi_{\min}\left[-\epsilon_n^Y(\mathcal{I})\right]\}$, which takes large value if the propensity score deviates from~$1/2$ (smaller value if the outcome probability deviates from $1/2$). The true $q_i$ is upper bounded by estimated $\widehat q_{i}$ plus some estimation error that depends on $d_{i}(\mathcal{I})$: 
\begin{align*}
    q_i -\widehat{q_{i}}  \leq~& \epsilon_{i}^{\pi}(\mathcal{I}) \quad~& \text{ if } d_{i}(\mathcal{I}) \leq 0;{}\\
    q_i -\widehat{q_i}  \leq~&  \epsilon_{i}^{\pi}(\mathcal{I}) - d_{i}(\mathcal{I}) \quad~& \text{ if }  d_{i}(\mathcal{I}) > 0,
\end{align*}
where $\epsilon_{i}^{\pi}(\mathcal{I}) = \pi_{i} - \widehat \pi_{i}(\mathcal{I})$; it can be written in one line as 
\begin{align*}
   q_i -\widehat{q_i} \leq \epsilon_{i}^{\pi}(\mathcal{I}) - \max\left\{0, \max\{\pi_i,1-\pi_i\} - \max\{\Phi_{\max}\left[\epsilon_n^Y(\mathcal{I})\right], 1 - \Phi_{\min}\left[-\epsilon_n^Y(\mathcal{I})\right]\}\right\}.
\end{align*}
Thus, the estimation error for $q_{\max}(\mathcal{I})$ is upper bounded as 
\begin{align}
     &\max_{i \in \mathcal{I}} \{\epsilon_{i}^{\pi}(\mathcal{I}) - \max\left\{0, \max\{\pi_i,1-\pi_i\} - \max\{\Phi_{\max}\left[\epsilon_n^Y(\mathcal{I})\right], 1 - \Phi_{\min}\left[-\epsilon_n^Y(\mathcal{I})\right]\}\right\}\}{}\\ 
     \leq~& \epsilon_n^{\pi}(\mathcal{I}) - \max\left\{0, \max\{\pi_{\max},1-\pi_{\min}\} - \max\{\Phi_{\max}\left[\epsilon_n^Y(\mathcal{I})\right], 1 - \Phi_{\min}\left[-\epsilon_n^Y(\mathcal{I})\right]\}\right\}\} =: \epsilon^q_n(\mathcal{I}). 
\end{align}
By Lemma~\ref{lm:FDR_hetero}, we have 
\begin{align*}
     \mathbb{E}\left[\mathrm{FDP}_{\widehat t}^{\widehat \pi}(\mathcal{I}) \right] \leq \alpha/2 \left\{1 + \mathbb{E}_{\mathcal{F}_0(\mathcal{I})}\left[\epsilon_n^{q}(\mathcal{I}) \left(\frac{4}{ q_{\max}(\mathcal{I}) (1 - q_{\max}(\mathcal{I}))}\right)\right] \right\},
\end{align*}
when $\epsilon_n^{q}(\mathcal{I}) \leq q_{\max}(\mathcal{I}) / 2$.
\end{proof}

\begin{remark}
If we consider nulls as nonpositive effects, we have
\begin{align*}
    q_i(\mathcal{I}) :=~&\mathbb{P}\left((A_i - 1/2) \cdot (Y_i - \widehat m(X_i)) > 0 \mid \mathcal{F}_{0}^{-Y}(\mathcal{I})\right){}\\
    =~& \pi_i\mathbb{P}(Y_i^T - \widehat m(x_i) > 0 \mid \mathcal{F}_{0}^{-Y}(\mathcal{I})) + (1 - \pi_i)\mathbb{P}(Y_i^C - \widehat m(x_i) < 0 \mid \mathcal{F}_{0}^{-Y}(\mathcal{I})){}\\
     \leq~& \pi_i\mathbb{P}(Y_i^C - \widehat m(x_i) > 0 \mid \mathcal{F}_{0}^{-Y}(\mathcal{I})) + (1 - \pi_i)\mathbb{P}(Y_i^C - \widehat m(x_i) < 0 \mid \mathcal{F}_{0}^{-Y}(\mathcal{I})){}\\
    \leq~& \min\left\{\max\{\pi_i,1-\pi_i\}, \max\{\Phi_{\max}\left[\epsilon_n^Y(\mathcal{I})\right], 1 - \Phi_{\min}\left[-\epsilon_n^Y(\mathcal{I})\right]\}\right\},
\end{align*}
where $\Phi_{\max}(c) := \max_{i \in \mathcal{I}} \mathbb{P}(Y_i^C - \me(Y_i^C|X_i) \leq c \mid X_i)$, $\Phi_{\min}(c) := \min_{i \in \mathcal{I}} \mathbb{P}(Y_i^C - \me(Y_i^C|X_i) \leq c \mid X_i)$ and $\epsilon_n^Y(\mathcal{I}) = \max_{i \in \mathcal{I}} |\widehat m(X_i) - \me(Y_i^C|X_i)|$, and then, we can make the same claim as above. However, it is harder to have robust FDR control when the propensity scores are poorly estimated, because $\widehat m(x_i)$ is an estimator for $\me(Y_i \mid X_i)$, which can be very different from the expected \textit{control} outcome for a subject with negative effect. (We can design an algorithm where $\widehat m(x_i)$ is an estimation of the expected control outcome, but when the estimation is well, it would have zero power to detect positive effect if the subject is not treated.) 
\end{remark}

\subsection{Error control guarantee for the linear-BH procedure}
\label{apd:FDR_BH_linear}

\begin{theorem} \label{thm:linear-BH}
Suppose the outcomes follow a linear model: $Y_i = l^{\Delta}(X_i) A_i + l^{f}(X_i) + U_i$, where $l$ denotes a linear function, and $U_i$ is standard Gaussian noise. The linear-BH procedure controls FDR of the nonpositive-effect null in~\eqref{eq:hp_oneside} of the main paper asymptotically as the sample size $n$ goes to infinity.
\end{theorem}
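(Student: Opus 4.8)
The plan is to compare the linear-BH procedure with an \emph{oracle} version in which the estimated linear fits $\widehat l^C,\widehat l^T$ are replaced by their population limits, to establish FDR control for the oracle by the classical Benjamini--Hochberg theorem, and then to transfer the conclusion back using a uniform approximation of the $p$-values. Write $l^\Delta(x)=x^\top\beta^\Delta$ and $l^f(x)=x^\top\beta^f$. Under the stated model the treated-group regression $\widehat l^T$ consistently estimates $l^\Delta+l^f$ and the control-group regression $\widehat l^C$ consistently estimates $l^f$; this needs a standard design condition such as $\tfrac1n\sum_i X_iX_i^\top\to Q\succ 0$, which, since each $A_i$ is an independent fair coin, also makes both group sizes grow linearly with high probability, and the plug-in variance $\widehat{\mathrm{Var}}(\widehat\Delta_i^{\mathrm{BH}})$ converges to the noise variance $1$ because its prediction-variance term is $O(1/n)$. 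Now observe that for a treated subject $\widehat\Delta_i^{\mathrm{BH}}=l^\Delta(X_i)+l^f(X_i)+U_i-\widehat l^C(X_i)$ while for a control subject $\widehat\Delta_i^{\mathrm{BH}}=\widehat l^T(X_i)-l^f(X_i)-U_i+l^\Delta(X_i)$. Setting $\epsilon_i:=(2A_i-1)U_i$ and defining the oracle statistic $\widehat\Delta_i^{\ast}:=l^\Delta(X_i)+\epsilon_i$ with oracle $p$-value $P_i^{\ast}:=1-\Phi(\widehat\Delta_i^{\ast})$, the difference $\widehat\Delta_i^{\mathrm{BH}}-\widehat\Delta_i^{\ast}$ is a linear form in the vanishing estimation error of the relevant OLS fit. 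Since $1-\Phi$ is globally Lipschitz and $\widehat{\mathrm{Var}}(\widehat\Delta_i^{\mathrm{BH}})$ is uniformly consistent, a tail bound on $\max_i\|X_i\|$ (e.g.\ sub-Gaussian covariates, so that $\max_i\|X_i\|=O_p(\sqrt{\log n})$ against an $O_p(n^{-1/2})$ estimation error) yields $\max_{i\in[n]}|P_i-P_i^{\ast}|\to 0$ in probability.

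Next I would handle the oracle. Conditionally on $\{X_j,A_j\}_{j=1}^n$ the variables $\epsilon_i$ are i.i.d.\ $N(0,1)$, hence $P_1^{\ast},\dots,P_n^{\ast}$ are mutually independent. A subject has a true nonpositive effect exactly when $l^\Delta(X_i)\le 0$: since $Y_i^T$ and $Y_i^C$ are Gaussian with a common variance and share the same $U_i$, the stochastic-dominance null $(Y_i^T\mid X_i)\preceq(Y_i^C\mid X_i)$ of \eqref{eq:hp_oneside} is equivalent to $l^\Delta(X_i)\le 0$, and likewise for the fixed-potential-outcome and hybrid versions. For such $i$, $\mathbb P(P_i^{\ast}\le t\mid\{X_j,A_j\})=1-\Phi\big(\Phi^{-1}(1-t)-l^\Delta(X_i)\big)\le t$, so the null oracle $p$-values are stochastically larger than $\mathrm{Uniform}[0,1]$. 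The classical BH theorem for independent $p$-values with super-uniform nulls \citep{benjamini1995controlling} then gives, conditionally and hence unconditionally, that BH at level $\alpha$ applied to $\{P_i^{\ast}\}$ has FDR at most $\alpha\,|\{i:l^\Delta(X_i)\le 0\}|/n\le\alpha$.

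The final step is to transfer this to the actual procedure using $\max_i|P_i-P_i^{\ast}|\to 0$, and this is where I expect the real work. The BH rejection set is not a continuous functional of the $p$-values, so a uniformly vanishing perturbation can in principle change the number of rejections non-negligibly; some stability input is unavoidable. The cleanest route is to impose a mild regularity condition guaranteeing that the empirical CDF $\widehat G_n(t)=\tfrac1n\sum_i\one\{P_i^{\ast}\le t\}$ converges uniformly to a deterministic limit $G$ whose graph crosses the line $t\mapsto t/\alpha$ transversally at $t^{\ast}:=\sup\{t:G(t)\ge t/\alpha\}$; under transversality, sandwiching BH run on $\{P_i^{\ast}\pm\delta\}$ on the high-probability event $\{\max_i|P_i-P_i^{\ast}|\le\delta\}$ and letting $\delta\downarrow 0$ shows that the data-driven BH threshold converges to $t^{\ast}$ and the false discovery proportion converges to $\pi_0\,G^0(t^{\ast})/G(t^{\ast})\le\alpha$, whence bounded convergence gives $\limsup_n\mathrm{FDR}\le\alpha$. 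Alternatively one can bypass transversality by invoking a robust version of the BH theorem for $\delta$-super-uniform $p$-values and sending $\delta\to 0$. In either case the two nontrivial ingredients are (i) the design and moment conditions making the Step-1 approximation uniform over $i$, and (ii) the standard-but-delicate stability of the BH procedure under uniformly small perturbations of its input $p$-values; I would regard (ii) as the principal obstacle.
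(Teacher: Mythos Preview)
Your approach is sound but differs from the paper's in how much is made explicit. The paper's proof is considerably shorter: it treats the covariates as fixed, observes that under the linear model $\widehat\Delta_i^{\mathrm{BH}}$ is asymptotically Gaussian with mean $l^\Delta(X_i)$ (nonpositive under the null) and variance consistently estimated by $\widehat{\mathrm{Var}}(\widehat\Delta_i^{\mathrm{BH}})$, so each null $p$-value is asymptotically super-uniform; it then delegates the step from ``asymptotically valid $p$-values'' to ``asymptotic FDR control of BH'' entirely to a citation of \cite{fan2007many}. The paper does not confront the dependence among $p$-values induced by the shared OLS fits, nor the uniformity in $i$ of the Gaussian approximation---both are absorbed into the cited reference.

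Your route---construct an oracle with genuinely independent super-uniform null $p$-values, apply classical BH there, and transfer via $\max_i|P_i-P_i^{\ast}|\to 0$ plus a BH-stability argument---is more self-contained and isolates exactly the two technical issues the paper sweeps under the rug. What you flag as the ``principal obstacle'' (stability of the BH threshold under $o(1)$ perturbations, or equivalently a transversal-crossing condition on the limiting $p$-value CDF) is precisely what the paper's citation is doing the work of; your version has the advantage of naming the needed regularity rather than hiding it. One small algebraic slip: for a control subject your displayed expression for $\widehat\Delta_i^{\mathrm{BH}}$ carries a stray $+l^\Delta(X_i)$; the correct expression is $\widehat l^T(X_i)-l^f(X_i)-U_i$, which converges to $l^\Delta(X_i)-U_i$ because $\widehat l^T\to l^\Delta+l^f$. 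Your oracle $\widehat\Delta_i^{\ast}=l^\Delta(X_i)+(2A_i-1)U_i$ is defined correctly, so the slip does not affect the argument.
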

Note that the error control would not hold when the linear assumption is violated. For example, if the expected treatment effect $\mathbb{E}(Y_i^T - Y_i^C \mid X_i)$ is some nonlinear function of the covariates, the estimated treatment effect~$\widehat \Delta_i^\text{BH}$ would not be consistent; in turn, for the null subjects with zero effect, the $p$-values would not be valid (i.e., not stochastically equal or larger than uniform). Hence, the linear-BH procedure would not guarantee the desired FDR control, as we show in the numerical experiments in Section~\ref{sec:sim} of the main paper. 

\begin{proof}
For simplicity, we treat all the covariates as fixed values and denote them as the covariance matrix $\mathbb{X}_a = (X_i: A_i = a)^T$ for $a \in \{T, C\}$, where we temporarily use $A_i = T$ to denote the case of being treated $A_i = 1$. Under the linear assumption, the estimated outcome $\widehat l^a$ asymptotically follows a Gaussian distribution, whose expected value is $l^{\Delta}(X_i) \one\{a = T\} + l^{f}(X_i)$. Its variance can be estimated as
\begin{align*}
    \widehat{\text{Var}}(\widehat l^a(X_i)) = \widehat \sigma_a^2 (X_i^T(\mathbb{X}_a^T\mathbb{X}_a)^{-1}X_i^T),
\end{align*}
where the variance from noise is estimated as 
\begin{align*}
    \widehat \sigma^2_a = \sum_{A_i = a}(Y_i - \widehat l^a(X_i))^2 \Bigg/\left(\sum_i \one\{A_i = a\} - d - 1\right),
\end{align*}
and $d$ is the number of covariates. Note that the observed outcome also follows a Gaussian distribution $N(l^{\Delta}(X_i) \one\{a = T\} + l^{f}(X_i), \sigma^2_a)$. Note that in each estimated effect $\widehat{\Delta}_i^{\text{BH}}$, the observed outcome $Y_i^a$ is independent of the estimated potential outcome $Y_i^{\overline a}$, where $\overline a$ is the complement of $a$: $\overline a \cup a = \{T, C\}$. Thus, the estimated effect asymptotically follow a Gaussian distribution whose expected value is $l^{\Delta}(X_i)$ (nonpositive under the null) and the variance is $\text{Var}(\widehat{\Delta}_i^{\text{BH}}) = \text{Var}(\widetilde Y_i^T) + \text{Var} (\widetilde Y_i^C)$, where an estimation is $\widehat{\text{Var}}(\widetilde Y_i^a) = \widehat \sigma^2_a \one\{A_i = a\} + \widehat{\text{Var}}(\widehat l^a(X_i)) \one\{A_i = \overline a\}$. Therefore, the resulting $p$-value $P_i$ as defined in~\eqref{eq:p-linear-BH} of the main paper is asymptotically valid (uniform or stochastically larger) if subject $i$ is a null, and hence the BH procedure leads to asymptotic FDR control \citep{fan2007many}.
\end{proof}

\section{Proof of power analysis} \label{apd:power}




Our proof of the power analysis mainly uses the results in \citet{arias2017distribution}, who consider the setup with $n$ hypotheses, each associated with a test statistic $V_i$ for $i \in [n]$. Assume the test statistics are independent with the survival function $\mathbb{P}(V_i \geq x) = \Psi_i(x)$, which equals $\Psi(x - \mu_i)$ where $\mu_i = 0$ under the null and $\mu_i > 0$ otherwise. They focus on a class of distribution called asymptotically generalized Gaussian (AGG), whose survial function satisfies:
\begin{align} \label{eq:AGG_original}
    \lim_{x \to \infty} x^{-\gamma} \log \Psi(x) = -1/\gamma,
\end{align}
with a constant $\gamma > 0$. For example, a normal distribution is AGG with $\gamma = 2$. They discuss a class of multiple testing methods called \textit{threshold procedure}: the final rejection set $\mathcal{R}$ is in the form 
\begin{align} \label{eq:threshold_procedure}
    \mathcal{R} = \{i: V_i \geq \tau(V_1, \ldots, V_n)\},
\end{align}
for some threshold $\tau(V_1, \ldots, V_n)$, and separately study two types of thresholds: the BH procedure \citep{benjamini1995controlling} with threshold:
\begin{align}
    \tau_{\text{BH}} = V_{(\iota_\text{BH})}, \quad \iota_\text{BH} := \max\{i: V_{(i)} \geq \Psi^{-1}(i\alpha/n)\},
\end{align}
where $V_{(1)} \geq \ldots \geq V_{(n)}$ are ordered statistics; and the Barber-Cand\`{e}s (BC) procedure \citep{barber2015controlling} with a threshold on the absolute value of $V_i$:
\begin{align}
    \tau_{\text{BC}} = \inf\{\nu \in |\mathbf{V}|: \widehat{\text{FDP}}(\nu) \leq \alpha\},
\end{align}
where $|\mathbf{V}| := \{|V_i|: i \in [n]\}$ is the set of sample absolute values, and
\begin{align*}
    \widehat{\text{FDP}}(\nu) := \frac{ |\{i: V_i \leq -\nu\}| + 1}{\max\{|\{i: V_i \geq \nu\}|,1\}},
\end{align*}
and the final rejection set is those with positive $V_i$ and value larger than $\tau_{\text{BC}}$. The stopping rule for the BC procedure is similar to our proposed algorithms, as detailed next. 



Recall in Section~\ref{sec:power_analysis} of the main paper, we consider a simplified automated version of the \pei that exclude the subject with the smallest absolute value of the estimated treatment effect $|\widehat \Delta_i|$. Thus, the automated \pei is a BC procedure where the test statistic of interest is $V_i = \widehat \Delta_i = 4(A_i - 1/2)(Y_i - \widehat m_n(X_i))$. Following the above notations and let $\Phi$ be the CDF for standard Gaussian, we denote the survival function for the nulls as 
\begin{align*}
    \Psi_n^{\text{null}}(x) = \frac{1}{2}(1 - \Phi(x + \widehat m_n)) + \frac{1}{2}(1 - \Phi(x - \widehat m_n)),
\end{align*}
which is a mixture of two Gaussians, with $\widehat m_n = \tfrac{1}{n}\sum_{i=1}^n Y_i$, and $\widehat m_n \overset{a.s.}{\to} 0$ by the strong law of large numbers. For the non-nulls, the survival function is
\begin{align*}
    \Psi_n^{\text{non-null}}(x) = \frac{1}{2}(1 - \Phi(x + \widehat m_n - \mu)) + \frac{1}{2}(1 - \Phi(x - \widehat m_n)).
\end{align*}
Note that our setting is slightly different from the discussion in \citet{arias2017distribution} because the non-nulls differ from the nulls by a shift on one of the Gaussian component (rather than a shift in the overall survival function $\Psi_n^{\text{null}}$). Similar to the characterization by AGG in~\eqref{eq:AGG_original}, both survival functions $\Psi_n^{\text{null}}$ and $\Psi_n^{\text{non-null}}$ asymptotically satisfy a tail property that for any $x_n \to \infty$ as $n \to \infty$:
\begin{align} \label{eq:AGG}
    \lim_{n\to \infty} x_n^{-\gamma} \log \Psi_n(x_n) = -1/\gamma,
\end{align}
with probability one and $\gamma = 2$, which we later refer to as asymptotic AGG. Conclusions in our paper basically follows the proofs in \citet{arias2017distribution} with the test statistics $V_i$ specified as the estimated treatment effect $\widehat \Delta_i$.

\subsection{Proof of theorem~\ref{thm:power_noX}} \label{apd:power_noX}

We first present the proof for the power of the automated \peiZero, and the power of the linear-BH is proven similarly as shown later.

\begin{proof}

\noindent \textbf{Zero power when $r < \beta$.}
The argument of zero power indeed applies to any threshold procedure as defined in~\eqref{eq:threshold_procedure}: $\mathcal{R} = \{i: \widehat \Delta_i \geq d\}$, for some $d \in \mathbb{R}$. Following the proof of Theorem~1 in \citet{arias2017distribution}, we argue that the FDR control cannot be satisfied for any $\alpha \in (0,1)$ unless the threshold $d$ is large enough such that $d > \mu + \delta_n$ with $\delta_n = \log\log n$; but in this case, most non-nulls cannot be included in the rejection set, and thus the power goes to zero. 

First, we claim that when $d \leq \mu + \delta_n$, the false discovery proportion (FDP) goes to one in probability. By the proof of Theorem~1 in \citet{arias2017distribution}, we have that FDP goes to one in probability if $\frac{(n - n_1)\Psi_n^{\text{null}}(\mu + \delta_n)}{n_1} \to \infty$ with probability one, where $n_1$ is the number of non-nulls. Their proof also verifies that $\frac{(n - n_1)\Psi_n^{\text{null}}(\mu + \delta_n)}{n_1} \to \infty$ because $\Psi_n^{\text{null}}$ satisfies property~\eqref{eq:AGG} with probability one.

Next, we show that when $d > \mu + \delta_n$, the power goes to zero. Notice that power can be equivalently defined as $\mathbb{E}(1 - \text{FNR})$, where FNR (false negative rate) is defined as the proportion of non-nulls not identified. Again by the proof of Theorem~1 in \citet{arias2017distribution}, we have that the FNR converge to one in probability if $\Psi_n^{\text{non-null}}(\mu + \delta_n)$ goes to zero with probability one, which is true because $\delta_n \to \infty$.

Combining the above two arguments, we conclude that for any threshold procedure whose rejection set is in the form of~\eqref{eq:threshold_procedure}, the power goes to zero for any FDR control $\alpha \in (0,1)$ when $r < \beta$.

Note that the above proof assumes that the test statistics $V_i = \widehat \Delta_i$ are mutually independent. For simplicity, we design the \peiZero where $\widehat m_n$ for the \pei implemented on $\mathcal{I}$ is computed using data in $\mathcal{II}$, to ensure the above mutual independence. Thus, the above proof applies to the \pei implemented on each half, $\mathcal{I}$ and  $\mathcal{II}$.
The overall power behaves the same asymptotically, since $\mathcal{I}$ and  $\mathcal{II}$ result from a random split of all subjects $[n]$. For all cases hereafter, we prove the power claim for the \pei implemented on $\mathcal{I}$ conditional on data in~$\mathcal{II}$, and the same claim holds for the overall power as reasoned above. \\

\noindent \textbf{Half power when $r > \beta$.} We first prove the limit inferior of the power is at least $1/2$, and then the limit superior is at most $1/2$, mainly using the proof of Theorem~3 in \citet{arias2017distribution}.

They consider a sequence of thresholds $d_{n}^* = (\gamma r^* \log n)^{1/\gamma}$ for some $r^* \in (\beta, r\wedge 1)$. We first claim that the FDR estimator at $d_n^*$ is less than any $\alpha \in (0,1)$ for large~$n$, or mathematically $\widehat{\text{FDR}}(d_n^*) \leq \alpha$. It can be verified by the proof of Theorem~3 in~\citet{arias2017distribution} where the survival function of $\widehat \Delta_i$ is $G(d_n^*) = (1 - \epsilon) \Psi_n^{\text{null}}(d_n^*) + \epsilon \Psi_n^{\text{non-null}}(d_n^*)$ with $\epsilon = n^{-\beta}$, and the fact that $\Psi_n^{\text{non-null}}(d_n^*) \to 1/2$ and $\Psi_n^{\text{null}}(d_n^*) \to n^{-r^*}$ (by property~\eqref{eq:AGG}) with probability one. It follows that the true stopping threshold $\tau_n$ satisfies $\tau_n \leq d_{n}^*$. Also, by Lemma~1 in \citet{arias2017distribution}, we have that the proportion of correctly identified non-nulls at threshold $d_n^*$ is $\frac{1}{n_1} \sum_{i \notin \mathcal{H}_0} \one\{\widehat \Delta_i \geq d\} = \Psi_n^{\text{non-null}}(d_n^*) + o_{\mathbb{P}}(1)$, where $\Psi_n^{\text{non-null}}(d)$ decreases in $d$ and converges to $1/2$ when $d = d_n^*$. Recall that the true stopping threshold is no larger than $d_{n}^*$, so the limit inferior of the power is at least $1/2$.

The power converges to $1/2$ once we show that the limit superior of the power is at most $1/2$. Consider a positive constant $d^0 \in (0, \infty)$, and we claim that the actual stopping threshold $\tau_n \geq d^0$ for large $n$ because the FDR estimator goes to one, following similar arguments in the proof of Theorem~3 in~\citet{arias2017distribution}. Specifically, 
\begin{align*}
    \widehat{\text{FDR}}(d^0) \equiv \frac{|\{i \in [n]: \widehat \Delta_i \leq -d^0\}| + 1}{\max\{|\{i \in [n]: \widehat \Delta_i \geq d^0\}|,1\}} =
    \frac{1 + n(1 - \widehat G_n(-d^0))}{\max\{n\widehat G_n(d^0),1\}}, 
\end{align*}
where $\widehat G_n(d^0) = \tfrac{1}{n}\sum_{i \in [n]} \one(\widehat\Delta_i \geq d^0)$ denotes the empirical survival function. Use the fact that $G_n(d^0) = (1 - \epsilon) \Psi_n^{\text{null}}(d^0) + \epsilon \Psi_n^{\text{non-null}}(d^0) \to 1 - \Phi(d^0)$ and $G_n(-d^0) \to \Phi(d^0)$ almost surely, we observe that $\widehat{\text{FDR}}(d^0) \to 1$ with probability one. Also, the proportion of correctly identified non-nulls at threshold $d^0$ is $\Psi_n^{\text{non-null}}(d^0) + o_{\mathbb{P}}(1)$ (recall in the previous paragraph), where $\Psi_n^{\text{non-null}}(d^0) \to 1/2 + 1/2(1 - \Phi(d^0))$. Thus, the power for large $n$ is smaller than $1/2 + 1/2(1 - \Phi(d^0))$ for all $d^* \in (0, \infty)$; in other words, the limit superior of the power is smaller than $\inf_{d \in (0, \infty)} 1/2 + 1/2(1 - \Phi(d^*)) = 1/2$.

With the limit inferior and superior of the power bounded by $1/2$, we conclude that the power converges to $1/2$. In fact, the above proof implies that the power of identifying non-null subjects that are treated is one (notice that $\Psi_n^{\text{non-null, treated}}(d_n^*) = 1 - \Phi(d_n^* + \widehat m_n - \mu) \to 1$ with probability one, so the limit inferior of the power for treated non-nulls is at least 1).
\end{proof}

\textbf{Proof for the linear-BH procedure} The power of the linear-BH procedure when there is no covariates can be proved following similar steps as above, and using intermediate results of Theorem~2 in \citet{arias2017distribution}. In their notation, the linear-BH procedure uses $V_i = \widehat \Delta_i^\text{BH}$ as the test statistics, and we separately discuss power among the treated group and the control group in ensure the independence among $V_i$. The survival functions for the nulls and non-nulls in the treated group are
\begin{align*}
    \Psi_n^{\text{null}}(x) = 1 - \Phi\left(\frac{x}{\sqrt{1 + 1/n^C}}\right) \text{ and }  \Psi_n^{\text{non-null}}(x) = 1 - \Phi\left(\frac{x - \mu}{\sqrt{1 + 1/n^C}}\right),
\end{align*}
where $n^C$ is the number of untreated subjects. For the control group, the survival functions are
\begin{align*}
    \Psi_n^{\text{null}}(x) = 1 - \Phi\left(\frac{x + \widehat Y^T}{\sqrt{1 + 1/n^T}}\right) \text{ and }
    \Psi_n^{\text{non-null}}(x) = 1 - \Phi\left(\frac{x + \widehat Y^T}{\sqrt{1 + 1/n^T}}\right),
\end{align*}
where $\widehat Y^T = \sum_{A_i = 1} Y_i \overset{a.s.}{\to} 0$, and $n^T$ is the number of treated subjects. 
Since the above distributions converge to a Gaussian, these
survival functions satisfy the AGG property asymptotically as defined in~\eqref{eq:AGG}.  

\begin{proof}
First, we claim that the power goes to zero when $r < \beta$, following the proof in Section~\ref{apd:power_noX} for any threshold procedure (separately for the treated group conditional on control group). Then, we prove that power converges to $1/2$ when $r > \beta$: the power among untreated subjects is asymptotically zero because the survival functions for the nulls and non-nulls are the same; the power among treated subjects is asymptotically one following the proof of Theorem~2 in \citet{arias2017distribution} as detailed next.

Again consider the sequence of thresholds $d_{n}^* = (\gamma r^* \log n)^{1/\gamma}$ for some $r^* \in (\beta, r\wedge 1)$. We first claim that the FDR estimator at $d_n^*$ is less than any $\alpha \in (0,1)$ for large~$n$, or mathematically $\widehat{\text{FDR}}(d_n^*) \leq \alpha$. It can be verified by the proof of Theorem~2 in~\citet{arias2017distribution} where $G(d_n^*) = (1 - \epsilon) \Psi_n^{\text{null}}(d_n^*) + \epsilon \Psi_n^{\text{non-null}}(d_n^*)$ with $\epsilon = n^{-\beta}$, and the fact that $\Psi_n^{\text{non-null}}(d_n^*) \to 1$ for the treated group and $\Psi_n^{\text{null}}(d_n^*) \to n^{-r^*}$ (by property~\eqref{eq:AGG}) with probability one. It follows that the true stopping threshold $\tau_n$ satisfies $\tau_n \leq d_{n}^*$. Also, by Lemma~1 in \citet{arias2017distribution}, we have that the proportion of correctly identified non-nulls at threshold $d_n^*$ is $\frac{1}{n_1} \sum_{i \notin \mathcal{H}_0} \one\{\widehat \Delta_i^\text{BH} \geq d\} = \Psi_n^{\text{non-null}}(d_n^*) + o_{\mathbb{P}}(1)$, where $\Psi_n^{\text{non-null}}(d)$ for the treated group decreases in~$d$ and converges to $1$ when $d = d_n^*$. Recall that the true stopping threshold is no larger than $d_{n}^*$, so the limit inferior of the power among the treated subjects is at least $1$. Therefore, the overall power converges to $1/2$.

\end{proof}

\subsection{Proof of theorem~\ref{thm:power_oracle_X}} \label{apd:power_oracle_X}

We first consider the power when all subjects are non-nulls ($\beta = 0$). 
\begin{lemma} \label{lm:power_all_nonnull}
Given any fixed FDR control level $\alpha \in (0,1)$ and let the number of subjects $n$ goes to infinity. When all subjects are non-nulls, the stopping time $\tau = 0$ with probability tending to one if $\mu > \Phi^{-1}(\frac{1}{1 + \alpha})$, and in this case the power converges to $\Phi(\mu)$.
\end{lemma}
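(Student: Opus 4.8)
The plan is to turn the stopping behaviour of the \peiZero into a law of large numbers for the signs of the estimated effects, which in this all--non-null case are asymptotically i.i.d. Fix the split $\mathcal I,\mathcal{II}$ and, exactly as in the proof of Theorem~\ref{thm:power_noX}, take the outcome model $\widehat m$ used on the run over $\mathcal I$ to be the within-stratum sample average of the outcomes of the subjects in $\mathcal{II}$; since $\beta=0$ every covariate value equals $1$, this is just $\widehat m\equiv|\mathcal{II}|^{-1}\sum_{j\in\mathcal{II}}Y_j\overset{a.s.}{\to}\me[Y_j\mid X_j=1]=\mu/2$ by the strong law. The analogous statement holds with $\mathcal I$ and $\mathcal{II}$ interchanged, and the final rejection set is the disjoint union of the two runs, so it suffices to analyse one run (at its FDR budget, written here as $\alpha$ to match the statement) and then add; an identical argument for the companion half gives the overall conclusion.

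The first step is to describe the first-round counts $|\mathcal R_0^+|$ and $|\mathcal R_0^-|$. Conditionally on the data in $\mathcal{II}$, the indicators $b_i:=\one\{\widehat\Delta_i>0\}$, $i\in\mathcal I$, are independent, because $\widehat m$ is $\mathcal{II}$-measurable and, given $\mathcal{II}$, the pairs $(A_i,Y_i)_{i\in\mathcal I}$ are independent across $i$; and, by the computation recorded after Theorem~\ref{thm:power_oracle_X} (the residuals $E_i$ are asymptotically shifted Gaussians with shift determined by $\widehat m\to\mu/2$, and $\operatorname{sign}(\widehat\Delta_i)=\operatorname{sign}((A_i-\tfrac12)E_i)$), we have $\mathbb P(\widehat\Delta_i>0\mid\mathcal{II})=p_n$ with $p_n\to\Phi(\mu)$ almost surely. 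A strong law for this triangular array of conditionally i.i.d.\ Bernoulli$(p_n)$ variables — applied after conditioning on $\mathcal{II}$ and combined with $p_n\to\Phi(\mu)$ a.s. — then gives
\[
\frac{|\mathcal R_0^+|}{|\mathcal R_0|}\overset{p}{\to}\Phi(\mu),\qquad
\frac{|\mathcal R_0^-|}{|\mathcal R_0|}\overset{p}{\to}1-\Phi(\mu).
\]

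The second step feeds this into the stopping rule. Because $\beta=0$ there are no true nulls, so the only thing the algorithm can do is stop at $t=0$ or shrink $\mathcal R_t$; and
\[
\widehat{\mathrm{FDR}}(\mathcal R_0)=\frac{|\mathcal R_0^-|+1}{\max\{|\mathcal R_0^+|,1\}}\overset{p}{\to}\frac{1-\Phi(\mu)}{\Phi(\mu)}=\frac1{\Phi(\mu)}-1 .
\]
The hypothesis $\mu>\Phi^{-1}\!\big(\tfrac1{1+\alpha}\big)$ is precisely $\Phi(\mu)>\tfrac1{1+\alpha}$, i.e.\ $\tfrac1{\Phi(\mu)}-1<\alpha$, so the limit above is strictly below $\alpha$; hence $\mathbb P\big(\widehat{\mathrm{FDR}}(\mathcal R_0)\le\alpha\big)\to1$, that is $\mathbb P(\tau=0)\to1$. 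On the event $\{\tau=0\}$ the reported set is $\mathcal R_0^+$; since every subject is non-null, $\pos=[n]$, so the power equals $\me\big[|\mathcal R_\tau^+|/n\big]$. Splitting this expectation over $\{\tau=0\}$ and its complement, using $|\mathcal R_0^+|/n\overset{p}{\to}\Phi(\mu)$ together with the uniform bound $|\mathcal R_\tau^+|/n\le1$, the first piece tends to $\Phi(\mu)$ and the second is at most $\mathbb P(\tau\ne0)\to0$, so the power converges to $\Phi(\mu)$.

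The one point requiring genuine care is the strong-law step: the Bernoulli parameter $p_n$ governing the signs is itself random (a functional of the companion half through $\widehat m$) and only converges in the limit, so one cannot quote the plain SLLN directly. I would dispatch this by conditioning on $\mathcal{II}$, applying the SLLN to the conditionally i.i.d.\ array with its fixed-for-each-$n$ parameter $p_n(\mathcal{II})$, then combining with $p_n(\mathcal{II})\to\Phi(\mu)$ a.s.\ and removing the conditioning; everything else is elementary bookkeeping. (Had one instead let $\widehat m$ use the whole sample, the $b_i$ would be only asymptotically independent, but the per-coordinate dependence is $O(1/n)$ and changes none of the limits.)
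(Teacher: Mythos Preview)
Your argument is essentially the same as the paper's: show by a law of large numbers that $|\mathcal R_0^+|/|\mathcal R_0|\to\Phi(\mu)$, deduce that $\widehat{\mathrm{FDR}}(\mathcal R_0)\to(1-\Phi(\mu))/\Phi(\mu)<\alpha$ under the stated hypothesis so that $\mathbb P(\tau=0)\to1$, and then read off the power as $\Phi(\mu)$. You are more explicit than the paper about the cross-fitting conditioning and the triangular-array nature of the SLLN (the paper simply asserts the almost-sure convergence), but the route and the ingredients are identical.
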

\noindent E.g., when $\alpha = 0.2$, the asymptotic power of the automated \pei is larger than~$0.8$ if $\mu \geq 1$.
\begin{proof}
The stopping time $\tau = 0$ if and only if the FDR control is satisfied when all the subjects are included: $\widehat{\text{FDR}}_n(\mathcal{R}_0) = \frac{|\mathbb{R}_0^-| + 1}{\max\{|\mathbb{R}_0^+|,1\}} \leq \alpha$, or equivalently, $\frac{|\mathbb{R}_0^+|}{n} \geq \frac{1 + \tfrac{1}{n}}{1 + \alpha}.$
Notice that the proportion of positive $\widehat \Delta_i$ converges to a constant: $\frac{|\mathbb{R}_0^+|}{n} \overset{a.s.}{\to} \Phi(\mu)$, because $\widehat \Delta_i$ of each non-null follows a Gaussian distribution with mean $\mu$ and variance less than $2$.  Thus, if $\Phi(\mu) > \frac{1}{1 + \alpha}$, for any $\epsilon \in (0,1)$, there exists $N$ such that for all ${n \geq N}$, we have that (a)~$\Big|\frac{|\mathbb{R}_0^+|}{n} - \Phi(\mu)\Big| < \epsilon$ with probability at least $1 - \epsilon$; and (b)~ $\widehat{\text{FDR}}_n(\mathcal{R}_0) = \frac{|\mathbb{R}_0^-| + 1}{\max\{|\mathbb{R}_0^+|,1\}} \leq \alpha$ (hence $\tau = 0$) with probability at least $1 - \epsilon$. (Notice that the threshold $N$ can be chosen as not depending on $\mu$, which is useful in the next proof.) In such a case, the power is no less than $(1 - \epsilon)(\Phi(\mu) - \epsilon)$ when ${n\geq N}$; and the power is no larger than $\Phi(\mu) - \epsilon$; so the power converges to $\Phi(\mu)$. The proof completes once notice that the condition $\Phi(\mu) > \frac{1}{1 + \alpha}$ is equivalent to $\mu > \Phi^{-1}(\frac{1}{1 + \alpha})$. 
\end{proof}

\begin{proof}[Proof of Theorem~\ref{thm:power_oracle_X}.]
\noindent \textbf{Power of the \peiZero.}
Recall that the \pei implemented on $\mathcal{I}$ exclude subjects based on the averaged estimated effect on $\mathcal{II}$: $\text{Pred}(x) = \overline{\widehat \Delta_i(X_i = x)}$, which converges to $\mu$ almost surely when $x = 1$ (the non-nulls), and $0$ almost surely when $x = 0$ (the nulls). Thus, no non-nulls in $\mathcal{I}$ would be excluded before excluding all the nulls in $\mathcal{I}$ (with probability going to one) for any fixed $\mu > 0$. Combined with Lemma~\ref{lm:power_all_nonnull}, we have that if $\mu > \Phi^{-1}(\frac{1}{1 + \alpha})$, for any $\epsilon \in (0,1)$, there exists $N(\epsilon, \alpha)$ such that for all $n \geq N$, the power of the \pei is higher than $\Phi(\mu) - \epsilon$. Also, the limit of power increases to one for any $r > 0$ (where the signal $\mu$ increases): there exists $N'(\epsilon)$ such that for all $n \geq N'(\epsilon)$, $\Phi(\mu) \geq 1 - \epsilon$. Therefore, for any $\epsilon \in (0, \tfrac{1}{1+\alpha})$, we have that for all $n \geq \max\{N'(\epsilon), N(\epsilon, \alpha)\}$, the power of \pei implemented on $\mathcal{I}$ is no less than $1 - 2\epsilon$; thus completes the proof.\\ 

\noindent \textbf{Power of the linear-BH procedure.}
As before, we separately argue that the power for the treated group and the control group converges to zero when $r < \beta$, and converges to one when $r > \beta$. For a subject in the treated group with $X_i = x$ where $x \in \{0, 1\}$, the estimated effect is a Gaussian $\widehat \Delta_i^\text{BH} \sim N(0, 1 + \frac{1}{\sum_i \one(X_i = x, A_i = 0)})$ for the nulls and $\widehat \Delta_i^\text{BH} \sim N(\mu, 1 + \frac{1}{\sum_i \one(X_i = x, A_i = 0)})$ for the non-nulls. For a subject in the control group with $X_i = x$ where $x \in \{0, 1\}$, the estimated effect is a Gaussian $\widehat \Delta_i^\text{BH} \sim N(0, 1 + \frac{1}{\sum_i \one(X_i = x, A_i = 1)})$ for the nulls and $\widehat \Delta_i^\text{BH} \sim N(\mu, 1 + \frac{1}{\sum_i \one(X_i = x, A_i = 1)})$ for the non-nulls.

The power of the linear-BH procedure directly results from Theorem~2 in~\citet{arias2017distribution} because in both the treated and control group, (a)~the linear-BH procedure is the BH procedure where the random variable of interest is $V_i = \widehat \Delta_i^\text{BH}$; and (b)~$\widehat \Delta_i^\text{BH}$ of non-nulls and nulls differ by a shift $\mu$; and (c)~the survival function of $\widehat \Delta_i^\text{BH}$ is asymptotically AGG (recall definition in~\eqref{eq:AGG}) since it converges to a Gaussian distribution).
\end{proof}

\section{Experiments when data does not follow working model~\eqref{eq:work_model}} 
\label{apd:working_model}

In previous discussions, we follow the working model~\eqref{eq:work_model}, where the treated outcome is the sum of treatment effect $\Delta(X)$ and the control outcome, in both the modeling of \peiZero algorithm and the simulated data for evaluating performance. Here, to understand the performance of \peiZero when the data does not follow the working model~\eqref{eq:work_model}, we change the simulated data while keeping the same modeling in \peiZero. 

We specify the potential outcomes as 
\begin{align} \label{eq:work_model_interaction}
    Y_i^C = f(X_i) + U_i \text{ and }
    Y_i^T = L \cdot D(X_i) + (5 - L) D(X_i) f(X_i) + f(X_i) + U_i,
\end{align}
where 
\begin{align} \label{eq:bias-sparse}
    D(X_i) =  5X_i^3(3) \one\{X_i(3) > 1\} - X_i(1)/2,
\end{align} 
and $L \in (0, 5)$ encodes the degree of agreement with the additive model~\eqref{eq:work_model} --- $L = 5$ corresponds to the same simulation in Section~3.2 when $S_\Delta = 5$, and the data follows the additive model. Compared with the additive model~\eqref{eq:work_model} in the main paper, the treatment effect becomes a function of the potential control outcome $L \cdot D(X_i) + (5 - L) D(X_i) f(X_i)$. 

As expected, \peiZero has valid FDR control, and the power becomes higher when the simulated data is more aligned with the working model~\eqref{eq:work_model}. Also, the power of \peiZero is consistently higher than the linear-BH procedure when varying $L$, and the power does not decay much when working model~\eqref{eq:work_model} does not reflect the groundtruth data, i.e., $L = 0$ (see Figure~\ref{fig:work_model_interaction}).

\begin{figure}[H]
\centering
\hspace{1cm}
    \begin{subfigure}[t]{0.3\textwidth}
        \centering
        \includegraphics[width=1\linewidth]{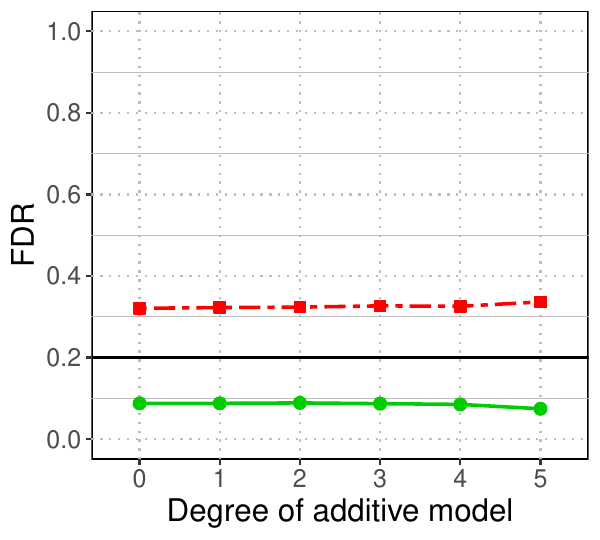}
    \end{subfigure}
    \hfill
    \begin{subfigure}[t]{0.3\textwidth}
    \centering
        \includegraphics[width=1\linewidth]{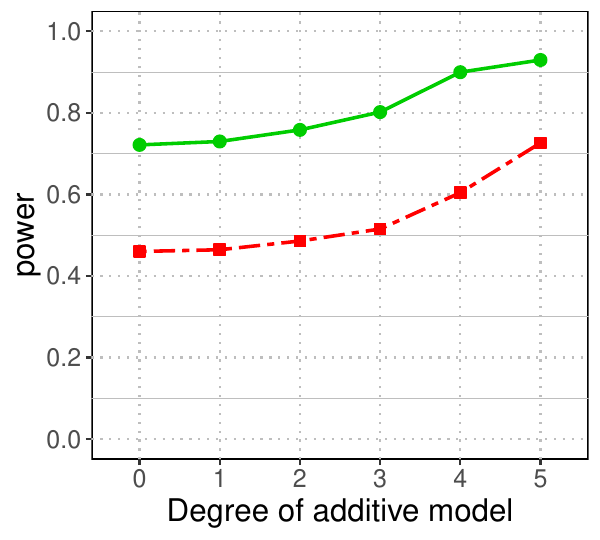}
    \end{subfigure}
\hspace{1cm}
\vfill
    \begin{subfigure}[t]{1\textwidth}
        \centering
        \includegraphics[width=0.7\linewidth]{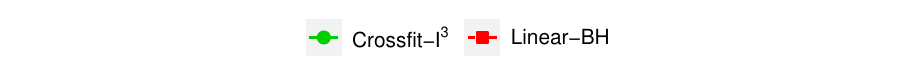}
    \end{subfigure}
    \caption{FDR (left) and power (right) of the \peiZero compared with the linear-BH procedure, with the potential outcomes specified as model~\eqref{eq:work_model_interaction} and the scale $L$ varying in $\{0,1,2,3,4,5\}$. The \peiZero controls FDR and can achieve high power even when the data does not follow the additive model~(7). The power is higher as data agrees more with the additive model, because \peiZero can model the treatment effect more accurately.}
    \label{fig:work_model_interaction}
\end{figure}

\noindent\textbf{Acknowledgements:}
We thank Edward Kennedy and Bikram Karmakar for their comments on an early draft of the paper. AR acknowledges support from NSF CAREER 1916320.

\bigskip
\noindent\textbf{Data Availability Statement:}
The datasets generated during and/or analysed during the current study are available in \url{https://github.com/duanby/I-cube}

\bibliographystyle{unsrtnat}
\bibliography{A_ref}

\end{document}